\documentclass[a4paper,11pt,reqno]{amsart}
\usepackage[T1]{fontenc}
\usepackage[utf8]{inputenc}
\usepackage{lmodern}
\usepackage{microtype}
\usepackage{amsmath,amssymb,amsthm,mathtools}
\usepackage{mathrsfs}
\usepackage{tikz-cd}
\usepackage{geometry}
\usepackage{enumitem}
\usepackage[dvipsnames]{xcolor}
\usepackage{hyperref}
\hypersetup{
  colorlinks=true,
  citecolor=ForestGreen,
  urlcolor=cyan,
  linkcolor=blue,
  linktoc=page
}
\numberwithin{equation}{section}
\newtheorem{theorem}{Theorem}[section]
\newtheorem{proposition}[theorem]{Proposition}
\newtheorem{corollary}[theorem]{Corollary}
\newtheorem{lemma}[theorem]{Lemma}
\theoremstyle{remark}
\newtheorem{remark}[theorem]{Remark}
\theoremstyle{definition}
\newtheorem{definition}[theorem]{Definition}
\newtheorem{example}[theorem]{Example}
\newcommand{\Q}{\mathbb{Q}}
\newcommand{\Z}{\mathbb{Z}}
\newcommand{\C}{\mathbb{C}}
\newcommand{\F}{\mathbb{F}}
\newcommand{\Tr}{\operatorname{Tr}}
\newcommand{\Nm}{\operatorname{N}}
\newcommand{\HF}{\widehat{\mathrm{HFK}}}

\newcommand{\dd}{\mathrm{d}}

\title[Alexander torsion and Floer comparisons]{Alexander Torsion and Floer Comparisons in the BV--BFV Formalism}
\author{Nima Moshayedi}
\address{Institut f\"ur Mathematik, Universit\"at Z\"urich,
Winterthurerstrasse 190, CH-8057 Z\"urich, Switzerland}
\email{nima.moshayedi@math.uzh.ch}
\date{14 October 2026}
\subjclass[2020]{57K14, 57K18, 57R58, 81T13, 11R18}
\keywords{BV--BFV formalism, cellular Abelian BF theory, bordered Floer torus algebra, A-infinity modules, categorical obstruction, derived base change, Alexander polynomial, Reidemeister torsion, gluing}
\begin{document}
\begin{abstract}
We compare relative Abelian BF torsion with the Alexander-graded Euler
characteristic of knot Floer homology.  A meridional annulus boundary
condition identifies the normalized relative torsion with the Alexander
polynomial.  One exact prime-cyclotomic torsion value recovers all its
coefficients under a sharp genus bound, compatibly with connected-sum
gluing and branched-cover formulas.  We compute interface contractions
and obstructions to scalar comparisons, including the pairing defect of
two-sector reduction for the degree-zero torus algebra.  A six-generator
dg resolution yields an equivalence of perfect-module categories
preserving derived pairings and admits a classical one-dimensional
BV--BFV realization.  In the fixed nonresonant polynomial quantum BFV
sector, over characteristic zero with $\hbar$ inverted, both states and
Weyl observables reduce to a vacuum field; a Hochschild invariant
obstructs a Morita equivalence with the degree-zero torus algebra.
These results do not establish a full geometric BV--BFV/Floer equivalence.
\end{abstract}
\maketitle
\tableofcontents

\section{Introduction}
Let $K\subset S^3$ be an oriented knot, $X_K=S^3\setminus\operatorname{int}N(K)$ its exterior, and
\begin{equation}\label{eq:alex}
 \Delta_K(t)\in\Z[t,t^{-1}],\qquad \Delta_K(t^{-1})=\Delta_K(t),\quad \Delta_K(1)=1,
\end{equation}
the symmetrically normalized Alexander polynomial. Three different settings produce the polynomial. Fox calculus and Reidemeister torsion recover $\Delta_K/(t-1)$ from the rank-one local systems on $X_K$; Abelian BF theory is a field-theoretic realization of corresponding torsion data; and the work of Ozsv\'ath and Szab\'o identifies $\Delta_K$ with the Alexander-graded Euler characteristic of knot Floer homology \cite{MilnorTorsion,CattaneoBF,OzsvathSzaboKnots}.

Friedl, Juh\'asz and Rasmussen identify sutured Floer Euler
characteristics with torsion \cite{FJR}; Hom, Lidman and Watson
categorify Donaldson's Alexander-module TQFT using bordered Floer theory
\cite{HLW}.  Cellular BF theory and its gluing formalism are developed
by Cattaneo, Mnev and Reshetikhin \cite{CMR}.  These results provide the
starting point for the comparisons below.

We first ask which Floer Euler characteristics can be recovered from a
single twisted torsion value.  Evaluation at a primitive prime-order
root of unity retains all Galois conjugates when the value is known
exactly as an algebraic number.  The genus bound and the normalization
$\Delta_K(1)=1$ then give a sharp coefficient-reconstruction formula.
The corresponding local system is acyclic on the exterior and its
boundary.

The absolute torsion is $\Delta_K(t)/(t-1)$, whereas the Floer Euler
characteristic is $\Delta_K(t)$.  The factor $t-1$ has a geometric
interpretation: it removes the torsion contribution of a meridional
annulus.  Relative cellular BF theory with this boundary condition
therefore produces the Alexander polynomial directly, with the usual
ambiguity from cellular lifts.  The comparison concerns based complexes
and determinant lines.

\begin{theorem}[Meridional relative BF torsion and Floer decategorification]
\label{thm:relative-intro}
Let $(X_K,\gamma_\mu)$ denote the sutured complement of a knot, with two
oppositely oriented meridional sutures, and let $R_-\subset\partial X_K$
be its negative annulus.  For $z\in\C^\times\setminus\{1\}$ satisfying
$\Delta_K(z)\ne0$, let $L_z$ denote meridional holonomy $z$.
Then the twisted cochain complexes of $X_K$, $R_-$, the whole boundary,
and the relative pair $(X_K,R_-)$ are acyclic.  With compatible cellular
lifts, the following equality holds modulo $\pm z^{\mathbb Z}$:
\begin{equation}\label{eq:relative-intro}
 \tau(X_K,R_-;L_z)
 =\frac{\tau(X_K;L_z)}{\tau(R_-;L_z)}
 \doteq\Delta_K(z).
\end{equation}
It is the determinant-line invariant of a finite-dimensional, linear
Abelian BF system with complementary relative boundary conditions.
After fixing the Alexander symmetric normalization, its Laurent polynomial
agrees with both the sutured Floer Euler-characteristic torsion of
$(X_K,\gamma_\mu)$ and
$\sum_a\chi(\widehat{HFK}(K,a))t^a$.
\end{theorem}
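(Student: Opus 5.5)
The proof has three steps: acyclicity of the four complexes, multiplicativity of torsion for the pair, and the computation of the two absolute torsions. I would then identify the result with the BF determinant line and with the Floer Euler characteristics.

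\emph{Acyclicity.} Since $H_1(X_K)\cong\Z$ is generated by the meridian, $L_z$ is pulled back along the abelianization. Over $\C[t^{\pm1}]$, the infinite cyclic cover has $H_0\cong\C[t^{\pm1}]/(t-1)$, $H_1$ equal to the Alexander module with order $\Delta_K(t)$, and $H_2=0$. Specializing at $t=z$ by the universal coefficient spectral sequence gives $H^*(X_K;L_z)=0$ exactly when $z\neq1$ and $\Delta_K(z)\neq0$.

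For the annulus $R_-\simeq S^1$, with core a meridian, the twisted complex is $\C\xrightarrow{z-1}\C$, which is acyclic for $z\ne1$. For the boundary torus $\partial X_K$, the meridian acts by $z\neq1$, so the Künneth formula gives vanishing cohomology. The relative complex is the cone of the restriction $C^*(X_K)\to C^*(R_-)$, so it is acyclic by the long exact sequence.

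\emph{Torsion identity.} Choose cellular lifts for which the short exact sequence $0\to C^*(X_K,R_-)\to C^*(X_K)\to C^*(R_-)\to0$ is compatible with the preferred bases. This holds because the cells of $R_-$ form a subcomplex and the cells of the pair are the complementary ones. Multiplicativity of Reidemeister torsion for acyclic short exact sequences then gives the first equality in \eqref{eq:relative-intro}, up to the sign and $z^{\Z}$ ambiguity coming from lifts.

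Next I would use Milnor's theorem \cite{MilnorTorsion}: $\tau(X_K;L_z)\doteq\Delta_K(z)/(z-1)$. The quickest route is Fox calculus on a deficiency-one presentation, after collapsing $X_K$ to a 2-complex with one 0-cell, $g$ 1-cells and $g-1$ 2-cells. On the other side, $\tau(R_-;L_z)\doteq(z-1)^{\pm1}$ from the single boundary map. I must fix the convention, cochain versus chain, so that the exponent here matches the exponent in Milnor's formula. The quotient is then $\Delta_K(z)$.

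I expect this sign and exponent bookkeeping of conventions to be the main obstacle. I would settle it on the unknot, where $X_K$ is a solid torus, $\tau(X_K)=\tau(S^1)$, $R_-$ is a meridional annulus, and $\Delta=1$ must result. The sign can be fixed by a homology orientation, or simply left in the $\pm z^{\Z}$ indeterminacy.

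\emph{Identifications.} The linear Abelian BF system on $(X_K,R_-)$ with complementary boundary conditions has fields $C^*(X_K,R_-;L_z)\oplus C^*(X_K,R_+;L_z^{-1})$, paired by Poincaré–Lefschetz duality. Its partition function is the torsion of the relative complex in the cellular formalism of \cite{CMR,CattaneoBF}. Because that complex is acyclic, there are no zero modes and the result is a number. This gives the determinant-line statement.

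Finally, \cite{FJR} identifies the sutured Floer Euler characteristic of $(X_K,\gamma_\mu)$ with the torsion $\tau(X_K,R_-)$ as an element of $\Z[H_1]$, up to units. The Ozsv\'ath–Szab\'o theorem \cite{OzsvathSzaboKnots}, together with $SFH(X_K,\gamma_\mu)\cong\widehat{HFK}(K)$, gives $\sum_a\chi(\widehat{HFK}(K,a))t^a=\Delta_K(t)$ with symmetric normalization.

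Since the relative torsion is the specialization at $z$ of a Laurent polynomial in $t$ equal to $\Delta_K(t)$ up to $\pm t^k$, imposing $\Delta(t^{-1})=\Delta(t)$ and $\Delta(1)=1$ removes the unit ambiguity. The three Laurent polynomials therefore coincide.
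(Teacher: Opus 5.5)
Your proposal is correct and follows essentially the paper's route. The paper likewise gets acyclicity from the circle, torus and exterior complexes plus the long exact sequence of the pair, then uses multiplicativity of torsion for the based short exact sequence with $R_-$ a subcomplex, and divides Milnor's Fox-calculus value $\Delta_K(z)/(z-1)$ by the annulus torsion $(z-1)^{-1}$. It finishes with Friedl--Juh\'asz--Rasmussen, Juh\'asz and Ozsv\'ath--Szab\'o under the symmetric normalization; your Alexander-module/universal-coefficient argument for $H^\bullet(X_K;L_z)=0$ is only a minor variant of the paper's invertible Alexander--Fox minor.

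One caution on the BF clause. Do not justify it by appealing to the Cattaneo--Mnev--Reshetikhin cellular formalism: the paper notes that it assumes unimodular coefficients, which $L_z$ generally violates. As in the paper, the ``determinant-line invariant'' is by definition the fixed-lift torsion of the finite-dimensional relative complex. It is well defined only modulo $\pm z^{\mathbb Z}$, so ``the result is a number'' holds only after fixing lifts.
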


The terms ``determinant-line invariant'' and ``relative BF system'' are
spelled out in Sections~\ref{sec:BVBFV}--\ref{sec:sutured}.  The qualifier
\emph{modulo units} is essential: the rank-one system $L_z$ is generally
non-unimodular, so lifts, Euler structures, and normalizations cannot be
suppressed.  The established cellular BV--BFV gluing theorem of
\cite{CMR,CMRPerturbative} is not invoked without checking its coefficient
hypotheses.

\begin{theorem}[Cyclotomic BF--Floer reconstruction]\label{thm:main}
Let $K\subset S^3$ be a knot with $g(K)\leq g$, and let $p\geq 2g+1$ be an odd prime. Write $\zeta=e^{2\pi i/p}$ and denote by $L_\zeta$ the rank-one complex local system sending a positively oriented meridian of $K$ to $\zeta$. Then:
\begin{enumerate}[label=\textup{(\roman*)},leftmargin=2em]
\item $H^\bullet(X_K;L_\zeta)=H^\bullet(\partial X_K;L_\zeta)=0$, so the specialized Reidemeister torsion is defined and nonzero.
\item Fix the symmetric torsion normalization
\begin{equation}\label{eq:tau-def}
 \tau_K^{\mathrm{sym}}(t)=\frac{\Delta_K(t)}{t-1}.
\end{equation}
For every $a\in\Z$ with $-g\leq a\leq g$, one has
\begin{equation}\label{eq:main}
 \displaystyle
 \chi\bigl(\HF(K,a)\bigr)
 =\frac{1+\Tr_{\Q(\zeta)/\Q}\!\left(\zeta^{-a}(\zeta-1)\tau_K^{\mathrm{sym}}(\zeta)\right)}{p}.
\end{equation}
\item The entire symmetrically normalized Alexander polynomial, and hence every Alexander-graded Floer Euler characteristic, is uniquely determined by the single \emph{exact algebraic value} $\tau_K^{\mathrm{sym}}(\zeta)$ and the genus bound $g$.
\end{enumerate}
\end{theorem}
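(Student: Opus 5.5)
The plan is to reduce everything to a finite Fourier inversion over $\Z/p$, using the classical identification of the torsion with $\Delta_K(t)/(t-1)$ (Milnor, Turaev) and the Ozsv\'ath--Szab\'o identity $\sum_a\chi(\HF(K,a))t^a=\Delta_K(t)$.

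For (i), I would use that the exterior $X_K$ has the homology of a circle, generated by the meridian. Hence a rank-one local system factors through $H_1(X_K)=\Z$. For $L_\zeta$ with $\zeta\ne1$, the twisted homology is governed by the Alexander module $H_1(\widetilde X_K)$ specialized at $t=\zeta$, so acyclicity reduces to $\Delta_K(\zeta)\neq0$. This holds because $\zeta$ has minimal polynomial $\Phi_p$ of degree $p-1\ge 2g>\deg\Delta_K$ (the span of $\Delta_K$ is at most $2g$). A nonzero integer polynomial of span less than $p-1$ cannot be divisible by $\Phi_p$; a monomial shift is harmless here.

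The boundary torus is acyclic for any nontrivial holonomy by a direct cellular computation, since the meridian acts by $\zeta\ne1$. Nonvanishing of the torsion then follows from Milnor's formula $\tau\doteq\Delta_K(\zeta)/(\zeta-1)\neq0$.

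For (ii), write $\Delta_K(t)=\sum_{|b|\le g}c_bt^b$, so that $(\zeta-1)\tau^{\mathrm{sym}}_K(\zeta)=\Delta_K(\zeta)$ exactly in the chosen normalization. Then
\[
\Tr_{\Q(\zeta)/\Q}\bigl(\zeta^{-a}\Delta_K(\zeta)\bigr)=\sum_b c_b\Tr(\zeta^{b-a}).
\]
Here $\Tr(\zeta^m)=p-1$ if $p\mid m$ and $-1$ otherwise. Since $|b-a|\le 2g<p$, the condition $p\mid b-a$ forces $b=a$. The trace therefore equals $pc_a-\sum_b c_b=pc_a-\Delta_K(1)=pc_a-1$. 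Solving gives $c_a=(1+\Tr(\cdots))/p$, and $c_a=\chi(\HF(K,a))$ by Ozsv\'ath--Szab\'o.

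For (iii), the formula in (ii) expresses each coefficient $c_a$ with $|a|\le g$ through $\tau^{\mathrm{sym}}_K(\zeta)$ alone. The coefficients with $|a|>g$ vanish by the genus bound, since knot Floer homology detects genus. So the whole polynomial is determined.

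The main obstacle is the normalization ambiguity. The torsion is a priori defined only up to $\pm\zeta^{\Z}$, and such an ambiguity would destroy the trace formula. I would handle this by taking (ii) at face value: the statement \emph{fixes} $\tau^{\mathrm{sym}}_K$ via \eqref{eq:tau-def}, so the exact value is $\Delta_K(\zeta)/(\zeta-1)$ with the symmetric, $\Delta_K(1)=1$ normalization built in. In the proof I would remark that this is where the Turaev/Euler-structure choice enters, presumably matched with the normalization of Theorem~\ref{thm:relative-intro}.
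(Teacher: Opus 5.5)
Your parts (ii) and (iii) are exactly the paper's argument: the trace orthogonality of Proposition~\ref{prop:extract} followed by its injectivity. Your justification of $\Delta_K(\zeta)\neq0$ in (i), however, has an off-by-one gap.

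The span of $\Delta_K$ is at most $2g(K)\le 2g\le p-1$, with equality possible. So your claim that the span is \emph{less than} $p-1$ is not established. It fails at the threshold $p=2g+1$ for every genus-$g$ knot whose Alexander polynomial has full span. The trefoil at $p=3$ is an example: $t\Delta_{3_1}(t)=t^2-t+1$ has the same degree as $\Phi_3$. In this case a degree count cannot exclude $t^N\Delta_K(t)=c\,\Phi_p(t)$ for an integer $c$. Indeed, no purely degree-based argument can work here, because $t^{-(p-1)/2}\Phi_p(t)$ is a symmetric integral Laurent polynomial of span exactly $p-1$ that vanishes at $\zeta$.

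The missing ingredient is the normalization $\Delta_K(1)=1$, which is how the paper proves Lemma~\ref{lem:nonvanish}. If $\Delta_K(\zeta)=0$, then the minimal polynomial $\Phi_p$ divides $t^N\Delta_K(t)$. Since $\Phi_p$ is monic, the quotient lies in $\Z[t]$. Evaluating at $t=1$ then gives $p\mid\Delta_K(1)=1$, a contradiction.

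This argument needs no genus bound and works for every prime $p$. So it repairs your step and also shows that the hypothesis $p\ge 2g+1$ is used only in the coefficient extraction of (ii), not for acyclicity. With this replacement, the rest of your proof goes through as written.
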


A second result shows that the bound on the prime cannot generally be lowered.
\begin{theorem}[Sharpness]\label{thm:sharp}
For every odd prime $p$ and every integer $g\geq(p+1)/2$, there is a fibered knot $J_p\subset S^3$ of genus $(p+1)/2$ such that
\[
 \tau_{J_p}^{\mathrm{sym}}(\zeta)=\tau_U^{\mathrm{sym}}(\zeta)=\frac{1}{\zeta-1},
\]
where $U$ denotes the unknot. The Alexander polynomials of $J_p$ and $U$ are different. Consequently, prime-order torsion evaluation is not injective on the class of knots of genus at most $g$ when $p<2g+1$.
\end{theorem}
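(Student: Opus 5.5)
The plan is to reduce the statement to a question about Laurent polynomials and then realize a suitable polynomial by a fibered knot. By the normalization \eqref{eq:tau-def}, $\tau_{J}^{\mathrm{sym}}(\zeta)=\Delta_{J}(\zeta)/(\zeta-1)$. Since $\Delta_U=1$, the required equality $\tau_{J_p}^{\mathrm{sym}}(\zeta)=\tau_U^{\mathrm{sym}}(\zeta)=1/(\zeta-1)$ is equivalent to $\Delta_{J_p}(\zeta)=1$. This is an exact equality, not merely one modulo units, because the symmetric normalization has already been fixed. Moreover $\Delta_{J_p}(\zeta)=1\neq 0$, so the acyclicity part of Theorem~\ref{thm:main}(i) still applies and the torsion is defined.

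\textbf{Step 1 (choice of polynomial).} A Laurent polynomial $F\in\Z[t^{\pm1}]$ with $F(\zeta)=0$ is divisible by the monic cyclotomic polynomial $\Phi_p(t)=1+t+\dots+t^{p-1}$. So I would look for $\Delta-1$ among the multiples of $\Phi_p$. The Laurent polynomial $t^{-(p-1)/2}\Phi_p(t)$ is symmetric, of span $p-1$, and takes the value $p$ at $t=1$. To preserve the normalization $\Delta(1)=1$, multiply it by the symmetric factor $t-2+t^{-1}=t^{-1}(t-1)^2$, which vanishes at $t=1$. Set
\[
 \Delta_p(t)=1+t^{-(p-1)/2}\,\Phi_p(t)\,\bigl(t-2+t^{-1}\bigr).
\]
Then $\Delta_p(t^{-1})=\Delta_p(t)$, $\Delta_p(1)=1$, $\Delta_p(\zeta)=1$, and $\Delta_p\neq 1$. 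Its top degree is $(p-1)/2+1=(p+1)/2$, and its leading coefficient is $1$. Hence $\Delta_p$ is monic, and its degree $(p+1)/2$ is at least $2$ for $p\ge3$.

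\textbf{Step 2 (realization).} I would invoke the classical realization theorem for fibered knots (Burde; see also Kanenobu and Quach). It states that every symmetric, monic Laurent polynomial $\Delta$ with $\Delta(1)=1$ and top degree $d$ is the Alexander polynomial of a fibered knot in $S^3$ of genus $d$. Applying this to $\Delta_p$ produces a fibered knot $J_p$ with $\Delta_{J_p}=\Delta_p$. For a fibered knot the genus equals the top degree of the symmetrized Alexander polynomial, so $g(J_p)=(p+1)/2$. This is the step I regard as the main obstacle, since it is imported rather than proved. If a self-contained construction were preferred, I would first try a plumbing of Hopf bands or a Murasugi sum whose monodromy has characteristic polynomial $t^{(p+1)/2}\Delta_p(t)$, and then check fiberedness through the Stallings/Gabai criteria.

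\textbf{Step 3 (conclusion).} Step~1 gives $\tau_{J_p}^{\mathrm{sym}}(\zeta)=\Delta_p(\zeta)/(\zeta-1)=1/(\zeta-1)=\tau_U^{\mathrm{sym}}(\zeta)$, while $\Delta_{J_p}\neq\Delta_U$. For any $g\ge (p+1)/2$, both $J_p$ and $U$ have genus at most $g$. Hence the map $K\mapsto\tau_K^{\mathrm{sym}}(\zeta)$ is not injective on knots of genus at most $g$. Now suppose $p<2g+1$. Since $p$ is odd and $p\le 2g-1$, we get $g\ge (p+1)/2$, so we are in this situation. The bound $p\ge 2g+1$ in Theorem~\ref{thm:main} is therefore sharp.
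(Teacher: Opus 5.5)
Your proposal is correct and follows essentially the same route as the paper. You use the same obstructing polynomial $1+t^{-(p-1)/2}\Phi_p(t)(t-2+t^{-1})$, check its symmetry, normalization, value $1$ at $\zeta$ and monicity, then realize it by a fibered knot and read off the genus from the top degree; the only difference is that you cite the classical Burde/Quach realization theorem for monic Alexander polynomials where the paper cites Stoimenow, which serves the same purpose.
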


Our reconstruction has a compatible multiplicative form. If $K=K_1\#K_2$, then the torsion of the meridional annulus corrects the product of torsions by exactly one factor of $t-1$. Under the arithmetic reconstruction this becomes the convolution of graded Floer Euler characteristics. We also recover classical branched-cover homology from a cyclotomic norm.

Sections~\ref{sec:chaininterface}--\ref{sec:lspace} compare linear
BV--BFV interface gluing with Floer tensor pairings after
 decategorification and apply the staircase classification to positive
L-space knots.  The standard twisted BF complex is acyclic, so it
cannot be quasi-isomorphic to the associated graded knot Floer complex.

Section~\ref{sec:categorical} studies derived scalar extension for the
degree-zero torus algebra.  Its projective pairing matrix
$\left(\begin{smallmatrix}2&3\\1&2\end{smallmatrix}\right)$ obstructs
independent single-field tensor factorizations.  Reduction retaining
both idempotents is conservative on perfect modules, but its pairing
defect is measured by the chord ideal and a three-layer filtration.
Section~\ref{sec:constructive} resolves the torus algebra by six
generators and realizes its representation differential in a classical
one-dimensional cotangent BV--BFV model.  The resulting equivalence is
algebraic.  Section~\ref{sec:quantum-boundary} computes a fixed
nonresonant quantum sector of Abelian BFV theory and identifies a
Hochschild obstruction to a Morita comparison with the degree-zero
torus algebra.  A geometric comparison of boundary categories and
manifold modules requires further constructions.

\section{Twisted complexes, torsion, and Abelian BF theory}\label{sec:torsion}
\subsection{Local systems and the Fox complex}
The canonical abelianization
\[
 \phi:\pi_1(X_K)\longrightarrow H_1(X_K;\Z)\cong\Z
\]
sends every oriented Wirtinger meridian to $1$. Write $\Lambda=\Z[t,t^{-1}]$, $F=\Q(t)$, and regard $F$ as a $\Z[\pi_1(X_K)]$-module through $\gamma\mapsto t^{\phi(\gamma)}$. An associated Wirtinger presentation
\[
 \pi_1(X_K)=\langle x_1,\ldots,x_n\mid r_1,\ldots,r_{n-1}\rangle
\]
gives, after abelianized Fox differentiation, the finite based complex
\begin{equation}\label{eq:fox-complex}
 0\longrightarrow F^{n-1}\xrightarrow{\,d_2\,}F^n\xrightarrow{\,d_1\,}F\longrightarrow0,
\qquad d_1=(t-1,\ldots,t-1).
\end{equation}
Here the matrix of $d_2$ is the transpose of the $(n-1)\times n$ Fox matrix $\bigl(\phi(\partial r_i/\partial x_j)\bigr)_{ij}$. It follows from the Fox identity that $d_1d_2=0$.

For a column $j$ let $A_j(t)$ denote the square matrix obtained by removing column $j$ of the Fox matrix. The Alexander--Fox theorem says that
\begin{equation}\label{eq:fox-minor}
 \det A_j(t)=\varepsilon t^m\Delta_K(t),\qquad \varepsilon\in\{\pm1\},\quad m\in\Z,
\end{equation}
with the usual choice of Wirtinger presentation. Over $F$, the complex \eqref{eq:fox-complex} is acyclic.

\begin{definition}[Acyclic chain torsion: our convention]
\label{def:milnor}
Let $0\to C_N\xrightarrow{\partial_N}\cdots
\xrightarrow{\partial_1}C_0\to0$ be an acyclic complex of
finite-dimensional vector spaces over a field, each equipped with a
specified ordered basis $c_i$.  Write
$B_i=\operatorname{im}(\partial_{i+1})$, choose an ordered basis $b_i$
of $B_i$, and lift $b_{i-1}$ to a tuple $\widetilde b_{i-1}$ in
$C_i$.  If $[v/c_i]$ is the determinant of the transition matrix
from $c_i$ to the concatenated basis $v$, define
\begin{equation}\label{eq:milnor-definition}
 \tau(C_*,c_*)=
 \prod_{i=0}^{N}
 \bigl[b_i\,\widetilde b_{i-1}/c_i\bigr]^{(-1)^{i+1}},
 \qquad b_{-1}=b_N=\varnothing.
\end{equation}
This scalar is independent of the chosen boundary bases and lifts.
In particular, the two-term complex
$0\to F\xrightarrow{\,u\,}F\to0$ in degrees $1,0$
has torsion $u^{-1}$.
\end{definition}
\begin{proof}[Verification of independence]
Changing $b_i$ by a matrix $M$ changes the determinant in degree
$i$ by $\det M$ and that in degree $i+1$ by the same factor;
the exponents in \eqref{eq:milnor-definition} are opposite, so they
cancel.  Changing a lift by an element of $B_i$ amounts to a
block-triangular basis change with diagonal blocks equal to the
identity, and leaves the determinants unchanged.
\end{proof}

\begin{proposition}[Torsion normalization]\label{prop:torsion}
For the acyclic complex \eqref{eq:fox-complex}, the Milnor torsion convention in which the circle $0\to F\xrightarrow{t-1}F\to0$ has torsion $(t-1)^{-1}$ gives
\begin{equation}\label{eq:torsion-raw}
 \tau(X_K;F_\phi)\doteq\frac{\Delta_K(t)}{t-1},
\end{equation}
where $\doteq$ means equality up to $\pm t^m$. Equation \eqref{eq:tau-def} selects the unique symmetrized numerator with value $1$ at $t=1$.
\end{proposition}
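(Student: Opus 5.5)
The plan is to compute the torsion of the three-term complex \eqref{eq:fox-complex} directly from Definition~\ref{def:milnor}, choosing the boundary bases so that only one Fox minor and the entry $t-1$ appear. The complex has $C_2=F^{n-1}$, $C_1=F^n$, $C_0=F$, with standard bases. Since $d_1=(t-1,\dots,t-1)$ is nonzero over $F=\Q(t)$, it is surjective. So $B_0=C_0$, and we take $b_0=(1)$, lifted to $\widetilde b_0=(t-1)^{-1}e_j$ for a fixed column index $j$.

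Next we use \eqref{eq:fox-minor}. The minor $\det A_j(t)=\varepsilon t^m\Delta_K(t)$ is nonzero in $F$, because $\Delta_K(1)=1$. Hence the columns of $d_2$, viewed as vectors in $C_1$, are linearly independent after deleting coordinate $j$. It follows that $d_2$ is injective, so $H_2=0$. By rank count, $\dim\ker d_1=n-1=\operatorname{rank} d_2$, so $H_1=0$ and the complex is acyclic. We take $b_1$ to be the image under $d_2$ of the standard basis of $C_2$, with lift $\widetilde b_1$ equal to that standard basis, and $b_2=\varnothing$.

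Now we evaluate the three factors of \eqref{eq:milnor-definition}.
\begin{itemize}
\item In degree $2$ the factor is $[\widetilde b_1/c_2]=1$, raised to the power $-1$.
\item In degree $0$ the factor is $[b_0/c_0]^{-1}=1$.
\item In degree $1$ the factor is $[b_1\,\widetilde b_0/c_1]^{+1}$. This is the determinant of the $n\times n$ matrix whose first $n-1$ columns are the columns of $d_2$ and whose last column is $(t-1)^{-1}e_j$.
\end{itemize}
Expanding the degree-$1$ determinant along the last column gives $\pm(t-1)^{-1}\det A_j(t)$, up to the transpose convention relating $d_2$ and the Fox matrix, which does not change the determinant. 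Therefore $\tau\doteq \pm t^m\Delta_K(t)/(t-1)$. The remaining sign and power of $t$ come from the sign $\varepsilon$, the row ordering and Wirtinger choices, and are absorbed into $\doteq$.

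As a consistency check on the exponent convention, the same recipe applied to the circle complex $0\to F\xrightarrow{t-1}F\to0$ gives $(t-1)^{-1}$, as the proposition requires. The degree-$1$ factor carries exponent $+1$ here as well, and the lift $(t-1)^{-1}$ produces the inverse.

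The main obstacle is bookkeeping rather than mathematics. One must make sure the exponent signs in \eqref{eq:milnor-definition} place $\Delta_K$ in the numerator and $t-1$ in the denominator, rather than the reciprocal, and the circle check pins this down. One must also confirm that the choices of $j$ and of presentation only change the answer by $\pm t^m$. For the choice of $j$, this follows from the Fox identity: the columns of the Fox matrix, each multiplied by $(t-1)$, sum to zero, so the minors $\det A_j$ agree up to sign.

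For the final sentence of the proposition, write $\Delta_K(t)$ up to units $\pm t^m$. The symmetry $\Delta_K(t^{-1})=\Delta_K(t)$ fixes the power of $t$, since an element $\pm t^m f$ with $f$ symmetric and nonzero is symmetric only when $m=0$. The value $\Delta_K(1)=1$ then fixes the sign. This gives uniqueness of the numerator in \eqref{eq:tau-def}.
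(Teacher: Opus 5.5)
Your proposal is correct and follows the paper's proof. Both compute the torsion directly from Definition~\ref{def:milnor}, using a single Fox minor $\det A_j$ and a lift of the generator of $C_0$ along the basis vector $e_j$. The only difference is cosmetic: the paper takes $b_0=(t-1)e_0$ with lift $e_n$, so $(t-1)^{-1}$ enters through the degree-$0$ factor rather than through your rescaled lift $(t-1)^{-1}e_j$ in the degree-$1$ determinant, and these agree by the independence of choices in Definition~\ref{def:milnor}. Your explicit acyclicity and uniqueness checks fill in steps the paper leaves implicit.
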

\begin{proof}
Choose $j=n$ for simplicity. A basis for the image of $d_1$ is $(t-1)e_0$ and a lift is $e_n$ in $F^n$. The columns of $d_2$ together with $e_n$ form a basis of $F^n$, whose determinant is $\pm\det A_n(t)$. In the stated alternating-basis convention the torsion is their determinant divided by $t-1$, up to sign. Equation \eqref{eq:fox-minor} finishes the calculation.
\end{proof}

\begin{remark}\label{rem:conventions}
A gauge-fixed Gaussian BF integral may produce the torsion, its inverse, or a density in a determinant line, depending on field parity, the chosen polarization and the integration convention. Throughout this paper $\tau_K^{\mathrm{sym}}$ means the \emph{explicitly defined torsion expression} \eqref{eq:tau-def}; it is not asserted to be a convention-independent partition function. The factor $t-1$ is indispensable, including in connected-sum formulas.
\end{remark}

\subsection{The twisted Abelian BF model}
For a complex character $z\in\C^\times$, let $L_z$ be the flat line associated to $\gamma\mapsto z^{\phi(\gamma)}$ and $L_z^\vee$ its dual. A formal twisted three-dimensional Abelian BF action is
\begin{equation}\label{eq:bf-action}
 S_z[A,B]=\int_{X_K}\langle B\wedge \dd_{L_z}A\rangle,
 \qquad A\in\Omega^1(X_K;L_z),\quad B\in\Omega^1(X_K;L_z^\vee).
\end{equation}
The infinitesimal bulk gauge symmetries are $A\mapsto A+\dd_{L_z}\lambda$ and $B\mapsto B+\dd_{L_z^\vee}\eta$, with boundary terms requiring a boundary condition or a BFV extension. Formally, the variation includes
\[
 \delta S_z\big|_{\partial X_K}=\int_{\partial X_K}\langle B\wedge\delta A\rangle,
\]
which supplies the canonical boundary one-form of the BFV description. The BV resolution uses the twisted de Rham complex and its dual. Upon cellular reduction and a choice of gauge fixing, determinant data of these complexes give torsion-type amplitudes, as in the cellular BF approach of \cite{CMR}. This background motivates our terminology \emph{BF torsion sector}; it is not used as an unproved functional-integral identity.

\begin{lemma}[Acyclic boundary sector]\label{lem:boundary}
If $z\ne1$, then $H^\bullet(\partial X_K;L_z)=0$.
\end{lemma}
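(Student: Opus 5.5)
The boundary $\partial X_K$ is a torus $T^2=S^1_\mu\times S^1_\lambda$, with $\mu$ a meridian and $\lambda$ a longitude. I would first determine how $L_z$ restricts to it. Under $\phi$, the meridian maps to $1$. The longitude bounds a Seifert surface in $X_K$, so it is null-homologous in $X_K$ and maps to $0$. Hence $L_z|_{\partial X_K}$ has holonomy $z$ around $\mu$ and trivial holonomy around $\lambda$. This gives an external product decomposition
\[
L_z|_{\partial X_K}\cong L^{(\mu)}_z\boxtimes \underline{\C}_\lambda ,
\]
where $L^{(\mu)}_z$ is the rank-one system on $S^1_\mu$ with holonomy $z$ and $\underline{\C}_\lambda$ is the trivial line on $S^1_\lambda$.

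Next I would compute the twisted cohomology of the circle factor $S^1_\mu$. Use the minimal CW structure: one $0$-cell and one $1$-cell. The twisted cochain complex is then
\[
0\to\C\xrightarrow{\,z-1\,}\C\to0
\]
(up to the convention $z^{\pm1}-1$). Since $z\neq1$, this differential is an isomorphism. Therefore $H^\bullet(S^1_\mu;L^{(\mu)}_z)=0$.

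Finally, apply the Künneth theorem over the field $\C$ to the product cell structure on $T^2$:
\[
H^\bullet(T^2;L_z)\cong H^\bullet(S^1_\mu;L^{(\mu)}_z)\otimes H^\bullet(S^1_\lambda;\C)=0 .
\]
Equivalently, one can avoid Künneth and work directly with the product cellular complex $\C\to\C^2\to\C$. Its differentials are $(z-1,\,0)^{T}$ and $(0,\,z-1)$, up to signs and the conventions above. Exactness then follows by inspection: the first map is injective, the last is surjective, and the kernel of the last equals the image of the first, both being the line spanned by the second basis vector.

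The only point needing care is the claim that the longitude has trivial image under $\phi$. I would justify it directly: $\lambda$ is the boundary of a Seifert surface intersected with $X_K$. Alternatively, its linking number with $K$ is zero by the choice of the preferred longitude. The choice of longitude does not actually matter here: even if $\lambda$ had nontrivial holonomy, the $\mu$-factor alone already forces acyclicity.
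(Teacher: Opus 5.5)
Your proposal is correct and follows the paper's proof: the paper also takes meridian and longitude holonomies $z$ and $1$ and checks exactness of the one-vertex, two-edge, one-face twisted complex $\C\to\C^2\to\C$ with differentials $\binom{z-1}{0}$ and $(0,\ z-1)$, and your K\"unneth formulation is only a repackaging of that product complex. One small slip: with these differentials, $\operatorname{im}d^0=\ker d^1$ is the line spanned by the \emph{first} basis vector, not the second.
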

\begin{proof}
Choose meridian $\mu$ and preferred longitude $\lambda$ on $\partial X_K\cong T^2$. Their holonomies are $z$ and $1$, respectively. The twisted cellular cochain complex of the torus has the form
\[
 0\longrightarrow\C\xrightarrow{\binom{z-1}{0}}\C^2
 \xrightarrow{\begin{psmallmatrix}0&z-1\end{psmallmatrix}}\C\longrightarrow0
\]
(up to a harmless sign in the last differential). For $z\ne1$ it is exact.
\end{proof}

\begin{lemma}[Prime characters are nonresonant]\label{lem:nonvanish}
For every knot $K$ and every primitive $p$th root $\zeta$ of unity, with $p$ prime, one has $\Delta_K(\zeta)\ne0$. Consequently $H^\bullet(X_K;L_\zeta)=0$.
\end{lemma}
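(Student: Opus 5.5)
The nonvanishing of $\Delta_K(\zeta)$ will follow from an arithmetic argument with the normalization $\Delta_K(1)=1$ in \eqref{eq:alex}. The acyclicity will then come from the Fox complex \eqref{eq:fox-complex} together with Proposition~\ref{prop:torsion} and Lemma~\ref{lem:boundary}.

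\emph{Step 1 (nonvanishing).} Suppose $\Delta_K(\zeta)=0$ for a primitive $p$th root $\zeta$. After multiplying by a power of $t$, $\Delta_K$ becomes an honest integer polynomial $P(t)\in\Z[t]$ with $P(1)=\Delta_K(1)=1$. The minimal polynomial of $\zeta$ over $\Q$ is the cyclotomic polynomial $\Phi_p(t)=1+t+\cdots+t^{p-1}$, which is monic. Hence $\Phi_p$ divides $P$ in $\Q[t]$, and by Gauss's lemma also in $\Z[t]$, say $P=\Phi_p\cdot Q$ with $Q\in\Z[t]$. Evaluating at $t=1$ gives $1=P(1)=p\,Q(1)$, which is impossible because $p\ge2$ and $Q(1)\in\Z$. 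As an alternative phrasing, one can take the norm $N_{\Q(\zeta)/\Q}$ of $\Delta_K(\zeta)$ and reduce modulo the prime $(1-\zeta)$ above $p$. Since $\zeta\equiv1$ there, $\Delta_K(\zeta)\equiv\Delta_K(1)=1$, so it is a unit mod $(1-\zeta)$ and in particular nonzero. I will present the Gauss-lemma version since it is shortest.

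\emph{Step 2 (acyclicity).} I specialize the Fox complex \eqref{eq:fox-complex} at $t=\zeta$. This is legitimate because the entries of $d_2$ lie in $\Lambda$. The result computes the twisted cellular cochains of the presentation 2-complex, which is simple-homotopy equivalent to $X_K$; dualization is harmless for acyclicity.
\begin{itemize}
\item Since $\zeta\ne1$, the map $d_1=(\zeta-1,\ldots,\zeta-1)$ is surjective onto $\C$.
\item By \eqref{eq:fox-minor}, the $(n-1)\times(n-1)$ minor $\det A_n(\zeta)=\pm\zeta^m\Delta_K(\zeta)$ is nonzero by Step 1. So $d_2$ is injective with image of dimension $n-1=\dim\ker d_1$.
\end{itemize}
Hence the specialized complex is exact, and $H^\bullet(X_K;L_\zeta)=0$.

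The main point requiring care is that the specialized Fox complex genuinely computes $H^\bullet(X_K;L_\zeta)$. Since $X_K$ deformation retracts onto the 2-complex of a Wirtinger presentation with deficiency one, this is standard, and I will cite it rather than reprove it.
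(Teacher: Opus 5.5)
Your proposal is correct and follows essentially the same route as the paper. Both arguments use the Gauss-lemma divisibility $\Phi_p\mid t^N\Delta_K(t)$, evaluated at $t=1$ to contradict $\Delta_K(1)=1$. Both then use invertibility of an Alexander--Fox minor at $t=\zeta$, together with the fact that $X_K$ has the homotopy type of the Wirtinger $2$-complex, to make the specialized Fox complex exact. Your explicit count ($d_1$ surjective, $d_2$ injective of rank $n-1=\dim\ker d_1$) spells out what the paper leaves implicit.
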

\begin{proof}
If $\Delta_K(\zeta)=0$, then the minimal polynomial $\Phi_p(t)=1+t+\cdots+t^{p-1}$ divides $t^N\Delta_K(t)\in\Z[t]$ for a sufficiently large $N$, by Gauss's lemma. Evaluation at $t=1$ gives $p\mid\Delta_K(1)=1$, a contradiction. Thus an Alexander--Fox minor is invertible at $t=\zeta$, and the specialized complex \eqref{eq:fox-complex} is exact. The knot exterior has the homotopy type of the corresponding two-dimensional spine.
\end{proof}

\section{BV--BFV boundary data and the meridional relative model}
\label{sec:BVBFV}
\subsection{Boundary data}
For clarity, we first work with smooth forms and then with finite cellular
cochains.  On an oriented three-manifold $X$ with a flat rank-one system
$L_z$, introduce the total (ghost-extended) fields
\begin{equation}\label{eq:superfields}
 \mathcal A\in\Omega^\bullet(X;L_z)[1],\qquad
 \mathcal B\in\Omega^\bullet(X;L_z^\vee)[1].
\end{equation}
A $k$-form component of $\mathcal A$ or $\mathcal B$ has ghost number
$1-k$; the degree-zero components are the classical $1$-form fields
$A$ and $B$.  The bracket $[1]$ records the degree shift, not a tensor
power.  The fields include ghosts, antifields and their differential-form
partners.  The natural pairings use
$L_z\otimes L_z^\vee\to\underline{\mathbb C}$.

Up to the standard Koszul sign convention, the odd symplectic form,
action, and cohomological vector field are
\begin{align}
 \omega_X&=\int_X\langle\delta\mathcal B\wedge\delta\mathcal A\rangle,
 &\deg\omega_X&=-1,\label{eq:BVsymp}\\
 S_X&=\int_X\langle\mathcal B\wedge d_z\mathcal A\rangle,
 &Q_X\mathcal A&=d_z\mathcal A,\quad
 Q_X\mathcal B=d_{z^{-1}}\mathcal B.\label{eq:BVaction}
\end{align}
Here $d_z^2=d_{z^{-1}}^2=0$ because the background connections are flat.
An integration by parts in $\delta S_X$ creates the boundary term
\begin{equation}\label{eq:alphaBFV}
 \alpha_{\partial X}=\int_{\partial X}
 \langle\mathcal B_{\partial}\wedge\delta\mathcal A_{\partial}\rangle,
 \qquad \omega_{\partial X}=\delta\alpha_{\partial X}.
\end{equation}
The boundary two-form has degree zero and
$Q_{\partial}\mathcal A_\partial=d_z\mathcal A_\partial$,
$Q_{\partial}\mathcal B_\partial=d_{z^{-1}}\mathcal B_\partial$.
It is Hamiltonian, with BFV charge represented by
$S_\partial=\int_{\partial X}\langle\mathcal B_\partial\wedge
d_z\mathcal A_\partial\rangle$.  The BV and BFV structures satisfy the
usual bulk--boundary compatibility relation
$\iota_{Q_X}\omega_X=\delta S_X+\pi^*\alpha_{\partial X}$,
up to the global convention for the sign of the boundary primitive.
For a detailed account and the modified quantum master equation, see
\cite{CMRClassical,CMRPerturbative}.

It is important that the boundary pairing in \eqref{eq:alphaBFV} is
between \emph{dual} local systems.  Replacing $L_z^\vee$ by $L_z$ without
a trivialization would in general make the integral meaningless.

\subsection{The sutured annuli as boundary conditions}
Choose two disjoint, oppositely oriented meridional sutures on
$\partial X_K\cong T^2$.  They cut the torus into two annuli; one is
$R_-$, with oriented meridional core, and the closure of its complement
is an annulus $R_-^c$.  For defining the relative complex we only require
the embedded annulus $R_-$; we do \emph{not} confuse $R_-^c$ with the
sutured $R_+$ after reserving annular neighborhoods of the sutures.
They differ by boundary collars and have the same homotopy type.

A mixed linear boundary condition is
\begin{equation}\label{eq:mixboundary}
 \mathcal A|_{R_-}=0,\qquad
 \mathcal B|_{R_-^c}=0.
\end{equation}
Consequently $\alpha_{\partial X}$ vanishes on variations satisfying
\eqref{eq:mixboundary}: at each point of the boundary, either
$\delta\mathcal A$ or $\mathcal B$ vanishes.  The boundary condition is
preserved by $Q_X$ since the exterior covariant differential commutes
with restriction.  In a cellular description with dual CW structures,
the two complementary complexes are naturally paired by
Poincar\'e--Lefschetz duality:
\begin{equation}\label{eq:PLduality}
 C^k(X_K,R_-;L_z)\otimes
 C^{3-k}(X_K,R_-^c;L_z^\vee)\longrightarrow\mathbb C.
\end{equation}
The precise chain-level pairing can be constructed with a dual cellular
decomposition.  The corresponding gauge-fixed quadratic BF model has
the underlying twisted differential
$C^\bullet(X_K,R_-;L_z)$.
Its one-loop determinant-line contribution is the torsion of this
\emph{relative} complex; this is the finite-dimensional manifestation
of the BV--BFV boundary-polarization principle.  It is not a claim that
the unregularized continuum path integral is an ordinary number.

The annuli meet along meridional circles.  Promoting
\eqref{eq:mixboundary} to a fully functorial quantum BV--BFV theory
with polarized interfaces requires the treatment of corners and coherent
state spaces.  We work at the \emph{cellular torsion level}, where the
relative complex and all of its gluing maps are defined without appealing
to a conjectural corner quantization.

\subsection{Acyclic boundary and corner sectors}
\begin{lemma}[No boundary or corner residual cohomology]\label{lem:corner}
Let $z\ne1$ and $R\subset\partial X_K$ be a meridional annulus.  The
rank-one twisted cellular complexes of $R$, of its two meridional
boundary circles, and of the entire torus $\partial X_K$ are acyclic.
If also $\Delta_K(z)\ne0$, then
$H^\bullet(X_K,R;L_z)=0$.
\end{lemma}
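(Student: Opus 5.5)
The plan is to compute each complex on an explicit minimal cell structure and then use the long exact sequence of the pair.

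\textbf{The annulus and its boundary circles.} Give the meridional annulus $R$ the product cell structure $S^1\times[0,1]$. The circle factor has one $0$-cell and one $1$-cell along the meridian, and the interval factor contributes the transverse cells. Along the meridian $L_z$ has holonomy $z$, so each meridional boundary circle has twisted cochain complex
\[
0\to\C\xrightarrow{z-1}\C\to0 .
\]
This is acyclic for $z\ne1$. The annulus deformation retracts onto its core, which is a meridian. Twisted cohomology is a homotopy invariant of the pair (space, local system), so $H^\bullet(R;L_z)\cong H^\bullet(S^1;L_z)=0$. If a cellular check is wanted instead, $C^\bullet(R;L_z)$ is the tensor product of the interval complex (trivial coefficients, cohomology $\C$ in degree $0$) with the circle complex above. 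By K\"unneth it is acyclic.

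\textbf{The torus.} Acyclicity of the whole boundary torus is exactly Lemma~\ref{lem:boundary}.

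\textbf{The relative pair.} Assume also $\Delta_K(z)\ne0$. I first show $H^\bullet(X_K;L_z)=0$, exactly as in Lemma~\ref{lem:nonvanish}.
\begin{enumerate}[label=\textup{(\alph*)},leftmargin=2em]
\item $X_K$ is homotopy equivalent to the two-dimensional spine of a Wirtinger presentation. Its twisted cochain complex is the dual of the specialized Fox complex \eqref{eq:fox-complex} at $t=z$.
\item Since $z\ne1$, the map $d_1=(z-1,\dots,z-1)$ is surjective.
\item By \eqref{eq:fox-minor}, the minor $\det A_n(z)=\pm z^m\Delta_K(z)$ is nonzero. So $d_2$ is injective with image of dimension $n-1=\dim\ker d_1$, and the complex is exact.
\item Dualizing over $\C$ preserves exactness.
\end{enumerate}

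Now apply the long exact sequence of the pair $(X_K,R)$ with coefficients in $L_z$:
\[
\cdots\to H^{k}(X_K,R;L_z)\to H^k(X_K;L_z)\to H^k(R;L_z)\to H^{k+1}(X_K,R;L_z)\to\cdots
\]
The two absolute terms vanish, so the relative terms vanish as well. At cochain level this is just the short exact sequence
\[
0\to C^\bullet(X_K,R)\to C^\bullet(X_K)\to C^\bullet(R)\to0,
\]
which exists once the CW structure on $X_K$ contains $R$ as a subcomplex.

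\textbf{Main obstacle.} The only delicate point is this compatibility of cell structures. The Wirtinger spine is not a CW structure on $X_K$ containing $R$. I would therefore use a triangulation of $X_K$ in which $R$ is a subcomplex. Acyclicity is independent of the choice, because twisted cohomology is a homotopy invariant, and for the relative groups it is invariant under cellular subdivision of the pair. No torsion computation is needed here.
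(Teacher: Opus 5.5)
Your proposal is correct and follows essentially the same route as the paper: the exact circle complex $0\to\C\xrightarrow{z-1}\C\to0$, retraction of the annulus onto its meridional core, the one-vertex torus complex (which you cite as Lemma~\ref{lem:boundary}), exactness of the specialized Fox complex when $\Delta_K(z)\ne0$, and the long exact sequence of the pair $(X_K,R)$. Your remark that one should use a CW structure on $X_K$ containing $R$ as a subcomplex, rather than the Wirtinger spine, makes explicit a point the paper leaves implicit.
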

\begin{proof}
A meridional circle has the cellular cochain complex
\begin{equation}\label{eq:circle-complex}
 0\longrightarrow\mathbb C\xrightarrow{z-1}\mathbb C
 \longrightarrow0,
\end{equation}
which is exact.  An annulus retracts onto its meridional core.
For the torus, use a one-vertex, two-edge, one-face cellulation with
meridional holonomy $z$ and longitudinal holonomy $1$:
\begin{equation}\label{eq:torus-cpx}
 0\longrightarrow\mathbb C\xrightarrow{\,\binom{z-1}{0}\,}
 \mathbb C^2\xrightarrow{\,\begin{psmallmatrix}0&z-1\end{psmallmatrix}\,}
 \mathbb C\longrightarrow0.
\end{equation}
It is exact.  If $\Delta_K(z)\ne0$, the Fox complex of $X_K$ is also
exact.  Finally, the long exact sequence of the pair $(X_K,R)$
then proves that its relative twisted cohomology vanishes.
\end{proof}

The lemma means that the reduced boundary and corner \emph{classical}
cohomologies vanish in these sectors.  It does \emph{not} mean that all
unreduced boundary BFV fields disappear, or that one has constructed a
canonical equivalence with the nontrivial bordered Floer categories.
In fact, this acyclicity helps explain why the scalar theory cannot
retain Maslov-graded Floer information.

\subsection{Determinant lines, torsion and the normalization issue}
A self-contained algebraic definition is useful here.  For a finite
based cochain complex $C^\bullet$ of vector spaces over a field $E$,
put
\begin{equation}\label{eq:detline}
 \operatorname{Det}C^\bullet
 =\bigotimes_k(\det C^k)^{(-1)^k}.
\end{equation}
The decomposition into cycles, boundaries and cohomology induces a
canonical determinant-line isomorphism
$\operatorname{Det}C^\bullet\simeq
\operatorname{Det}H^\bullet(C)$.  If $C$ is acyclic and cellular lifts
specify bases, this map yields a nonzero scalar, its Reidemeister
torsion; changing from cochains to chains or reversing the convention
may invert this scalar.  We fix the \emph{Milnor chain convention}
from Proposition~\ref{prop:torsion} for all numerical formulas.  The
Gaussian/BV pushforward is being compared with the corresponding
chosen torsion branch, not given a new normalization-independent definition.

For a non-unimodular rank-one representation, replacing the chosen lift
of a $k$-cell by a deck translate changes the basis by a power of $z$.
The torsion therefore changes by a unit $\pm z^n$; this is precisely
the familiar $\pm t^n$ ambiguity of Alexander torsion.  The cellular
BV--BFV construction in \cite{CMR} is stated for coefficient systems
with determinant holonomy $\pm1$.  The specialization $L_\zeta$
typically does not meet this hypothesis.  In this paper we instead
make \emph{fixed-lift comparisons} of the finite-dimensional
quadratic complexes, followed by the stated symmetric Alexander
normalization.  This restriction is substantial: the full
non-unimodular quantum BV--BFV functor, with Euler-structure coherence,
is not constructed here.

\begin{proposition}[Boundary normalization as relative torsion]
\label{prop:relative}
For every knot $K$ and every $z\ne1$ such that
$\Delta_K(z)\ne0$, let $R_-$ be a meridional annulus of its sutured
exterior.  In the Milnor torsion convention one has
\begin{equation}\label{eq:relative-amplitude}
 \tau(X_K,R_-;L_z)\doteq\Delta_K(z),
 \qquad
 \frac{\tau(X_K;L_z)}{\tau(X_K,R_-;L_z)}
 \doteq\frac1{z-1}.
\end{equation}
Both equalities are statements about torsion of \emph{acyclic based
complexes}; a normalized value is obtained by symmetrizing the
Laurent polynomial.
\end{proposition}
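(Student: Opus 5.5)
The plan is to combine the multiplicativity of torsion along the short exact sequence of the pair $(X_K,R_-)$ with the two computations already in hand: Proposition~\ref{prop:torsion} for the exterior and the circle complex \eqref{eq:circle-complex} for the annulus.

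First, fix a CW structure on $X_K$ in which $R_-$ is a subcomplex. I would take $R_-$ cellulated by one vertex, one meridional edge, and the cells needed to thicken it to an annulus. Choose compatible lifts, so that the cellular bases of $C^\bullet(R_-;L_z)$ and $C^\bullet(X_K,R_-;L_z)$ together give the basis of $C^\bullet(X_K;L_z)$. The short exact sequence
\[
0\to C^\bullet(X_K,R_-;L_z)\to C^\bullet(X_K;L_z)\to C^\bullet(R_-;L_z)\to0
\]
is then based compatibly. By Lemma~\ref{lem:corner} all three complexes are acyclic when $z\ne1$ and $\Delta_K(z)\ne0$. Milnor's multiplicativity lemma applies, and the long exact sequence is the zero complex, so it contributes a trivial torsion factor. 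This gives
\[
\tau(X_K;L_z)=\pm\,\tau(X_K,R_-;L_z)\,\tau(R_-;L_z)
\]
up to the units coming from lifts. The multiplicativity is proved directly from Definition~\ref{def:milnor} by choosing boundary bases adapted to the filtration, with the quotient bases lifted. The block-triangular transition matrices then factor the determinants.

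Second, compute the two known factors. The annulus deformation-retracts onto its meridional core. Since torsion is a simple-homotopy invariant, and collapsing cells leaves it unchanged up to $\pm z^n$, $\tau(R_-;L_z)$ equals the torsion of \eqref{eq:circle-complex}. In the Milnor convention that is $(z-1)^{-1}$. For the exterior, specialize Proposition~\ref{prop:torsion} at $t=z$. This is legitimate because the Alexander--Fox minor is nonzero at $z$, and it gives $\tau(X_K;L_z)\doteq\Delta_K(z)/(z-1)$. We must also pass from the two-dimensional Wirtinger spine to the chosen CW structure on $X_K$. The spine is simple-homotopy equivalent to $X_K$ (the Whitehead group of $\Z$ vanishes), so again only a unit $\pm z^n$ appears.

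Dividing the two results gives $\tau(X_K,R_-;L_z)\doteq\Delta_K(z)$. The second equality in \eqref{eq:relative-amplitude} is then just $\tau(R_-;L_z)\doteq(z-1)^{-1}$ rearranged.

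The main obstacle is convention bookkeeping rather than any conceptual step. Three things have to be reconciled:
\begin{itemize}
\item the statement mixes cochains with the chain convention of Proposition~\ref{prop:torsion};
\item dualizing may invert torsion or replace $z$ by $z^{-1}$;
\item the ordering in the multiplicativity lemma introduces signs.
\end{itemize}
I would handle this by working throughout with the chain complexes $C_\bullet(\widetilde X_K)\otimes L_z$ in the Milnor convention. The cochain formulation is recovered by duality, and $\Delta_K(z^{-1})\doteq\Delta_K(z)$ absorbs the inversion of $z$. Since every equality is only claimed modulo $\pm z^{\Z}$, each such discrepancy is a unit and does not affect the result.
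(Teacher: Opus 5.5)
Your proposal is correct and follows the paper's own proof: the based short exact sequence of the pair $(X_K,R_-)$, multiplicativity of torsion for acyclic complexes, $\tau(R_-;L_z)\doteq(z-1)^{-1}$ from the circle complex, and $\tau(X_K;L_z)\doteq\Delta_K(z)/(z-1)$ from Proposition~\ref{prop:torsion}, followed by division; your simple-homotopy remarks only make explicit what the paper leaves implicit. One small correction: a chain/cochain switch can \emph{invert} the torsion, and an inversion is not a $\pm z^{\mathbb Z}$ unit, so that issue is settled by your decision to work in the chain convention throughout, not by the unit ambiguity.
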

\begin{proof}
Choose a CW decomposition of the pair $(X_K,R_-)$ with compatible
lifts and bases.  The restriction or quotient of cellular chains
provides a short exact sequence
\begin{equation}\label{eq:SESrelative}
 0\longrightarrow C_\bullet(R_-;L_z)
 \longrightarrow C_\bullet(X_K;L_z)
 \longrightarrow C_\bullet(X_K,R_-;L_z)
 \longrightarrow0.
\end{equation}
All three complexes are acyclic by Lemma~\ref{lem:corner}.
The multiplicativity property of torsion for a based short exact
sequence therefore yields
$\tau(X_K;L_z)\doteq
\tau(R_-;L_z)\tau(X_K,R_-;L_z)$.
The annulus has the homotopy type of \eqref{eq:circle-complex},
whose torsion in our convention is $(z-1)^{-1}$.
Proposition~\ref{prop:torsion} gives
$\tau(X_K;L_z)\doteq\Delta_K(z)/(z-1)$.
Division produces \eqref{eq:relative-amplitude}.
\end{proof}

\begin{definition}[Symmetrized relative BF torsion character]
\label{def:bfcharacter}
For a knot exterior, define
$\mathscr B_K(t)\in\mathbb Z[t,t^{-1}]$ to be the unique
representative of the meridional relative torsion class
$[\tau(X_K,R_-;F_\phi)]\in F^\times/(\pm t^{\mathbb Z})$
that is symmetric in $t\leftrightarrow t^{-1}$ and equals $1$ at $t=1$.
At any nonresonant character $z$, its value $\mathscr B_K(z)$ is called
the \emph{symmetrized relative BF torsion character}.
\end{definition}
\begin{remark}
Existence and uniqueness in Definition~\ref{def:bfcharacter} follow
from Proposition~\ref{prop:relative} and the normalization of
$\Delta_K$.  ``Character'' here denotes a function of the abelian
holonomy, not a character of a Floer-category representation.
The symmetric normalization is extra data \emph{added after}
cellular torsion: it is not a canonical value of a bare Gaussian integral.
\end{remark}

\begin{corollary}\label{cor:bfalex}
For every knot $K$ one has the identity of normalized Laurent
polynomials
\begin{equation}\label{eq:bf-alex}
 \mathscr B_K(t)=\Delta_K(t).
\end{equation}
For every nonresonant $z$, this is an equality of the symmetrically
normalized determinant-line contribution of the mixed cellular BF
boundary model with the Alexander polynomial evaluated at $z$.
\end{corollary}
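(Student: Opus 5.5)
The plan is to deduce Corollary~\ref{cor:bfalex} from Proposition~\ref{prop:relative}, applied generically over $F=\Q(t)$ rather than at a specialized character, and then to use the uniqueness clause of Definition~\ref{def:bfcharacter}.

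First I rerun the proof of Proposition~\ref{prop:relative} with coefficients in $F_\phi$. The Fox complex \eqref{eq:fox-complex} is acyclic over $F$. The annulus complex $0\to F\xrightarrow{t-1}F\to0$ is acyclic over $F$ because $t-1\neq0$ in $F$. The long exact sequence of the pair $(X_K,R_-)$ then makes the relative complex acyclic as well. Multiplicativity of torsion for the based short exact sequence \eqref{eq:SESrelative}, together with Proposition~\ref{prop:torsion}, gives
\[
\tau(X_K,R_-;F_\phi)\doteq\frac{\Delta_K(t)/(t-1)}{(t-1)^{-1}}=\Delta_K(t)
\]
in $F^\times/(\pm t^{\Z})$. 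So $\Delta_K(t)$ represents this class.

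Second, I check that the normalized representative is unique. Suppose $\pm t^m\Delta_K(t)$ is symmetric under $t\leftrightarrow t^{-1}$ and takes the value $1$ at $t=1$. Since $\Delta_K$ is itself symmetric, the identity $\pm t^{m}\Delta_K(t)=\pm t^{-m}\Delta_K(t)$ forces $m=0$, because $\Delta_K\neq0$. Evaluating at $t=1$ and using $\Delta_K(1)=1$ fixes the sign as $+$. Hence $\mathscr B_K(t)=\Delta_K(t)$, which is \eqref{eq:bf-alex}.

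The evaluation statement is where I expect the only real care to be needed: the specialized relative torsion at a nonresonant $z$ must agree with the evaluation of the generic class. The cellular lifts and bases are defined over $\Lambda$, and the boundary maps specialize entrywise. The determinants in \eqref{eq:milnor-definition} are Laurent polynomials in $t$, namely the Fox minor and $t-1$, which are nonzero at $z$ by acyclicity. They therefore commute with the evaluation $t\mapsto z$. With the same fixed lifts, Proposition~\ref{prop:relative} at $z$ yields $\Delta_K(z)$ up to $\pm z^{\Z}$. Choosing the unit that was fixed generically by the symmetric normalization gives exactly $\mathscr B_K(z)=\Delta_K(z)$.
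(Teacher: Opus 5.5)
Your proposal is correct and follows the paper's own argument. The paper proves the corollary simply as Proposition~\ref{prop:relative} combined with uniqueness of the symmetric normalization of $\Delta_K$. You also make explicit two points the paper leaves implicit: that Proposition~\ref{prop:relative} must be rerun over $F=\Q(t)$ to obtain the generic class in Definition~\ref{def:bfcharacter}, and that with fixed lifts the specialized torsion at a nonresonant $z$ is the evaluation of the generic one, so the generically chosen unit also normalizes the value at $z$.
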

\begin{proof}
This is Proposition~\ref{prop:relative} together with the uniqueness
of the symmetric Alexander normalization.
\end{proof}

\section{The Floer side: sutures and decategorification}
\label{sec:sutured}
\subsection{Sutured and knot Floer homology}
The knot Floer invariant $\widehat{HFK}_m(K,a)$ is a finitely
generated group (or vector space after passing to a field) with a
Maslov grading $m$ and an Alexander grading $a$.  It is computed
from a doubly pointed Heegaard diagram by counting suitable
holomorphic disks.  The groups themselves depend on substantially more
than an Euler-characteristic polynomial.  In our comparison all
Floer groups are taken over $\mathbb F_2$ and signs in their Euler
characteristics use the Maslov parity fixed by the usual absolute
normalization for knots in $S^3$.

The balanced sutured manifold $(X_K,\gamma_\mu)$ obtained using
\emph{two oppositely oriented meridional sutures} satisfies
$SFH(X_K,\gamma_\mu)\cong\widehat{HFK}(K)$ as an ungraded sum, with
the relevant Alexander grading recovered from the relative
$\operatorname{Spin}^c$ decomposition and a choice of Seifert surface
\cite{Juhasz}.  The theorem of Friedl--Juh\'asz--Rasmussen
\cite{FJR} identifies the relative torsion class of a balanced
sutured manifold with the Euler-characteristic class of its sutured
Floer homology, up to the conventional affine $H_1$ unit and sign.
The familiar Ozsv\'ath--Szab\'o identity then gives the normalized
formula
\begin{equation}\label{eq:suturedhfk}
 \sum_a\chi(\widehat{HFK}(K,a))t^a
 =\Delta_K(t)=\mathscr B_K(t).
\end{equation}
The left equality is the original knot Floer categorification
statement; the right equality is the explicit meridional relative
torsion calculation.  Their coexistence is our precisely defined
\emph{decategorified BV--BFV/Floer comparison}.

\begin{theorem}[Relative BV--BFV/Floer comparison and its exact scope]
\label{thm:BFHF}
Fix a knot $K$, a meridional sutured exterior, and the Milnor and
Alexander normalization conventions above.  The following triangle
commutes after passing to the common Laurent-polynomial decategorification:
\[
\begin{tikzcd}[column sep=large,row sep=large]
\mathsf{BF}_{\mathrm{rel}}(K)
 \arrow[r,"\mathrm{torsion}"]
 \arrow[dr,swap,"\mathrm{normalize}"]
 & \tau(X_K,R_-)
 \arrow[d,"\mathrm{normalize}"]\\
 & \displaystyle \Delta_K(t)=\sum_a\chi(\widehat{HFK}(K,a))t^a
\end{tikzcd}
\]
The two routes compute the same decategorified invariant.
No quasi-isomorphism, natural transformation, or pairing theorem
between the BV--BFV boundary-state complex and a bordered or sutured
Floer chain complex follows from this triangle.
\end{theorem}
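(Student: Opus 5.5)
The plan is to check the two routes in the triangle separately and show that both land on the same normalized Laurent polynomial. The positive content is only that commutativity; the final sentence of the theorem is a scope statement and will be handled by noting that the argument uses nothing but scalar determinant data.

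\textbf{Step 1: fix the objects.} I will take $\mathsf{BF}_{\mathrm{rel}}(K)$ to be the finite-dimensional quadratic cellular BF system with mixed boundary condition \eqref{eq:mixboundary}. Its linear differential is the based relative complex $C^\bullet(X_K,R_-;F_\phi)$, with bases given by compatible cellular lifts. The arrow ``torsion'' sends this system to the class $[\tau(X_K,R_-;F_\phi)]\in F^\times/(\pm t^{\mathbb Z})$. This is well defined: the complex is acyclic over $F=\Q(t)$ by the argument of Lemma~\ref{lem:corner}, applied generically (the Fox complex is acyclic over $F$ and the annulus complex has differential $t-1\ne0$). Lift changes alter the class only by units $\pm t^n$, as explained in the determinant-line discussion. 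The vertical arrow ``normalize'' selects the representative that is symmetric under $t\leftrightarrow t^{-1}$ and takes the value $1$ at $t=1$. The diagonal arrow is the composite, i.e.\ it assigns $\mathscr B_K(t)$ of Definition~\ref{def:bfcharacter}.

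\textbf{Step 2: the torsion route.} Proposition~\ref{prop:relative}, used over $F$ rather than at a specialized $z$, gives $\tau(X_K,R_-;F_\phi)\doteq\Delta_K(t)$. Its proof goes through verbatim over $F$: multiplicativity for the based short exact sequence \eqref{eq:SESrelative}, the annulus torsion $(t-1)^{-1}$, and Proposition~\ref{prop:torsion}. The normalization therefore exists and is unique. Indeed, the equality $\pm t^n\Delta_K(t)=\pm t^{-n}\Delta_K(t^{-1})$ forces $n=0$, because $\Delta_K$ is symmetric and nonzero. The sign is then fixed by $\Delta_K(1)=1$. So the vertical arrow yields $\Delta_K(t)$, and the diagonal arrow yields $\mathscr B_K(t)=\Delta_K(t)$ by Corollary~\ref{cor:bfalex}. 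Hence the two BF-side routes agree.

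\textbf{Step 3: the Floer side.} Here I will invoke the Ozsv\'ath--Szab\'o identity $\sum_a\chi(\widehat{HFK}(K,a))t^a=\Delta_K(t)$ exactly as recorded in \eqref{eq:suturedhfk}. Both sides are already symmetric and equal to $1$ at $t=1$, so no further normalization is needed. As a consistency check, the Friedl--Juh\'asz--Rasmussen theorem identifies the sutured Floer Euler class of $(X_K,\gamma_\mu)$ with the relative torsion class of $(X_K,R_-)$ up to $\pm t^{\mathbb Z}$. After the same normalization this gives the same polynomial. Hence the triangle commutes.

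\textbf{Expected obstacle.} The main delicate point is convention matching. FJR use a specific torsion convention and an $R_-$ that differs from ours by collars around the sutures. I expect to handle this by homotopy invariance of relative torsion for the collar inclusion, and by observing that any inversion or unit discrepancy is absorbed into $\doteq$ and removed by the symmetric normalization. An inversion cannot occur, since $\Delta_K^{-1}$ is not a Laurent polynomial unless $\Delta_K=1$. For the scope clause, I will simply note that all maps in the triangle factor through $F^\times/(\pm t^{\mathbb Z})$. Scalar decategorified data of this kind cannot determine chain-level structure: by Lemma~\ref{lem:corner} the BF complexes are acyclic at nonresonant $z$, whereas $\widehat{HFK}$ is nonzero. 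So no quasi-isomorphism is asserted or implied.
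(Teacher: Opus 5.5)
Your proposal is correct and follows the paper's own argument. The relative torsion class comes from Proposition~\ref{prop:relative}, which, as you note, runs verbatim over $F=\Q(t)$; its symmetric normalized representative is $\mathscr B_K=\Delta_K$ by Corollary~\ref{cor:bfalex}; and the identification with $\sum_a\chi(\widehat{HFK}(K,a))t^a$ is the cited Ozsv\'ath--Szab\'o identity \eqref{eq:suturedhfk}. Your explicit uniqueness check for the symmetric representative, the Friedl--Juh\'asz--Rasmussen consistency remark, and the acyclicity comment on the scope clause are harmless elaborations of what the paper leaves implicit.
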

\begin{proof}
The determinant-line assignment is the finite-dimensional
acyclic torsion construction of \eqref{eq:detline}.  Its relative
meridional evaluation is computed in
Proposition~\ref{prop:relative}, and its normalized representative
is $\mathscr B_K=\Delta_K$ by Corollary~\ref{cor:bfalex}.
The identification with the graded Euler-characteristic polynomial
is \eqref{eq:suturedhfk}, with the cited Floer theorems as input.
\end{proof}

\begin{remark}[The pre-existing categorification result]
Hom, Lidman and Watson proved a genuine categorification of
Donaldson's Alexander-module TQFT using bordered Floer theory
\cite{HLW}.  Their theorem provides substantially stronger Floer
data than our triangle; it does \emph{not}, as stated, identify that
bordered theory with a BV--BFV state.  Claiming such an identification
would require an independent construction and proof.
\end{remark}

\section{Euler coefficients and Alexander support}
Fix coefficients in $\F_2$ and keep the Maslov-parity convention from
Section~\ref{sec:sutured}.  For later arithmetic formulas, write the
coefficient vector of the decategorified Floer invariant as
\begin{equation}\label{eq:chi-coeff}
 c_a(K)=\chi\bigl(\HF(K,a)\bigr)
 =\sum_m(-1)^m\dim_{\F_2}\HF_m(K,a).
\end{equation}
Ozsv\'ath and Szab\'o's Euler characteristic identity, together with the genus bound, gives
\begin{equation}\label{eq:decategorification}
 \Delta_K(t)=\sum_{a\in\Z}c_a(K)t^a,
 \qquad c_a(K)=0\ \text{when }|a|>g(K).
\end{equation}
The symmetry $c_a=c_{-a}$ and the identity $\sum_a c_a=1$ follow from the corresponding Alexander-polynomial properties. Notice that $c_a$ does not determine $\HF(K,a)$ as a graded group: cancellations in Euler characteristic discard information.

The established torsion--Floer connection is considerably richer than \eqref{eq:decategorification}. Friedl--Juh\'asz--Rasmussen compare sutured Floer Euler characteristics with torsion; Hom--Lidman--Watson use bordered Floer theory to categorify Donaldson's Alexander-module TQFT \cite{FJR,HLW}. We do not construct a map between BV--BFV state complexes and bordered Floer modules here. Our contribution is a finite-character reconstruction statement for precisely the common \emph{decategorified} information.

\section{Exact cyclotomic reconstruction and optimality}
\subsection{A trace-orthogonality lemma}
Let $p$ be prime, $\zeta=e^{2\pi i/p}$, and $E_p=\Q(\zeta)$. The Galois group $\operatorname{Gal}(E_p/\Q)$ consists of $\sigma_j(\zeta)=\zeta^j$ for $1\le j\le p-1$.
\begin{lemma}[Prime cyclotomic trace]\label{lem:trace}
For any integer $m$,
\begin{equation}\label{eq:trace-orth}
 \Tr_{E_p/\Q}(\zeta^m)=\begin{cases}p-1,&p\mid m,\\-1,&p\nmid m.\end{cases}
\end{equation}
\end{lemma}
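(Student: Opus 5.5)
The trace is by definition the sum of the Galois conjugates, so the plan is to write
\[
\Tr_{E_p/\Q}(\zeta^m)=\sum_{j=1}^{p-1}\sigma_j(\zeta^m)=\sum_{j=1}^{p-1}\zeta^{jm}
\]
and treat the two cases according to whether $\zeta^m=1$. This uses the description of $\operatorname{Gal}(E_p/\Q)$ recalled just before the statement, i.e.\ the irreducibility of $\Phi_p$. Because $\zeta^m\in E_p$, the trace is the sum of the images of $\zeta^m$ under the $p-1$ automorphisms $\sigma_j$.

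If $p\mid m$, then $\zeta^m=1$. Each $\sigma_j$ fixes $1$, so every summand equals $1$ and the sum is $p-1$.

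If $p\nmid m$, then multiplication by $m$ is a bijection of $(\Z/p\Z)^\times$, since $p$ is prime. As $j$ runs over $1,\dots,p-1$, the residues $jm \bmod p$ therefore run over the same set. Hence
\[
\sum_{j=1}^{p-1}\zeta^{jm}=\sum_{k=1}^{p-1}\zeta^k=\Phi_p(\zeta)-1=-1,
\]
because $\zeta$ is a root of $\Phi_p(t)=1+t+\cdots+t^{p-1}$. Equivalently, $\zeta^m$ is itself a primitive $p$th root of unity, and the trace of a primitive $p$th root is minus the subleading coefficient of its minimal polynomial $\Phi_p$, namely $-1$.

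The only point needing care is that the Galois group really is $\{\sigma_j\}$, which rests on the irreducibility of $\Phi_p$ (Eisenstein after $t\mapsto t+1$). The paper takes this as given, so there is no genuine obstacle.
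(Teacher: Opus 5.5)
Your proposal is correct and follows the paper's proof essentially verbatim: you write the trace as $\sum_{j=1}^{p-1}\zeta^{jm}$, observe that every summand is $1$ when $p\mid m$, and otherwise use that multiplication by $m$ permutes the nonzero residues mod $p$, giving $\sum_{j=1}^{p-1}\zeta^j=-1$. Your remark that the Galois description rests on the irreducibility of $\Phi_p$ is a correct addition that the paper simply assumes.
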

\begin{proof}
The trace is $\sum_{j=1}^{p-1}\zeta^{jm}$. If $p\mid m$, each summand is $1$. Otherwise multiplication by $m$ permutes the nonzero residue classes modulo $p$, so the sum is $\sum_{j=1}^{p-1}\zeta^j=-1$.
\end{proof}

\begin{proposition}[Exact extraction]\label{prop:extract}
Let $P(t)=\sum_{b=-g}^{g}c_bt^b\in\Z[t,t^{-1}]$ satisfy $P(1)=1$. If $p\ge2g+1$ is prime, then for every $-g\le a\le g$,
\begin{equation}\label{eq:extract}
 c_a=\frac{1+\Tr_{E_p/\Q}(\zeta^{-a}P(\zeta))}{p}.
\end{equation}
In particular the map $P\mapsto P(\zeta)$ is injective on the class of such normalized Laurent polynomials.
\end{proposition}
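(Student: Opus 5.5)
We expand the trace by linearity and then evaluate each resulting term with Lemma~\ref{lem:trace}. Since $P(\zeta)=\sum_{b=-g}^{g}c_b\zeta^b$ and the trace is $\Q$-linear, we get
\[
 \Tr_{E_p/\Q}\bigl(\zeta^{-a}P(\zeta)\bigr)=\sum_{b=-g}^{g}c_b\,\Tr_{E_p/\Q}(\zeta^{b-a}).
\]
Each summand is now a single root-of-unity trace, so the only question is for which $b$ the exponent $b-a$ is divisible by $p$.

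This is the step where the hypothesis $p\ge2g+1$ enters. For $-g\le a,b\le g$ we have $|b-a|\le 2g<p$. Hence $p\mid b-a$ holds exactly when $b=a$. By Lemma~\ref{lem:trace}, the term with $b=a$ contributes $(p-1)c_a$, and every other term contributes $-c_b$. The sum therefore equals
\[
 (p-1)c_a-\sum_{b\ne a}c_b \;=\; p\,c_a-\sum_b c_b \;=\; p\,c_a-P(1) \;=\; p\,c_a-1,
\]
using the normalization $P(1)=1$. Rearranging gives \eqref{eq:extract}.

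Injectivity follows from the formula. Suppose $P$ and $Q$ are two normalized Laurent polynomials supported in $[-g,g]$ with $P(\zeta)=Q(\zeta)$. Then the right-hand side of \eqref{eq:extract} is the same for both, so every coefficient $c_a$ with $|a|\le g$ agrees. Outside this range all coefficients vanish by assumption, so $P=Q$.

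The argument has no genuine obstacle. The one point to handle carefully is the bound $|b-a|\le 2g<p$, which ensures that no two distinct exponents in the support collide modulo $p$. Symmetry of $P$ is not needed; only the support bound and $P(1)=1$ are used.
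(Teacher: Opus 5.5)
Your proposal is correct and follows the paper's proof essentially line for line: expand the trace by linearity, use $|b-a|\le 2g<p$ so that Lemma~\ref{lem:trace} singles out $b=a$, obtain $p\,c_a-P(1)=p\,c_a-1$, and read off injectivity from the formula. Your remark that symmetry of $P$ is not needed is also accurate.
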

\begin{proof}
Since $|a-b|\le2g<p$, the only way $p$ can divide $b-a$ is $b=a$. Lemma \ref{lem:trace} therefore gives
\begin{align*}
 \Tr\bigl(\zeta^{-a}P(\zeta)\bigr)
 &=\sum_{b=-g}^{g}c_b\Tr(\zeta^{b-a})\\
 &=(p-1)c_a-\sum_{b\ne a}c_b
 =pc_a-P(1)=pc_a-1.
\end{align*}
Rearrange to obtain \eqref{eq:extract}. Equality of evaluation values implies equality of every $c_a$.
\end{proof}

\begin{proof}[Proof of Theorem \ref{thm:main}]
Lemma \ref{lem:boundary} and Lemma \ref{lem:nonvanish} give assertion (i). Proposition \ref{prop:torsion} gives $\Delta_K(\zeta)=(\zeta-1)\tau_K^{\mathrm{sym}}(\zeta)$, equivalently $\mathscr B_K(\zeta)=\Delta_K(\zeta)$ by Corollary~\ref{cor:bfalex}; apply Proposition \ref{prop:extract} to \eqref{eq:decategorification} to obtain \eqref{eq:main}. Assertion (iii) follows from injectivity in Proposition \ref{prop:extract}.
\end{proof}

\begin{corollary}[Arithmetic integrality constraints]\label{cor:integrality}
Under the hypotheses of Theorem \ref{thm:main}, let $\alpha=(\zeta-1)\tau_K^{\mathrm{sym}}(\zeta)=\Delta_K(\zeta)$. Then
\begin{equation}\label{eq:congruences}
 \Tr_{E_p/\Q}(\zeta^{-a}\alpha)\equiv -1\pmod p
 \qquad (-g\le a\le g),
\end{equation}
and these traces determine the entire Euler-characteristic polynomial. In particular, $\alpha$ is an algebraic integer, fixed by complex conjugation.
\end{corollary}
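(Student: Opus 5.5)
The plan is to read the congruences off the exact extraction formula and then obtain the remaining properties of $\alpha$ from the fact that $\Delta_K$ has integer coefficients and is symmetric.

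\textbf{Step 1: the congruences.} By Proposition~\ref{prop:torsion} and Corollary~\ref{cor:bfalex}, $\alpha=(\zeta-1)\tau_K^{\mathrm{sym}}(\zeta)=\Delta_K(\zeta)$. By \eqref{eq:decategorification}, $\Delta_K(t)=\sum_{|b|\le g}c_bt^b$ with $c_b\in\Z$ and $\Delta_K(1)=1$. Since $p\ge 2g+1$, Proposition~\ref{prop:extract} applies to $P=\Delta_K$. It gives
\[
\Tr_{E_p/\Q}(\zeta^{-a}\alpha)=pc_a-1 \qquad (-g\le a\le g).
\]
Because $c_a\in\Z$, this is congruent to $-1$ modulo $p$, which is \eqref{eq:congruences}.

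\textbf{Step 2: the traces determine the polynomial.} From Step~1, $c_a=(1+\Tr_{E_p/\Q}(\zeta^{-a}\alpha))/p$ for $|a|\le g$, and $c_a=0$ for $|a|>g$. So the $2g+1$ traces fix every coefficient, and hence the whole Euler-characteristic polynomial $\sum_a c_at^a$. This is also the injectivity statement of Theorem~\ref{thm:main}(iii).

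\textbf{Step 3: integrality and reality of $\alpha$.} The number $\alpha=\sum_b c_b\zeta^b$ is a $\Z$-linear combination of powers of $\zeta$. Each $\zeta^b$ with $b<0$ equals $\zeta^{pk+b}$ for suitable $k$, so $\alpha\in\Z[\zeta]$, which consists of algebraic integers. For the conjugation statement, note that $\overline{\zeta}=\zeta^{-1}$ and the $c_b$ are real. Hence $\overline{\alpha}=\Delta_K(\zeta^{-1})$, and this equals $\Delta_K(\zeta)=\alpha$ by the symmetry in \eqref{eq:alex}.

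\textbf{Main obstacle.} None of these steps is substantial. The only place needing care is tracking that $\alpha$ means the symmetric representative $\Delta_K(\zeta)$, not a $\pm\zeta^m$ multiple of it. The torsion value $\tau_K^{\mathrm{sym}}(\zeta)$ is defined by \eqref{eq:tau-def}, so this normalization is fixed, and both the trace formula and the reality argument depend on it.
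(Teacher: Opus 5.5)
Your proposal is correct and follows the paper's proof essentially step for step. The congruence is the identity $\Tr(\zeta^{-a}\alpha)=pc_a-1$ from Proposition~\ref{prop:extract}, integrality comes from $\alpha\in\Z[\zeta]$, and invariance under conjugation comes from $\overline{\alpha}=\Delta_K(\zeta^{-1})=\alpha$; you also spell out the recovery of the polynomial from the traces, which the paper leaves implicit.
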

\begin{proof}
The congruence is exactly $\Tr(\zeta^{-a}\alpha)=pc_a(K)-1$. Integrality is clear from $\alpha\in\Z[\zeta]$; symmetry of $\Delta_K$ gives $\overline\alpha=\Delta_K(\zeta^{-1})=\alpha$.
\end{proof}

\begin{remark}[What ``one value'' means]\label{rem:exact}
A single exact element of $E_p$ is not the same as an unstructured complex floating-point approximation to one embedding of that element. Formula \eqref{eq:main} requires the exact field trace, equivalently all Galois conjugates determined by an exact representation of $\alpha$. Small numerical error in a single complex approximation is not by itself sufficient to recover an exact algebraic number without further height or coefficient bounds.
\end{remark}

\subsection{A sharp counterexample below the genus threshold}
We now prove that the inequality $p\ge2g+1$ is the precise uniform threshold when one allows all knots of genus at most $g$. Define
\begin{equation}\label{eq:obstructor}
 R_p(t)=t^{-(p-1)/2}\Phi_p(t)\bigl(t+t^{-1}-2\bigr),
 \qquad P_p(t)=1+R_p(t).
\end{equation}
\begin{lemma}\label{lem:obstructor}
For any odd prime $p$, the Laurent polynomial $P_p$ satisfies
\[
 P_p(t^{-1})=P_p(t),\qquad P_p(1)=1,\qquad P_p(\zeta)=1,
\]
and is monic, with highest and lowest exponents equal to $\pm(p+1)/2$.
\end{lemma}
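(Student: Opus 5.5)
The plan is to verify the four assertions of Lemma~\ref{lem:obstructor} directly from the factorization \eqref{eq:obstructor}, treating $R_p$ as a product of two separately analysable factors. First, rewrite $t^{-(p-1)/2}\Phi_p(t)=\sum_{k=-(p-1)/2}^{(p-1)/2}t^k$. This is a symmetric Laurent polynomial, since its exponent set is symmetric about $0$ with all coefficients equal to $1$. The second factor $t+t^{-1}-2=t^{-1}(t-1)^2$ is visibly invariant under $t\leftrightarrow t^{-1}$. Hence $R_p(t^{-1})=R_p(t)$, and therefore $P_p(t^{-1})=P_p(t)$.

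Next come the two evaluations. At $t=1$ the second factor equals $1+1-2=0$, so $R_p(1)=0$ and $P_p(1)=1$. At $t=\zeta$ the first factor vanishes because $\Phi_p(\zeta)=0$, as $\zeta$ is a primitive $p$th root of unity. Thus $R_p(\zeta)=0$ and $P_p(\zeta)=1$. Since $P_p$ has integer coefficients, the same vanishing holds at every Galois conjugate $\zeta^j$; this is not needed for the lemma, but it will be what Theorem~\ref{thm:sharp} uses.

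For the degree and monicity claims, multiply the symmetric sum by $t+t^{-1}-2$. The top exponent is $(p-1)/2+1=(p+1)/2$, coming only from $t^{(p-1)/2}\cdot t$, with coefficient $1$. By the symmetry already shown, the lowest exponent is $-(p+1)/2$, also with coefficient $1$. Because $(p+1)/2\ge 2>0$, adding the constant $1$ does not disturb these extreme terms, so $P_p$ is monic of the stated span.

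For completeness I would also write out the middle coefficients. The product telescopes: the coefficient of $t^k$ is $1+1-2=0$ for $|k|\le(p-3)/2$, and it equals $-1$ at $k=\pm(p-1)/2$. This gives
\[
P_p(t)=t^{(p+1)/2}-t^{(p-1)/2}+1-t^{-(p-1)/2}+t^{-(p+1)/2}.
\]
This closed form makes every assertion immediate and serves as a cross-check, including at $t=1$, where it gives $1-1+1-1+1=1$.

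There is no real obstacle in the lemma itself. The only point needing care is the bookkeeping of the extreme exponents, so that the constant $1$ and the cancellations do not interfere. The substantive difficulty lies later, in realizing $P_p$ as the Alexander polynomial of a fibered knot $J_p$. I expect that to follow because $P_p$ is monic and symmetric with $P_p(1)=1$, which are the conditions under which a fibered realization is available.
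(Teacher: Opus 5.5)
Your proposal is correct and follows essentially the same route as the paper: both factors are symmetric, $t+t^{-1}-2$ vanishes at $t=1$, $\Phi_p(\zeta)=0$ kills $R_p(\zeta)$, and the extreme exponent $(p+1)/2$ comes only from $t^{(p-1)/2}\cdot t$, with coefficient $1$. Your extra closed form $P_p(t)=t^{(p+1)/2}-t^{(p-1)/2}+1-t^{-(p-1)/2}+t^{-(p+1)/2}$ is correct and is a useful cross-check; for $p=3$ it recovers the polynomial $t^{-2}-t^{-1}+1-t+t^2$, which is $\Delta_{T(2,5)}$ as in the paper's example.
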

\begin{proof}
The normalized cyclotomic polynomial $t^{-(p-1)/2}\Phi_p(t)$ is invariant under $t\leftrightarrow t^{-1}$; so is $t+t^{-1}-2$. The latter factor vanishes at $t=1$, while $\Phi_p(\zeta)=0$. The highest exponent in $R_p$ is $(p-1)/2+1=(p+1)/2$ and its coefficient is $1$. The same holds at the lowest exponent by symmetry, and adding $1$ does not change these extreme terms.
\end{proof}

\begin{proof}[Proof of Theorem \ref{thm:sharp}]
By Lemma \ref{lem:obstructor}, $P_p(t)$ is an admissible monic knot Alexander polynomial. A realization theorem of Stoimenow \cite{Stoimenow} produces a fibered knot $J_p$ whose normalized Alexander polynomial is $P_p$. For fibered knots, the genus equals the maximal exponent in the normalized Alexander polynomial; therefore $g(J_p)=(p+1)/2$. Since $P_p(\zeta)=1=\Delta_U(\zeta)$, division by $\zeta-1$ yields equality of their symmetrically normalized torsions at $\zeta$, although $P_p\ne1$. This knot lies in the genus-at-most-$g$ class when $g\ge(p+1)/2$, equivalently $p<2g+1$ for odd $p$.
\end{proof}

\begin{remark}\label{rem:small-primes}
The obstruction does not come from a zero of $\Delta_K(\zeta)$: prime cyclotomic Alexander evaluations never vanish. Rather, $\Phi_p(t)$ lies in the kernel of evaluation at $\zeta$, and multiplying it by the symmetric factor $t+t^{-1}-2$ preserves $P(1)=1$. The genus bound rules out exactly this type of aliasing when $p\ge2g+1$.
\end{remark}

\section{Cyclotomic norms and branched covers}
Let $\Sigma_p(K)$ be the $p$-fold cyclic branched cover of $S^3$ along $K$. The classical Fox resultant formula states that, whenever the resulting group is finite,
\begin{equation}\label{eq:foxbranched}
 |H_1(\Sigma_p(K);\Z)|
 =\left|\prod_{j=1}^{p-1}\Delta_K(\zeta^j)\right|.
\end{equation}
For prime $p$, Lemma \ref{lem:nonvanish} guarantees finiteness. The product on the right is an algebraic field norm.
\begin{theorem}[Simultaneous trace and norm recovery]\label{thm:norm}
Suppose $p\ge2g(K)+1$ is an odd prime. Set $\alpha_K=(\zeta-1)\tau_K^{\mathrm{sym}}(\zeta)$. Then the single exact element $\alpha_K\in\Z[\zeta]$ determines both
\begin{equation}\label{eq:norm}
 |H_1(\Sigma_p(K);\Z)|
 =\left|\Nm_{E_p/\Q}(\alpha_K)\right|
\end{equation}
and, for every Alexander grading $a$,
\begin{equation}\label{eq:trace-again}
 \chi\bigl(\HF(K,a)\bigr)=\frac{1+\Tr_{E_p/\Q}(\zeta^{-a}\alpha_K)}{p}.
\end{equation}
For odd $p$, the norm in \eqref{eq:norm} is positive.
\end{theorem}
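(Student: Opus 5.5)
The plan is to reduce both formulas to earlier results through the identity $\alpha_K=\Delta_K(\zeta)$. This identity holds by Proposition~\ref{prop:torsion} and the definition \eqref{eq:tau-def}, since $(\zeta-1)\tau_K^{\mathrm{sym}}(\zeta)=\Delta_K(\zeta)$. It is a genuine element of $\Z[\zeta]$: $\Delta_K\in\Z[t,t^{-1}]$, and $\zeta^{-1}=\zeta^{p-1}\in\Z[\zeta]$.

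The trace formula \eqref{eq:trace-again} is then Theorem~\ref{thm:main}(ii) verbatim. The hypothesis $p\ge 2g(K)+1$ with $g=g(K)$ is exactly what that theorem needs. For $|a|>g(K)$ the statement should be read with $c_a=0$ and restricted to $-g\le a\le g$ as in Proposition~\ref{prop:extract}, so I will state the range explicitly.

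For the norm formula \eqref{eq:norm}, I will first use the Galois description $\sigma_j(\zeta)=\zeta^j$ for $1\le j\le p-1$. Since $\sigma_j$ is a ring homomorphism and $\Delta_K$ has integer coefficients, $\sigma_j(\Delta_K(\zeta))=\Delta_K(\zeta^j)$. Hence $\Nm_{E_p/\Q}(\alpha_K)=\prod_{j=1}^{p-1}\Delta_K(\zeta^j)$. By Lemma~\ref{lem:nonvanish} every factor is nonzero, so the branched-cover homology is finite and Fox's formula \eqref{eq:foxbranched} applies, giving \eqref{eq:norm}. Both quantities are functions of the single element $\alpha_K$, since trace and norm are determined by $\alpha_K$ as an element of $E_p$.

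The remaining point, and the only mildly nontrivial step, is positivity of the norm for odd $p$. I will pair the conjugates $j$ and $p-j$, which are distinct because $p$ is odd. Symmetry $\Delta_K(t^{-1})=\Delta_K(t)$ gives
\[\Delta_K(\zeta^{p-j})=\Delta_K(\overline{\zeta^j})=\overline{\Delta_K(\zeta^j)}.\]
So the product equals $\prod_{j=1}^{(p-1)/2}|\Delta_K(\zeta^j)|^2>0$, with nonvanishing again from Lemma~\ref{lem:nonvanish}. Alternatively, Corollary~\ref{cor:integrality} shows each conjugate is real, since $\alpha_K$ is fixed by complex conjugation. The pairing argument above, however, already suffices and avoids any sign bookkeeping. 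The absolute value in \eqref{eq:norm} is therefore redundant, which completes the plan.
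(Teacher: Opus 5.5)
Your proposal is correct and essentially matches the paper's proof: the trace formula is Theorem~\ref{thm:main}(ii), the norm formula is Fox's formula \eqref{eq:foxbranched} combined with $\sigma_j(\alpha_K)=\Delta_K(\zeta^j)$, and positivity comes from pairing the embeddings $j$ and $p-j$. Your displayed conjugation identity uses only that $\Delta_K$ has integer coefficients, not its symmetry, whereas the paper uses symmetry to make each paired value real.

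Your restriction of \eqref{eq:trace-again} to $-g(K)\le a\le g(K)$ is also warranted, because the right-hand side depends only on $a$ modulo $p$. At $a=p$ it returns $c_0(K)$, which is odd because $\Delta_K(1)=1$, while the left-hand side vanishes.
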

\begin{proof}
The trace identity is Theorem \ref{thm:main}. The norm identity follows from \eqref{eq:foxbranched} and the fact that $\Delta_K(\zeta^j)=\sigma_j(\alpha_K)$. Since the embeddings indexed by $j$ and $p-j$ take the same real nonzero value on the symmetric Alexander polynomial, the norm is a product of positive squares.
\end{proof}
\begin{remark}
The trace and norm yield different information from the same arithmetic datum. The norm alone is far from injective on Alexander polynomials, and neither the trace nor the norm determines the full bigraded knot Floer group. Theorem \ref{thm:norm} is a package of two known types of topological information, not a recovery of Floer differentials or torsion-sensitive chain complexes.
\end{remark}

\section{Connected-sum gluing and Floer convolution}\label{sec:gluing}
Let $K_1$ and $K_2$ be knots. Their connected-sum exterior decomposes along a properly embedded meridional annulus $A\simeq S^1\times[0,1]$ into the two exteriors, with the same meridional holonomy $t$ on all three pieces. The corresponding local system on $A$ is acyclic when $t\ne1$, and its torsion in our convention is $\tau(A;t)=(t-1)^{-1}$.

\subsection{Relative gluing is the natural sutured multiplication}
The factor $t-1$ appearing in absolute gluing vanishes when we
formulate the invariant in the meridional relative boundary
condition.  This is not an ad hoc renormalization: the annulus is
already part of the sutured boundary data.

\begin{proposition}[Relative multiplicativity]\label{prop:relative-glue}
For knots $K_1,K_2$ and $K=K_1\#K_2$, choose compatible meridional
relative boundary data.  Their symmetrized relative BF characters
satisfy the \emph{exact} identity
\begin{equation}\label{eq:relglue}
 \mathscr B_{K_1\#K_2}(t)
 =\mathscr B_{K_1}(t)\mathscr B_{K_2}(t).
\end{equation}
Before symmetrization the identity is an equality of torsion
classes modulo $\pm t^{\mathbb Z}$, compatible with the gluing
of determinant lines in the cellular BF system.
\end{proposition}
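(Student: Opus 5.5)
The plan is to prove \eqref{eq:relglue} first at the level of torsion classes modulo $\pm t^{\mathbb Z}$ by cellular multiplicativity, and then to pass to symmetrized representatives. Choose a CW structure on $X_K$ in which the splitting annulus $A$ is a subcomplex and $X_K=X_{K_1}\cup_A X_{K_2}$. Choose the meridional boundary annulus $R_-$ of $X_K$ so that it decomposes compatibly, $R_-=R_-^{(1)}\cup_{\partial} R_-^{(2)}$, or, more simply, so that it lies in one piece, say $X_{K_1}$. The two descriptions are homotopy equivalent as pairs after collapsing collars, and in either case every piece retracts onto a meridional circle.

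The Mayer--Vietoris short exact sequence of twisted cellular chain complexes is
\[
0\to C_\bullet(A;F_\phi)\to C_\bullet(X_{K_1};F_\phi)\oplus C_\bullet(X_{K_2};F_\phi)\to C_\bullet(X_K;F_\phi)\to0 .
\]
All terms are acyclic over $F=\Q(t)$ by Lemma~\ref{lem:corner} and Proposition~\ref{prop:torsion}. Multiplicativity of torsion then gives
\[
\tau(X_K)\,\tau(A)\doteq\tau(X_{K_1})\,\tau(X_{K_2}).
\]
Since $\tau(A)=(t-1)^{-1}$, this becomes $\tau(X_K)\doteq\tau(X_{K_1})\tau(X_{K_2})(t-1)$.

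Next I apply Proposition~\ref{prop:relative} to each knot, which holds equally over $F$ because all complexes are acyclic there. It gives $\tau(X_{K_i},R_-^{(i)})\doteq(t-1)\tau(X_{K_i})$, and likewise for $K$. Substituting, the single surplus factor $(t-1)$ from the annulus combines with the three relative corrections $(t-1)^{\pm1}$: one factor appears on the left and two on the right. The result is
\[
\tau(X_K,R_-)\doteq\tau(X_{K_1},R_-^{(1)})\,\tau(X_{K_2},R_-^{(2)}).
\]
This is the unsymmetrized statement. A cleaner alternative is to run Mayer--Vietoris directly on relative complexes, as a sequence of pairs glued along $(A,A\cap R_-)$ where $A\cap R_-$ is an interval, hence contractible. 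The relative complex of $(A,\text{arc})$ is then acyclic with trivial torsion class, and the identity falls out in one step. I would present whichever version makes the compatible bases most transparent.

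The step I expect to need the most care is the bookkeeping of bases and lifts, so that multiplicativity holds up to $\pm t^n$ only. The short exact sequences must be based-compatible, with the same cells of $A$ lifted identically on both sides. Any discrepancy is a deck translate and therefore a unit, which Section~\ref{sec:BVBFV} already absorbs. Finally, I pass to normalized representatives. By Definition~\ref{def:bfcharacter}, $\mathscr B_{K_1}\mathscr B_{K_2}$ is a representative of the product class, is symmetric under $t\leftrightarrow t^{-1}$, and equals $1$ at $t=1$. Uniqueness of such a representative gives exact equality with $\mathscr B_{K}$. As a consistency check, Corollary~\ref{cor:bfalex} reduces this to the classical multiplicativity $\Delta_{K_1\#K_2}=\Delta_{K_1}\Delta_{K_2}$.
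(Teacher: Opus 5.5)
Your main argument is correct and is essentially the paper's proof. You run Mayer--Vietoris on the absolute complexes along the connected-sum annulus $A$, which produces the surplus factor $\tau(A)^{-1}=t-1$. You then divide each absolute torsion by the torsion of its own meridional annulus, as in Proposition~\ref{prop:relative}. Finally you remove the $\pm t^{n}$ ambiguity by uniqueness of the symmetric representative with value $1$ at $t=1$.

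The ``cleaner alternative'' you sketch, however, is wrong as stated, so do not present that version.

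\emph{The intersection is never an arc.} The annulus $A$ meets $\partial X_K$ only in $\partial A$, which consists of two meridian circles. So for a compatibly placed meridional annulus $R_-$, the intersection $A\cap R_-$ is empty or a union of meridian circles.

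\emph{An arc would be fatal anyway.} $L_z$ is trivial on a contractible arc, so $H^0(\mathrm{arc};L_z)$ is one-dimensional. Since $H^\bullet(A;L_z)=0$, the long exact sequence of the pair gives $H^1(A,\mathrm{arc};L_z)\neq0$. The interface pair is then not acyclic and has no torsion. Such an arc would also cut $R_-$ so that one piece is a disk, and $(X_{K_i},\text{disk})$ is not acyclic either.

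\emph{The correct one-step version.} Put one component $m$ of $\partial A$ in the interior of $R_-$ and the other component outside it. Then the pieces $R_-^{(i)}=R_-\cap X_{K_i}$ are meridional annuli meeting exactly in $m=R_-\cap A$. The pair $(A,m)$ is acyclic, with $\tau(A,m)\doteq\tau(A)/\tau(m)\doteq1$. The relative Mayer--Vietoris sequence then gives $\tau(X_K,R_-)\doteq\tau(X_{K_1},R_-^{(1)})\,\tau(X_{K_2},R_-^{(2)})$ directly.
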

\begin{proof}
Decompose the exterior along the meridional connected-sum annulus
$A$.  For nontrivial holonomy the Mayer--Vietoris cellular sequence
\[
0\to C_\bullet(A;L_z)\to
C_\bullet(X_{K_1};L_z)\oplus C_\bullet(X_{K_2};L_z)
\to C_\bullet(X_K;L_z)\to0
\]
is a short exact sequence of acyclic complexes.  Its determinant
lines multiply with an inverse factor $\tau(A;z)^{-1}=z-1$,
which gives Proposition~\ref{prop:gluing} below.  Dividing each
absolute torsion by the torsion of its meridional boundary annulus
now gives
\[
\frac{\tau(X_K;z)}{\tau(R_-;z)}
\doteq
\frac{\tau(X_{K_1};z)}{\tau(R_{-,1};z)}
\frac{\tau(X_{K_2};z)}{\tau(R_{-,2};z)}.
\]
The same identity follows from the relative Mayer--Vietoris
sequences after replacing the boundary annuli by compatible
subcomplexes.  Finally, symmetric representatives have constant
value $1$ at $t=1$, so the residual sign and monomial ambiguity is
removed.  This proves \eqref{eq:relglue}.
\end{proof}

\begin{proposition}[Scalar torsion gluing]\label{prop:gluing}
With the symmetric normalization \eqref{eq:tau-def},
\begin{equation}\label{eq:glue}
 \tau_{K_1\#K_2}^{\mathrm{sym}}(t)
 =(t-1)\tau_{K_1}^{\mathrm{sym}}(t)\tau_{K_2}^{\mathrm{sym}}(t).
\end{equation}
This agrees with the multiplicativity of acyclic torsion for the Mayer--Vietoris short exact sequence, after correcting by the annulus torsion.
\end{proposition}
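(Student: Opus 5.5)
The identity \eqref{eq:glue} can be derived in two independent ways, and I would present both so that the normalization is pinned down rather than only known up to units.

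\textbf{Algebraic route.} Start from the classical multiplicativity $\Delta_{K_1\#K_2}(t)=\Delta_{K_1}(t)\Delta_{K_2}(t)$ of symmetrically normalized Alexander polynomials. Equivalently, Corollary~\ref{cor:bfalex} combined with Proposition~\ref{prop:relative-glue} gives $\mathscr B_{K}=\mathscr B_{K_1}\mathscr B_{K_2}$. Substituting the definition \eqref{eq:tau-def}, $\Delta_{K_i}=(t-1)\tau^{\mathrm{sym}}_{K_i}$, yields
\[
(t-1)\tau^{\mathrm{sym}}_{K}(t)=(t-1)^2\tau^{\mathrm{sym}}_{K_1}(t)\tau^{\mathrm{sym}}_{K_2}(t).
\]
Dividing by $t-1$ in $F=\Q(t)$ gives \eqref{eq:glue} exactly, with no residual $\pm t^m$ ambiguity, because both sides are defined through the fixed symmetric representatives.

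\textbf{Torsion route.} For the consistency claim, I would use the Mayer--Vietoris short exact sequence from the proof of Proposition~\ref{prop:relative-glue}:
\[
0\to C_\bullet(A;L_z)\to C_\bullet(X_{K_1};L_z)\oplus C_\bullet(X_{K_2};L_z)\to C_\bullet(X_K;L_z)\to0.
\]
For $z\ne1$ and $z$ nonresonant for both $K_i$, all terms are acyclic by Lemma~\ref{lem:corner}. Multiplicativity of Milnor torsion for based short exact sequences with compatible lifts gives
\[
\tau(X_{K_1})\tau(X_{K_2})\doteq\tau(A)\,\tau(X_K).
\]
Since the annulus retracts to the meridional circle \eqref{eq:circle-complex}, $\tau(A)=(z-1)^{-1}$. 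Hence $\tau(X_K)\doteq(z-1)\tau(X_{K_1})\tau(X_{K_2})$. Inserting Proposition~\ref{prop:torsion} for each factor reproduces \eqref{eq:glue} up to $\pm z^{\mathbb Z}$.

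\textbf{Main obstacle.} The delicate step is justifying torsion multiplicativity in the torsion route. This requires the cellulations of $X_{K_1}$, $X_{K_2}$ and $A$ to be chosen compatibly, so that $A$ is a common subcomplex, with lifts chosen coherently on the universal abelian cover. The based sequence must also be checked to be compatible, so that the correction term is a unit $\pm z^n$ and not an arbitrary scalar. This can be handled by the standard argument: the quotient bases match cell-by-cell, so the transition determinants are monomials.

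\textbf{Fixing the normalization.} The unit ambiguity left by the torsion route is then removed by the symmetry and evaluation argument. Both sides of \eqref{eq:glue}, multiplied by $t-1$, are symmetric Laurent polynomials with value $1$ at $t=1$. The only symmetric unit $\pm t^n$ is $\pm1$, and evaluation at $t=1$ forces the sign $+1$. This yields the exact identity.
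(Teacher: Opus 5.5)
Your proposal is correct and is essentially the paper's proof. The exact identity comes from dividing $\Delta_{K_1\#K_2}=\Delta_{K_1}\Delta_{K_2}$ by $t-1$, and the consistency claim comes from torsion multiplicativity for the Mayer--Vietoris sequence along the meridional annulus with $\tau(A)=(t-1)^{-1}$, with the units removed by the symmetric normalization; the one caveat is that you should not cite Proposition~\ref{prop:relative-glue} as an ``equivalent'' input, because its proof in the paper runs through this same Mayer--Vietoris computation and explicitly points forward to the present proposition, so the classical Alexander multiplicativity is what must carry your algebraic route.
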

\begin{proof}
The standard connected-sum identity is $\Delta_{K_1\#K_2}=\Delta_{K_1}\Delta_{K_2}$. Dividing this identity by $t-1$ gives \eqref{eq:glue} exactly, with all symmetric normalizations fixed. Independently, cellular Mayer--Vietoris gives a short exact sequence of acyclic twisted complexes $0\to C_*(A)\to C_*(X_{K_1})\oplus C_*(X_{K_2})\to C_*(X_{K_1\#K_2})\to0$. Multiplicativity of torsion gives $\tau(X_{K_1\#K_2})\doteq\tau(X_{K_1})\tau(X_{K_2})/\tau(A)$, whence the factor $t-1$; symmetric normalization removes the units.
\end{proof}

\begin{theorem}[Trace--gluing compatibility]\label{thm:convolution}
Let $g_i\ge g(K_i)$ and take an odd prime $p\ge2(g_1+g_2)+1$. For $\zeta=e^{2\pi i/p}$, set $\alpha_i=(\zeta-1)\tau_{K_i}^{\mathrm{sym}}(\zeta)=\Delta_{K_i}(\zeta)$. Then
\begin{equation}\label{eq:glue-trace}
 \frac{1+\Tr_{E_p/\Q}(\zeta^{-a}\alpha_1\alpha_2)}p
 =\sum_{b\in\Z}c_b(K_1)c_{a-b}(K_2)
 =c_a(K_1\#K_2).
\end{equation}
Equivalently, the torsion-gluing product corresponds under cyclotomic coefficient extraction to the convolution of knot Floer Euler characteristics.
\end{theorem}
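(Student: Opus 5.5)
The plan is to reduce everything to Proposition~\ref{prop:extract}, applied to the product polynomial $P=\Delta_{K_1}\Delta_{K_2}$. By the standard connected-sum identity (the one used in the proof of Proposition~\ref{prop:gluing}), $\Delta_{K_1\#K_2}=\Delta_{K_1}\Delta_{K_2}$. Equivalently, by Proposition~\ref{prop:relative-glue} and Corollary~\ref{cor:bfalex}, $\mathscr B_{K_1\#K_2}=\mathscr B_{K_1}\mathscr B_{K_2}$. Evaluating at $\zeta$ gives $\alpha_1\alpha_2=\Delta_{K_1\#K_2}(\zeta)$. The identity $\alpha_i=(\zeta-1)\tau^{\mathrm{sym}}_{K_i}(\zeta)=\Delta_{K_i}(\zeta)$ comes from the normalization \eqref{eq:tau-def}.

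Next I would check the hypotheses of Proposition~\ref{prop:extract} for $P=\Delta_{K_1}\Delta_{K_2}$. By \eqref{eq:decategorification}, each $\Delta_{K_i}$ is supported in $[-g_i,g_i]$, so the product is supported in $[-(g_1+g_2),g_1+g_2]$. It satisfies $P(1)=\Delta_{K_1}(1)\Delta_{K_2}(1)=1$. Since $p\ge2(g_1+g_2)+1$, Proposition~\ref{prop:extract} with $g=g_1+g_2$ shows that $\bigl(1+\Tr_{E_p/\Q}(\zeta^{-a}\alpha_1\alpha_2)\bigr)/p$ equals the $t^a$-coefficient of $\Delta_{K_1}\Delta_{K_2}$ for every $|a|\le g_1+g_2$. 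Multiplying Laurent polynomials, this coefficient is the Cauchy convolution $\sum_b c_b(K_1)c_{a-b}(K_2)$; the sum is finite because only $|b|\le g_1$ and $|a-b|\le g_2$ contribute. This proves the first equality in \eqref{eq:glue-trace}.

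For the second equality, $g(K_1\#K_2)=g(K_1)+g(K_2)\le g_1+g_2$ by additivity of genus; actually only the support bound on $\Delta$ is needed. Then \eqref{eq:decategorification} applied to $K_1\#K_2$ identifies the $t^a$-coefficient of $\Delta_{K_1\#K_2}=\Delta_{K_1}\Delta_{K_2}$ with $c_a(K_1\#K_2)=\chi(\HF(K_1\#K_2,a))$. The final sentence of the theorem is just a restatement. In the torsion language, one can instead quote Proposition~\ref{prop:gluing}: $(\zeta-1)\tau^{\mathrm{sym}}_{K_1\#K_2}(\zeta)=(\zeta-1)^2\tau^{\mathrm{sym}}_{K_1}(\zeta)\tau^{\mathrm{sym}}_{K_2}(\zeta)=\alpha_1\alpha_2$. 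This shows that the extra factor $t-1$ in the gluing formula is exactly what makes the trace formula apply to the glued knot.

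The only delicate point is the range of $a$. For $|a|>g_1+g_2$ both convolution sides vanish, but the trace formula is only guaranteed for $|a|\le g_1+g_2$, since outside that window aliasing modulo $p$ can occur. I would therefore state the identity for $-(g_1+g_2)\le a\le g_1+g_2$, which is the range implicit in Theorem~\ref{thm:main}. The genus threshold for the sum, rather than for the individual knots, is exactly what is needed, since the product's support width is $2(g_1+g_2)$. Otherwise the argument is routine bookkeeping.
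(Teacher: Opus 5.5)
Your proof is correct and follows the paper's argument: multiplicativity $\Delta_{K_1\#K_2}=\Delta_{K_1}\Delta_{K_2}$ gives $\alpha_1\alpha_2=\Delta_{K_1\#K_2}(\zeta)$ with support in $[-(g_1+g_2),g_1+g_2]$, and Proposition~\ref{prop:extract} with $g=g_1+g_2$ extracts the convolution coefficients; for the last equality the paper cites the knot Floer connected-sum formula, while you apply \eqref{eq:decategorification} to $K_1\#K_2$, which is an inessential difference. Your restriction on $a$ is necessary, not merely cautious, and the paper's proof uses it implicitly: the trace side is $p$-periodic in $a$, so at $a=p$ it equals $c_0(K_1\#K_2)$, which is odd because $c_0+2\sum_{b>0}c_b=1$, while the other two expressions vanish (the identity does hold for all $|a|\le(p-1)/2$).
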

\begin{proof}
Multiplicativity gives $\alpha_1\alpha_2=\Delta_{K_1\#K_2}(\zeta)$; the degree support is contained in $[-(g_1+g_2),g_1+g_2]$. Apply Proposition \ref{prop:extract} to the product. The coefficient of $t^a$ in $\Delta_{K_1}(t)\Delta_{K_2}(t)$ is the displayed convolution; it equals the Euler characteristic of the knot Floer connected-sum tensor product by the connected-sum formula of \cite{OzsvathSzaboKnots}.
\end{proof}

\begin{remark}[Scope of gluing]
Proposition \ref{prop:gluing} is a scalar torsion statement, which is compatible with the expected determinant-line behavior of Abelian BF theory. It is \emph{not} a construction of a BV--BFV gluing equivalence with the bordered Floer pairing theorem. Such a categorified comparison would require additional chain-level and boundary-polarization data, and is not proved here.
\end{remark}

\section{Examples}
\subsection{The trefoil and the figure-eight knot}
Both the right-handed trefoil $3_1$ and the figure-eight knot $4_1$ have genus $1$, so $p=3$ is admissible. Let $\zeta=e^{2\pi i/3}$, so $\zeta+\zeta^{-1}=-1$ and $E_3=\Q(\sqrt{-3})$.

For the trefoil,
\begin{equation}\label{eq:trefoil}
 \Delta_{3_1}(t)=t^{-1}-1+t,\qquad
 \alpha_{3_1}=\Delta_{3_1}(\zeta)=-2.
\end{equation}
The trace of the rational element $-2$ is $-4$, and
\[
 \Tr(\zeta^{-1}\alpha_{3_1})=-2(\zeta+\zeta^{-1})=2.
\]
Consequently $c_0=(1-4)/3=-1$ and $c_{\pm1}=(1+2)/3=1$. The norm is $\Nm(-2)=4$, recovering $|H_1(\Sigma_3(3_1))|=4$.

For the figure-eight,
\begin{equation}\label{eq:figure-eight}
 \Delta_{4_1}(t)=3-t-t^{-1},\qquad \alpha_{4_1}=4.
\end{equation}
Now $\Tr(4)=8$ and $\Tr(4\zeta^{-1})=-4$, so $c_0=3$ and $c_{\pm1}=-1$. The norm gives $|H_1(\Sigma_3(4_1))|=16$.

\subsection{The cinquefoil and the sharpness threshold}
The $(2,5)$ torus knot has genus $2$ and
\begin{equation}\label{eq:cinquefoil}
 \Delta_{T(2,5)}(t)=t^{-2}-t^{-1}+1-t+t^2.
\end{equation}
At a primitive cube root $\zeta_3$, one has $t^{-2}+t^2=t+t^{-1}$, hence $\Delta_{T(2,5)}(\zeta_3)=1=\Delta_U(\zeta_3)$. Therefore a single order-three torsion sector cannot distinguish even the Euler-characteristic polynomials of these two knots. The obstruction polynomial \eqref{eq:obstructor} for $p=3$ is precisely $\Delta_{T(2,5)}$.

At a primitive fifth root $\zeta_5$, by contrast, Theorem \ref{thm:main} applies. Set $s=\zeta_5+\zeta_5^{-1}$. Since $1+s+(\zeta_5^2+\zeta_5^{-2})=0$,
\[
 \alpha_{T(2,5)}=1-s+(\zeta_5^2+\zeta_5^{-2})=-2s.
\]
The trace identity gives
\[
 \Tr(\alpha)=4,\qquad \Tr(\zeta_5^{-1}\alpha)=-6,\qquad
 \Tr(\zeta_5^{-2}\alpha)=4,
\]
therefore $c_0=1$, $c_{\pm1}=-1$, and $c_{\pm2}=1$, as expected. The field norm is $\Nm_{E_5/\Q}(-2s)=16$.

\subsection{Connected sum of two trefoils}
For $K=3_1\#3_1$, the genus is $2$ and
\[
 \Delta_K(t)=(t^{-1}-1+t)^2
 =t^{-2}-2t^{-1}+3-2t+t^2.
\]
The first admissible prime is $p=5$. In particular the Alexander-graded Euler characteristics are $(1,-2,3,-2,1)$ in degrees $-2,-1,0,1,2$, agreeing with the convolution of the trefoil coefficients. Torsion gluing gives $\tau_{K}^{\mathrm{sym}}=(t-1)(\tau_{3_1}^{\mathrm{sym}})^2$, and Theorem \ref{thm:convolution} reconstructs exactly those coefficients from the fifth-root torsion value.

\section{Limits of reconstruction from scalar torsion}\label{sec:nogo}
A natural temptation is to regard the collection of all complex rank-one BF torsion sectors as a model of knot Floer homology itself, rather than of its Euler characteristic. The following elementary obstruction specifies why a further enhancement is necessary.

\begin{proposition}[Scalar torsion does not determine knot Floer homology]\label{prop:nogo}
There is no assignment that recovers the isomorphism type of the Alexander-graded groups $\HF(K,a)$ from the family of scalar normalized rank-one torsions $\{\tau_K^{\mathrm{sym}}(z)\}_{z\ne1}$ alone, even when all values, or equivalently the entire rational torsion function, are given.
\end{proposition}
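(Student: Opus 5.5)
The plan is to exhibit two knots $K_1,K_2$ with the same Alexander polynomial but non-isomorphic Alexander-graded knot Floer groups. Any putative assignment would then receive identical input for $K_1$ and $K_2$ and would have to return different outputs.

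The first step is to observe that the data $\{\tau_K^{\mathrm{sym}}(z)\}_{z\ne1}$ depend only on $\Delta_K$. By \eqref{eq:tau-def}, $\tau_K^{\mathrm{sym}}(t)=\Delta_K(t)/(t-1)$ as a rational function. Conversely, the rational function determines $\Delta_K$ after multiplying by $t-1$. So equality of Alexander polynomials is equivalent to equality of the entire scalar torsion family, and this justifies the ``equivalently'' clause of the statement.

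The second step, which is the only real content, is to choose the pair of knots. I would take the unknot $U$ and a knot $K$ with $\Delta_K=1$ but $g(K)\ge1$; the Kinoshita--Terasaka knot, or the Conway knot, or any untwisted Whitehead double of a nontrivial knot with suitable behaviour, would do. Then I invoke the theorem of Ozsv\'ath--Szab\'o that knot Floer homology detects the genus: $g(K)=\max\{a:\HF(K,a)\ne0\}$ \cite{OzsvathSzaboKnots}. It follows that $\HF(K,g(K))\ne0$ with $g(K)\ge1$, whereas $\HF(U,a)=0$ for all $a\ne0$. So the Alexander-graded groups differ in grading $a=g(K)$, even though $\tau_K^{\mathrm{sym}}=\tau_U^{\mathrm{sym}}=1/(t-1)$.

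As a cleaner alternative that avoids genus detection, I could use two knots with the same nontrivial Alexander polynomial but different ranks, for example a knot and a mutant-free example from the tables. The genus-detection route is the most robust, however, since it needs only one cited theorem and one classical example. I would also remark that the same argument applies within a fixed Maslov convention, because the Euler characteristic cancels exactly the pair of generators in gradings $(m,a)$ and $(m\pm1,a)$ that survive in $\HF(K,g(K))$.

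The main obstacle is purely one of citation: I must ensure that the chosen knot truly has $\Delta_K=1$ and positive genus. The Kinoshita--Terasaka knot has genus $2$ and trivial Alexander polynomial, which is classical. The remaining steps are formal.
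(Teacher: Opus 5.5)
Your proposal is correct and follows the paper's proof: compare the unknot with the genus-two Kinoshita--Terasaka knot, note that both have $\Delta=1$ and hence torsion $1/(t-1)$ at every $z\ne1$, and use genus detection to get $\HF(KT,2)\ne0$ while $\HF(U,a)=0$ for $a\ne0$. Two small corrections: the genus-detection theorem should be cited as \cite{OzsvathSzaboGenus}, not \cite{OzsvathSzaboKnots}, and your aside about the Euler characteristic cancelling ``exactly a pair'' of generators in the top Alexander grading is both unnecessary and not justified in general.
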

\begin{proof}
Let $U$ be the unknot and let $KT$ be a nontrivial Kinoshita--Terasaka knot of genus $2$. The latter satisfies $\Delta_{KT}(t)=1=\Delta_U(t)$, a standard example discussed in the knot Floer calculations of Ozsv\'ath and Szab\'o \cite{OzsvathSzaboMutation}. Hence both knots have identical normalized rank-one torsions $1/(t-1)$, and therefore identical evaluations at every $z\ne1$. Nevertheless $\HF(U,a)=0$ for all $a\ne0$, whereas genus detection for knot Floer homology yields $\HF(KT,2)\ne0$ \cite{OzsvathSzaboGenus}. Their Alexander-graded knot Floer homologies cannot be isomorphic.
\end{proof}

\begin{remark}
The proposition is a limitation on reconstruction from \emph{scalar Abelian torsion}. It does not rule out categorification using additional BV--BFV boundary structures, non-Abelian sectors, observables, or a Floer-theoretic chain complex. It also explains why even an exact knowledge of all the cyclotomic trace and norm data in the preceding sections stops at decategorification. 
\end{remark}

\section{Linear BV--BFV gluing}
\label{sec:chaininterface}
The preceding comparison uses torsion, which discards the differential
up to acyclic simple-homotopy information.  We now retain the entire
\emph{cellular cochain complexes} long enough to implement cutting and
regluing by an explicit chain homotopy.  The construction concerns the linear BV--BFV model and its cellular
restriction maps.

\subsection{The boundary \texorpdfstring{$Q$}{Q}-complex and two explicit contractions}
Let $k$ be a field and $z\in k^\times$ with $z\neq1$, and put
$u=z-1$.  Fix a cellular decomposition of a boundary interface
$Y$ together with its twisted local system $L_z$.
The linearized $\mathcal A$-part of the BFV field complex is
$C^\bullet(Y;L_z)[1]$; the $\mathcal B$-part is the dual-coefficient
complex $C^\bullet(Y;L_z^\vee)[1]$ on the dual cell structure.
Their $Q$-operators are the cellular coboundaries.  The finite-dimensional
boundary symplectic pairing is induced by evaluation
$L_z\otimes L_z^\vee\to k$ and cellular Poincar\'e duality.  These
statements specify the \emph{linearized boundary fields}; a quantum
BFV state complex also requires a polarization, a measure and
possibly corner data.

\begin{lemma}[Explicit meridional interface contractions]
\label{lem:explicit-contract}
For $z\neq1$, the following two complexes admit the displayed
contracting homotopies $h$ satisfying $dh+hd=1$ and $h^2=0$:
\begin{enumerate}[label=\textup{(\roman*)},leftmargin=2em]
\item On a meridional annulus $A$, deformation retracting onto its
core circle, use
\[
D_A^0=k\xrightarrow{\,u\,}D_A^1=k,
\qquad h_A:D_A^1\to D_A^0,\quad h_A(b)=u^{-1}b.
\]
\item On a torus $T^2$ with meridional holonomy $z$ and longitudinal
holonomy $1$, use the complex
\[
D_T^0=k\xrightarrow{\,d^0\,}D_T^1=k^2
\xrightarrow{\,d^1\,}D_T^2=k,
\quad d^0(c)=(uc,0),\quad d^1(a,b)=ub,
\]
with
\[
h_T^1(a,b)=u^{-1}a,\qquad h_T^2(c)=(0,u^{-1}c),\qquad h_T^0=0.
\]
\end{enumerate}
The corresponding complexes with coefficients in $L_z^\vee$
contract by the same formulas with $u$ replaced by $z^{-1}-1$.
\end{lemma}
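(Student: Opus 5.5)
The plan is to verify the identities $dh+hd=1$ and $h^2=0$ degree by degree. Only scalar and $2\times2$ linear algebra is involved; the one structural input is that $u=z-1$ is invertible in $k$ because $z\neq1$. I will not reprove acyclicity separately: exhibiting a contracting homotopy already re-proves the exactness statements of Lemma~\ref{lem:corner} in these two cases.

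For (i), the complex has only the degrees $0$ and $1$, and $h_A$ lowers degree by one. There is nothing to check for $h^2$: any composite $h_Ah_A$ lands in degree $-1$ or starts in degree $2$, so it vanishes. In degree $0$ we have $h_Ad(c)=u^{-1}uc=c$, and the term $dh$ is absent. In degree $1$ we have $dh_A(b)=u\,u^{-1}b=b$, and the term $hd$ is absent because $D_A^2=0$.

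For (ii), I first check that $d^1d^0=0$: indeed $d^1(uc,0)=u\cdot0=0$. Then I verify the homotopy identity in each degree:
\begin{itemize}
\item[] in degree $0$, $h_T^1d^0(c)=h_T^1(uc,0)=c$, while $d\,h_T^0=0$;
\item[] in degree $1$, $d^0h_T^1(a,b)+h_T^2d^1(a,b)=(a,0)+(0,b)=(a,b)$;
\item[] in degree $2$, $d^1h_T^2(c)=d^1(0,u^{-1}c)=c$.
\end{itemize}
For $h^2=0$, the only composite that could be nonzero is $h_T^1h_T^2$. It vanishes, since $h_T^2(c)=(0,u^{-1}c)$ has zero first component and $h_T^1$ reads off only that component. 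The composites involving $h_T^0=0$ vanish trivially.

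For the dual-coefficient statement, I will observe that the cellular complexes with coefficients in $L_z^\vee$ have exactly the same shape, with the meridional holonomy $z$ replaced by $z^{-1}$. The longitudinal holonomy remains $1$. The cellulation is dual, but for the annulus and for the one-vertex torus it has the same combinatorial form. Hence the differentials are the displayed ones with $u$ replaced by $u^\vee=z^{-1}-1$, which is again a unit since $z\neq1$. The computations above used only that $u$ is invertible, so they carry over verbatim.

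The one point needing care is the sign convention mentioned after \eqref{eq:torus-cpx}: a different orientation on the dual cells may change $d^1$ by a sign. I expect this to be the only real obstacle, and a minor one. It is handled by noting that the verification depends only on invertibility of the scalar appearing in each differential. If $d^1=-u^\vee b$, one replaces $h^2$ by the correspondingly negated formula, which is the intended reading of ``the same formulas with $u$ replaced.''
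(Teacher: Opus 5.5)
Your proposal is correct and follows the paper's proof essentially verbatim: a degree-by-degree check of $dh+hd=1$ and $h^2=0$ that uses only the invertibility of $u$, with the dual-coefficient case handled by replacing the meridional holonomy $z$ with $z^{-1}$. Your extra checks, namely $d^1d^0=0$ and the remark on the sign of $d^1$, are harmless additions that the paper leaves implicit.
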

\begin{proof}
For the annulus, $dh$ and $hd$ act as the identity in degrees $1$
and $0$, respectively.  For the torus, one computes
\[
 h_T^1d^0(c)=c,\qquad
 (d^0h_T^1+h_T^2d^1)(a,b)=(a,0)+(0,b)=(a,b),
 \qquad d^1h_T^2(c)=c.
\]
All compositions $h^2$ vanish by the displayed definitions.
Dualizing the character simply replaces its meridional holonomy.
\end{proof}

\begin{remark}[The resonant point]
\label{rem:resonance}
The contraction contains the coefficient $(z-1)^{-1}$ and is
therefore meromorphic in the holonomy.  It has no extension to a
contracting homotopy over $k[z,z^{-1}]$ near $z=1$: already for the
meridional circle, $u h=1$ would require $u$ to be a unit.
At $z=1$ the cohomology of both the annulus and torus is nonzero.
Thus the nonresonant chain-level gluing mechanism below cannot be
continued unchanged through trivial holonomy.  This pole is a
property of the \emph{explicit linear complex}, not an assertion
about a singularity of any completed quantum BV--BFV theory.
\end{remark}

\subsection{A homotopy pullback with an explicit strong deformation retract}
Consider a finite CW decomposition of a pair $(X,R)$ obtained by
gluing two CW pairs along a common subpair:
\begin{equation}\label{eq:gluing-pairs}
 (X,R)=(X_1,R_1)\cup_{(Y,S)}(X_2,R_2),\quad
 Y=X_1\cap X_2,\quad S=R\cap Y,\quad R_i=R\cap X_i.
\end{equation}
Choose a flat rank-one local system $L$ on $X$; all restrictions
are understood.  Write
\[
 C_i=C^\bullet(X_i,R_i;L),\qquad
 D=C^\bullet(Y,S;L),\qquad
 r=r_1-r_2:C_1\oplus C_2\longrightarrow D.
\]
Let $P=\operatorname{holim}(C_1\xrightarrow{r_1}D\xleftarrow{r_2}C_2)$
be represented by the shifted mapping cone:
\begin{align}
 P^n&=C_1^n\oplus C_2^n\oplus D^{n-1},\label{eq:pullback-object}\\
 d_P(x_1,x_2,e)&=(d_1x_1,d_2x_2,
                 r_1x_1-r_2x_2-d_De).
 \label{eq:pullback-differential}
\end{align}
The sign $-d_D$ in the third component makes $d_P^2=0$.
The strict cellular matching complex $C^\bullet(X,R;L)$ maps to
$P$ by restriction $x\mapsto(x|_{X_1},x|_{X_2},0)$.

\begin{theorem}[Constructive acyclic-interface gluing]
\label{thm:chain-gluing}
Suppose that the cellular interface complex $D$ admits a
contraction $h:D^n\to D^{n-1}$ with
$dh+hd=1$ and $h^2=0$.  Then:
\begin{enumerate}[label=\textup{(\roman*)},leftmargin=2em]
\item The cellular restriction
$C^\bullet(X,R;L)\to P$ is a quasi-isomorphism.
\item The maps
\begin{align*}
 q:P&\longrightarrow C_1\oplus C_2,
 &q(x_1,x_2,e)&=(x_1,x_2),\\
 s:C_1\oplus C_2&\longrightarrow P,
 &s(x_1,x_2)&=(x_1,x_2,h(r_1x_1-r_2x_2)),\\
 H:P^n&\longrightarrow P^{n-1},
 &H(x_1,x_2,e)&=(0,0,-he)
\end{align*}
are chain maps in the case of $q,s$ and a homotopy in the case of
$H$. They give a strong deformation retract:
\begin{equation}\label{eq:SDR}
 qs=1,\qquad d_PH+Hd_P=1-sq,
 \qquad Hs=qH=H^2=0.
\end{equation}
In particular, the \emph{glued cochain complex} is related by an
explicit zigzag of quasi-isomorphisms to $C_1\oplus C_2$.
\item If all three complexes $C_1,C_2,D$ are acyclic and their
cellular bases and lifts are compatible, their Milnor torsions obey
\begin{equation}\label{eq:chain-glue-torsion}
 \tau(X,R;L)\doteq
 \frac{\tau(X_1,R_1;L)\tau(X_2,R_2;L)}{\tau(Y,S;L)}.
\end{equation}
The equality is in the appropriate determinant line before
choosing compatible bases; the displayed scalar formula retains
the usual deck-translation and sign ambiguity.
\end{enumerate}
This theorem concerns the \emph{linearized cellular BV--BFV
$Q$-complex} and its torsion, not a quasi-isomorphism of interacting
quantum boundary state complexes.
\end{theorem}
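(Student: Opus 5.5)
I would start with part (ii), since it is pure algebra on the cone $P$, and then derive (i) and (iii) from it together with standard facts.

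\textbf{Step 1: the deformation retract.} The map $q$ is a chain map because the first two components of $d_P$ are $d_1x_1,d_2x_2$. For $s$, put $w=r_1x_1-r_2x_2$. The third component of $d_Ps(x_1,x_2)$ is
\[
w-d_Dhw=hd_Dw=h(r_1d_1x_1-r_2d_2x_2),
\]
using $dh+hd=1$ and that $r_1,r_2$ are chain maps. This is the third component of $s(d_1x_1,d_2x_2)$, so $s$ is a chain map.

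Clearly $qs=1$, $qH=0$, $Hs=(0,0,-h^2w)=0$ and $H^2=0$, using $h^2=0$.

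For the homotopy identity, compute
\[
d_PH(x_1,x_2,e)=(0,0,d_Dhe),\qquad
Hd_P(x_1,x_2,e)=(0,0,-hw+hd_De).
\]
Their sum is $(0,0,e-hw)$, which equals $(1-sq)(x_1,x_2,e)$. Sign bookkeeping in $d_P$ is the only thing to watch here.

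\textbf{Step 2: the quasi-isomorphism (i).} The strict complex $C^\bullet(X,R;L)$ sits in a short exact sequence
\[
0\to C^\bullet(X,R)\to C_1\oplus C_2\xrightarrow{r} D\to0.
\]
Exactness in the middle holds because a cellular cochain on $X$ vanishing on $R$ is the same as a compatible pair of cochains agreeing on $Y$. Surjectivity of $r$ holds because cells of $Y\setminus S$ are cells of $X_1\setminus R_1$, so any $e\in D$ extends by zero.

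The mapping cone $P$ computes the homotopy fibre of $r$. The standard comparison map from the kernel of a surjection to its cone is a quasi-isomorphism; concretely, the five lemma applies to the long exact sequences of $0\to\ker r\to C_1\oplus C_2\to D\to0$ and $0\to D[-1]\to P\to C_1\oplus C_2\to0$.

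I expect the main care to be in this step: pinning down the relative cellular structures so that restriction of cells is honest and surjective onto $D$. I would state this as a CW-subcomplex hypothesis implicit in \eqref{eq:gluing-pairs}.

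\textbf{Step 3: the torsion formula (iii).} When $C_1,C_2,D$ are acyclic, the short exact sequence from Step 2 forces $C^\bullet(X,R)$ to be acyclic as well. The sequence is based compatibly: the cells of $X\setminus R$ together with the cells of $Y\setminus S$, with matching lifts, give the cells of $X_1\setminus R_1$ and $X_2\setminus R_2$, up to a permutation that contributes a sign.

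Milnor's multiplicativity of torsion for based short exact sequences of acyclic complexes, the same tool used in Proposition~\ref{prop:relative}, then gives
\[
\tau(C_1\oplus C_2)=\tau(X_1,R_1)\,\tau(X_2,R_2)\doteq\tau(X,R)\,\tau(Y,S).
\]
Rearranging yields \eqref{eq:chain-glue-torsion}, modulo sign and deck translations $\pm z^n$.

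Since these are cochain complexes, the chain-versus-cochain convention may invert scalars. It does so uniformly in all three terms, so the formula is unaffected.
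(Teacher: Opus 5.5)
Your proposal is correct and follows essentially the same route as the paper. Both arguments verify the strong deformation retract on $P$ by direct computation, use the degreewise-exact Mayer--Vietoris sequence (with $r$ surjective via extension by zero) for (i), and use multiplicativity of Milnor torsion for (iii). The differences are cosmetic. For (i) you use the five lemma on the two long exact sequences, whereas the paper observes that the quotient of $P$ by the strict kernel is an acyclic cone. For (iii) you apply multiplicativity to the cochain sequence and add a uniform duality remark, whereas the paper applies it directly to the chain-level Mayer--Vietoris sequence.
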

\begin{proof}
The cochain-level cellular Mayer--Vietoris sequence for
\eqref{eq:gluing-pairs} is degreewise exact:
\begin{equation}\label{eq:MV-cochain-short}
0\longrightarrow C^\bullet(X,R;L)
\longrightarrow C_1\oplus C_2\xrightarrow{\,r\,}D
\longrightarrow0.
\end{equation}
The map $r$ is degreewise surjective: any cellular cochain on a
cell of $Y\setminus S$ extends by zero to a cochain on $X_1$.
The associated homotopy-pullback comparison of the strict kernel
with $\operatorname{Cone}(r)[-1]$ is a quasi-isomorphism.  More
explicitly, the quotient of the cone by the strict kernel is an
acyclic cone of the surjective map after choosing a linear splitting
of $r$ in each degree.  This proves (i).

For (ii), abbreviate $x=(x_1,x_2)$ and $d_C=d_1\oplus d_2$.
Since $r$ is a cochain map,
\[
 d_Ps(x)=(d_Cx,rx-d_Dhrx)
       =(d_Cx,hd_Drx)=s(d_Cx).
\]
Thus $s$ is a chain map, as is the evident projection $q$.
Furthermore,
\[
(d_PH+Hd_P)(x,e)
 =(0,d_Dhe+hd_De-hrx)=(0,e-hrx)
 =(1-sq)(x,e).
\]
The side conditions follow from $h^2=0$ and the formulas.
This also exhibits a concrete chain homotopy inverse to the
projection $P\to C_1\oplus C_2$.

For (iii), use the \emph{chain} version of
\eqref{eq:MV-cochain-short},
$0\to C_\bullet(Y,S;L)\to C_\bullet(X_1,R_1;L)
\oplus C_\bullet(X_2,R_2;L)\to C_\bullet(X,R;L)\to0$.
The multiplicativity theorem for based acyclic complexes gives
$\tau(C_1)\tau(C_2)\doteq\tau(D)\tau(C_X)$.
Compatibility of the cell lifts identifies these with the displayed
torsions, proving \eqref{eq:chain-glue-torsion}.
\end{proof}

\begin{corollary}[Annular and toroidal gluing]
\label{cor:annular-toroidal}
For a meridional interface $Y=A\simeq S^1\times[0,1]$ with
$S=\varnothing$, the relative BF cochain complex admits the
explicit retract \eqref{eq:SDR} for $z\ne1$, and
\begin{equation}\label{eq:annular-torsion-chain}
\tau(X,R;L_z)\doteq(z-1)\,
\tau(X_1,R_1;L_z)\tau(X_2,R_2;L_z)
\end{equation}
whenever the complexes are acyclic.  For a torus interface
$Y=T^2$ with meridional holonomy $z$ and longitudinal holonomy
$1$, the same chain-level retract exists and
\begin{equation}\label{eq:torus-torsion-chain}
\tau(X,R;L_z)\doteq
\tau(X_1,R_1;L_z)\tau(X_2,R_2;L_z),
\end{equation}
since $\tau(T^2;L_z)\doteq1$.
The statements hold for the $L_z^\vee$ part of the linear BV--BFV
fields as well.
\end{corollary}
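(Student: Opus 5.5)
This corollary is a specialization of Theorem~\ref{thm:chain-gluing}, so the plan is to verify its hypotheses for the two interfaces and then compute the interface torsion $\tau(Y,S;L_z)$ that appears in \eqref{eq:chain-glue-torsion}.

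\emph{Existence of the retracts.} Part~(ii) of Theorem~\ref{thm:chain-gluing} needs only a contraction $h$ of $D=C^\bullet(Y,S;L_z)$ with $dh+hd=1$ and $h^2=0$.

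For $Y=A$ and $S=\varnothing$, I will use a cellulation of $A$ that collapses onto its meridional core. One option is to choose the CW structure so that $D$ is literally the two-term complex $k\xrightarrow{u}k$. Otherwise, I will transfer the contraction $h_A$ of Lemma~\ref{lem:explicit-contract}(i) along a cellular strong deformation retraction onto the core, by the homological perturbation lemma. This produces a contraction with $h^2=0$ after the standard side-condition normalization.

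For $Y=T^2$, I will take the one-vertex cellulation and use $h_T$ from Lemma~\ref{lem:explicit-contract}(ii) directly.

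Plugging $h$ into \eqref{eq:SDR} gives the retract in both cases. For the $L_z^\vee$ fields, the same lemma supplies contractions with $u$ replaced by $z^{-1}-1\neq0$, and the argument repeats verbatim.

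\emph{Torsion formulas.} Assuming $C_1$, $C_2$ and $C^\bullet(X,R;L_z)$ are acyclic, and noting that $D$ is acyclic by Lemma~\ref{lem:corner}, Theorem~\ref{thm:chain-gluing}(iii) gives
\[
\tau(X,R;L_z)\doteq\frac{\tau(X_1,R_1;L_z)\,\tau(X_2,R_2;L_z)}{\tau(Y,S;L_z)}.
\]
For the annulus, simple-homotopy invariance of torsion reduces the computation to the circle complex \eqref{eq:circle-complex}. Its torsion is $(z-1)^{-1}$ by the normalization of Definition~\ref{def:milnor}, so dividing by it yields the factor $z-1$ in \eqref{eq:annular-torsion-chain}.

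For the torus, I will compute the Milnor torsion of \eqref{eq:torus-cpx} directly:
\begin{itemize}
\item In degree~$0$ the boundary basis is $u e_0$, lifted to the $2$-cell.
\item The middle degree contains the image $(u,0)$ together with the lift $(0,1)$, whose transition determinant is $u$.
\item In degree~$2$ the entry is $u$.
\end{itemize}
With the alternating exponents the $u$'s cancel, so $\tau(T^2;L_z)\doteq1$ and \eqref{eq:torus-torsion-chain} follows. As a cross-check, the same conclusion follows from multiplicativity applied to $T^2=S^1\times S^1$, using that the longitudinal circle has trivial holonomy and Euler characteristic zero.

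\emph{Main obstacle.} The delicate point is sign and orientation bookkeeping. I must check that the chain-versus-cochain convention does not invert the annulus factor, i.e.\ that dividing by $\tau(A)=(z-1)^{-1}$ really produces $(z-1)$ and not $(z-1)^{-1}$. Consistency with Proposition~\ref{prop:gluing} and Proposition~\ref{prop:relative-glue}, where the same factor $z-1$ appears, will serve as the check. Residual ambiguities are absorbed into $\doteq$.
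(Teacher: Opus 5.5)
Your proposal is correct and follows the paper's proof. The paper feeds the contractions of Lemma~\ref{lem:explicit-contract} into Theorem~\ref{thm:chain-gluing}, takes $\tau(A;L_z)\doteq(z-1)^{-1}$ from the core circle, and obtains $\tau(T^2;L_z)\doteq1$ by splitting \eqref{eq:torus-cpx} into two copies of $k\xrightarrow{u}k$ in adjacent degrees with inverse torsions, which is equivalent to your direct Milnor count and product-formula check. Your perturbation-lemma transfer of $h_A$ to an honest cellulation of the annulus is a worthwhile refinement the paper leaves implicit: a literally two-term cellular complex is not available, since each boundary circle needs its own vertex. In your torus bookkeeping, only two transition determinants equal $u$; the one for lifting $(u,0)$ back to the degree-$0$ cochains is $1$, which is why they cancel instead of leaving $u^{\pm1}$.
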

\begin{proof}
Apply Lemma~\ref{lem:explicit-contract} to
Theorem~\ref{thm:chain-gluing}.  The Milnor torsion of a
meridional circle (hence annulus) is $(z-1)^{-1}$, while the torus
complex splits as the sum of a two-term complex with boundary $u$
and a shifted two-term complex with the same boundary.  Their
torsions are inverse, giving torsion $1$ up to sign.
\end{proof}

\begin{remark}[What the deformation retract does \emph{not} do]
The maps $s,q,H$ depend on a cellular decomposition, the local
system and the contracting homotopy.  In particular, $h$ uses
$(z-1)^{-1}$.  We do not claim these maps implement the quantum
BFV integral kernel, respect a chosen boundary polarization
symplectically, or are compatible with a bordered Floer $A_\infty$
module structure.  Such claims require a separate quantization and
comparison construction.  What is established is a chain-level
\emph{homological gluing} theorem for the linearized BFV complexes,
not merely an equality of their scalar torsions.
\end{remark}

\section{Floer gluing and decategorification}
\label{sec:twotrack}
\subsection{Two different gluing operations}
The natural gluing operations of the two theories are different.
The linear cellular BFV interface is represented by a homotopy
pullback of cochain complexes; in bordered Floer theory the
boundary carries a differential graded algebra $\mathcal A(F)$,
and a bordered three-manifold has a type $D$ structure or an
$A_\infty$ type $A$ module.  The pairing theorem of
Lipshitz--Ozsv\'ath--Thurston identifies the chain complex of a
\emph{glued closed manifold} with a suitable box tensor product
\cite{LOT}.  The categorification theorem of Hom--Lidman--Watson
identifies Donaldson's Alexander-module TQFT as a decategorification
of a bordered Floer construction \cite{HLW}.

For orientation, if $\mathcal A$ is a differential graded algebra over
$\F_2$, a type $D$ structure is a graded vector space $N$
with a map $\delta^1:N\to\mathcal A\otimes N$ satisfying
\[
 (d_{\mathcal A}\otimes 1)\delta^1+
 (\mu\otimes1)(1\otimes\delta^1)\delta^1=0.
\]
An $A_\infty$ right module $M$ has structure maps
$m_{j+1}:M\otimes\mathcal A^{\otimes j}\to M$
obeying the $A_\infty$ relations.  Writing
$\delta^j:N\to\mathcal A^{\otimes j}\otimes N$
for the iterates of $\delta^1$ and $\delta^0=1$, the box
tensor differential on $M\otimes N$ is schematically
\[
 \partial_{\boxtimes}(m\otimes n)
 =\sum_{j\ge0}(m_{j+1}\otimes1)
   (m\otimes\delta^j(n)).
\]
Under the usual boundedness hypothesis the sum is finite, and
the defining relations imply $\partial_{\boxtimes}^2=0$.
This operation differs from the
homotopy pullback \eqref{eq:pullback-object}: the latter uses only
linear restriction maps of cellular cochains, rather than the
higher operations of a bordered algebra.

These theorems do not give a functor from the cellular BFV boundary
$Q$-complex to the bordered algebra $\mathcal A(F)$, nor a
natural quasi-isomorphism between the associated state spaces.
For knot connected sums, however, the knot Floer K\"unneth formula
is an especially transparent \emph{chain-level} pairing in its own
right, and it can be compared exactly with the BF torsion of an
annular gluing.

For a knot $K$, let
$\mathcal F_K=\operatorname{gr}_A\widehat{CFK}(K;\F_2)$
be the associated graded complex of the Alexander-filtered hat
knot Floer chain complex.  It is a finite bigraded chain complex;
its differential preserves Alexander grading and decreases Maslov
grading by one.  Write
\[
 \mathrm{Eu}(\mathcal F_K;t)=
 \sum_{a\in\Z}\left(\sum_m(-1)^m
 \dim_{\F_2} H_m(\mathcal F_K)_a\right)t^a.
\]
The Euler characteristic can equivalently be calculated from the
chain groups, as long as the differential preserves Alexander degree.

\begin{theorem}[Compatibility of torsion and Euler characteristic under gluing]
\label{thm:two-track}
Let $K_1,K_2$ be oriented knots and let
$K=K_1\#K_2$ be their connected sum.  Work with a nontrivial
meridional character $z$ for which all relevant twisted complexes
are acyclic, with the normalization and lift conventions of
Sections~\ref{sec:BVBFV}--\ref{sec:sutured}.
\begin{enumerate}[label=\textup{(\roman*)},leftmargin=2em]
\item On the BF side, the cellular homotopy pullback for the
meridional connected-sum annulus admits the explicit strong
deformation retract of Theorem~\ref{thm:chain-gluing}.
Its torsion is the annular gluing product
\[
 \tau(X_K;L_z)\doteq(z-1)
 \tau(X_{K_1};L_z)\tau(X_{K_2};L_z).
\]
\item On the Floer side, the knot Floer connected-sum formula gives
an Alexander- and Maslov-compatible quasi-isomorphism (over $\F_2$)
\begin{equation}\label{eq:Floer-Kunneth-chain}
 \mathcal F_{K_1\#K_2}
 \simeq\mathcal F_{K_1}\otimes_{\F_2}\mathcal F_{K_2},
\end{equation}
where the tensor differential is the usual sum and the Alexander
and Maslov degrees are added.
\item After applying the \emph{respective} decategorification maps,
namely symmetric relative torsion for the BF side and graded Euler
characteristic for the Floer side, these chain-level gluing
operations give the same identity
\begin{equation}\label{eq:twotrack-output}
\quad
 \mathscr B_{K_1\#K_2}(t)
 =\mathscr B_{K_1}(t)\mathscr B_{K_2}(t)
 =\mathrm{Eu}(\mathcal F_{K_1};t)
  \mathrm{Eu}(\mathcal F_{K_2};t).
 \quad
\end{equation}
\end{enumerate}
The result is a \emph{pairing-compatible comparison of the
Euler/torsion shadows of two chain-level gluing procedures}; it
is not a natural transformation between the two procedures.
\end{theorem}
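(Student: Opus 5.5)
The proof treats the three assertions separately and then combines them. Each step cites a result already established above or a cited Floer theorem.

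For (i), I apply Theorem~\ref{thm:chain-gluing} to the decomposition of $X_K$ along the meridional connected-sum annulus $A$. Take $(X,R)=(X_K,\varnothing)$, $X_i=X_{K_i}$ and $(Y,S)=(A,\varnothing)$. The interface complex $D=C^\bullet(A;L_z)$ contracts by Lemma~\ref{lem:explicit-contract}(i), since $z\ne1$. Part (ii) of Theorem~\ref{thm:chain-gluing} then provides the explicit retract $(q,s,H)$ satisfying \eqref{eq:SDR}.

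By hypothesis all relevant twisted complexes are acyclic; for instance $\Delta_{K_i}(z)\neq0$ gives this through Lemma~\ref{lem:corner}. Part (iii) of Theorem~\ref{thm:chain-gluing} therefore applies. Combined with $\tau(A;L_z)\doteq(z-1)^{-1}$, this is exactly the annular formula \eqref{eq:annular-torsion-chain} of Corollary~\ref{cor:annular-toroidal}. One check is needed here: the cell lifts on $X_{K_1}$, $X_{K_2}$ and $A$ can be chosen compatibly, so that the multiplicativity of torsion holds modulo $\pm z^{\Z}$. This is the only delicate point on the BF side, and it is handled as in Proposition~\ref{prop:relative-glue}.

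For (ii), I do not derive anything new; I cite the Ozsv\'ath--Szab\'o K\"unneth theorem \cite{OzsvathSzaboKnots}. That theorem gives a filtered chain homotopy equivalence
\[
\widehat{CFK}(K_1\#K_2)\simeq\widehat{CFK}(K_1)\otimes_{\F_2}\widehat{CFK}(K_2).
\]
The equivalence respects the Alexander filtration, which is additive under tensor product, and it preserves the Maslov grading. A filtered chain homotopy equivalence induces a homotopy equivalence on associated graded complexes. The associated graded of a tensor product of filtered complexes is the tensor product of the associated gradeds, because the filtrations are finite and split over a field. This gives \eqref{eq:Floer-Kunneth-chain}, together with the stated additivity of both gradings.

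For (iii), I work on each side separately and then compare.

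On the BF side, Proposition~\ref{prop:relative-glue} gives $\mathscr B_{K_1\#K_2}=\mathscr B_{K_1}\mathscr B_{K_2}$. Equivalently, dividing (i) by the meridional annulus torsions and symmetrizing removes the unit ambiguity, since symmetric representatives equal $1$ at $t=1$.

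On the Floer side, I use that $\mathrm{Eu}$ can be computed on chain groups when the differential preserves Alexander degree. The bigraded Euler characteristic is multiplicative under tensor products with additive gradings, because $(-1)^{m_1+m_2}=(-1)^{m_1}(-1)^{m_2}$. Quasi-isomorphism invariance then yields
\[
\mathrm{Eu}(\mathcal F_{K_1\#K_2};t)=\mathrm{Eu}(\mathcal F_{K_1};t)\,\mathrm{Eu}(\mathcal F_{K_2};t).
\]
Finally, $H_*(\mathcal F_K)=\HF(K)$, so $\mathrm{Eu}(\mathcal F_{K_i};t)=\Delta_{K_i}(t)=\mathscr B_{K_i}(t)$ by \eqref{eq:suturedhfk} and Corollary~\ref{cor:bfalex}. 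This links the two products and gives \eqref{eq:twotrack-output}.

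The main obstacle I expect is bookkeeping of conventions rather than any new argument. Three things must agree: the Maslov-parity normalization used for $\chi$, the symmetric Alexander normalization, and the compatibility of lifts across the interface in the torsion gluing. I would fix the absolute gradings first and verify that each side's decategorification lands on the same symmetric Laurent polynomial. I would also note explicitly that no map between the two chain-level constructions is asserted.
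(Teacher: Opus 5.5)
Your proposal is correct and follows essentially the same route as the paper. Part (i) is Corollary~\ref{cor:annular-toroidal}, which you unpack into Theorem~\ref{thm:chain-gluing} plus Lemma~\ref{lem:explicit-contract}; part (ii) is the Ozsv\'ath--Szab\'o K\"unneth theorem together with multiplicativity of the graded Euler characteristic; and part (iii) combines $\mathscr B_{K_i}=\Delta_{K_i}=\mathrm{Eu}(\mathcal F_{K_i};t)$ with multiplicativity on both sides, while your remarks on passing to associated graded complexes and removing the unit ambiguity only spell out steps the paper leaves to citation.
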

\begin{proof}
Part (i) is Corollary~\ref{cor:annular-toroidal} applied to the
connected-sum decomposition along a meridional annulus.  For
(ii), use the connected-sum K\"unneth theorem for the knot Floer
complex \cite{OzsvathSzaboKnots}.  The tensor differential
preserves the sum of Alexander gradings, so its Euler
characteristic is multiplicative:
\[
 \mathrm{Eu}(\mathcal F_{K_1}\otimes\mathcal F_{K_2};t)
 =\mathrm{Eu}(\mathcal F_{K_1};t)
  \mathrm{Eu}(\mathcal F_{K_2};t).
\]
For (iii), the relative torsion comparison gives
$\mathscr B_{K_i}(t)=\Delta_{K_i}(t)$ for each knot and the
connected sum.  The Ozsv\'ath--Szab\'o graded Euler characteristic
formula identifies
$\mathrm{Eu}(\mathcal F_{K_i};t)=\Delta_{K_i}(t)$.
Together with the multiplicativity proved on both sides, these
give \eqref{eq:twotrack-output}.
\end{proof}

\subsection{An obstruction to an ordinary chain quasi-isomorphism}
There is a simple obstruction to strengthening Theorem~\ref{thm:two-track}
by identifying the standard complexes themselves.

\begin{theorem}[Acyclicity obstruction to a direct BF--Floer quasi-isomorphism]
\label{thm:chain-nogo}
Let $k=\F_2(t)$, and endow a knot exterior and its meridional
relative pair with the rank-one local system defined by the
abelianization $\pi_1(X_K)\to\Z$ followed by $1\mapsto t$.
For every knot $K\subset S^3$,
\begin{equation}\label{eq:chain-nogo}
 H^\bullet(X_K,R_-;k_\phi)=0,
 \qquad
 H_\bullet(\mathcal F_K\otimes_{\F_2}k)
 =\widehat{HFK}(K;\F_2)\otimes_{\F_2}k\ne0.
\end{equation}
Consequently there is no quasi-isomorphism in the ordinary
derived category $D(k)$ between the standard twisted relative
cellular BF complex and the scalar extension of the associated
graded knot Floer complex.  The obstruction remains if one allows
an overall cohomological shift.  Any genuinely categorified
BV--BFV/Floer comparison must change its objects or introduce
additional non-acyclic boundary/categorical data.
\end{theorem}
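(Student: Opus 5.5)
The proof has three parts: the vanishing on the BF side, the nonvanishing on the Floer side, and the formal consequence in $D(k)$.

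\emph{Vanishing on the BF side.} Over $k=\F_2(t)$ I rerun the argument of Lemma~\ref{lem:corner} with the generic character $t$ in place of $z$. The meridional circle complex \eqref{eq:circle-complex} becomes $k\xrightarrow{t-1}k$. It is exact because $t-1$ is a unit in $\F_2(t)$, and the same holds for the annulus $R_-$. The Fox complex \eqref{eq:fox-complex} over $k$ is exact because an Alexander--Fox minor equals $\pm t^m\Delta_K(t)$ by \eqref{eq:fox-minor}. This minor is nonzero in $\F_2(t)$, since the reduction of $\Delta_K$ mod $2$ satisfies $\Delta_K(1)\equiv1$ and so is not the zero polynomial. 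The long exact sequence of the pair $(X_K,R_-)$ then gives $H^\bullet(X_K,R_-;k_\phi)=0$. Only one point needs care: exactness of the Fox complex with $d_1=(t-1,\dots)$ and an invertible minor. Here I repeat the rank count from Proposition~\ref{prop:torsion}: $d_1$ is surjective and $d_2$ has rank $n-1$.

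\emph{Nonvanishing on the Floer side.} The field extension $\F_2\to k$ is flat, so the universal coefficient theorem gives $H_\bullet(\mathcal F_K\otimes k)=H_\bullet(\mathcal F_K)\otimes k$. Because $\mathcal F_K$ is the associated graded complex of $\widehat{CFK}$, its homology is $\widehat{HFK}(K;\F_2)$. To see that this is nonzero I use the decategorification \eqref{eq:decategorification}: the graded Euler characteristic is $\Delta_K$, and $\Delta_K(1)=1\neq0$, so some $c_a(K)\neq0$ and hence some $\widehat{HFK}(K,a)\neq0$. Genus detection would also give this, but the Euler characteristic argument is more elementary. In fact the total dimension is odd.

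\emph{Consequence in $D(k)$.} A quasi-isomorphism induces an isomorphism on cohomology, and a shift only reindexes cohomology. An acyclic complex therefore cannot be quasi-isomorphic to a complex with nonzero homology, even after any shift. The sentence about categorified comparisons is then interpretive: any comparison must change the objects involved, because the standard ones are separated by this invariant.

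I expect the main obstacle to be consistency of conventions rather than content. Mixing chain and cochain indexing, or homological and cohomological gradings, is harmless for the dichotomy between zero and nonzero homology. The only substantive check is that $\Delta_K$ does not vanish mod $2$, and that follows from $\Delta_K(1)=1$.
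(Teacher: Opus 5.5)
Your proposal is correct and follows the paper's proof essentially step for step. Both arguments use exactness of the annulus and Fox complexes over $\F_2(t)$ (via the unit $t-1$ and the nonzero mod-$2$ reduction of $\Delta_K$, guaranteed by $\Delta_K(1)=1$), the long exact sequence of the pair, nonvanishing of $\widehat{HFK}$ from its graded Euler characteristic, flat base change, and shift-invariance of the zero/nonzero homology dichotomy; your explicit rank count for the Fox complex and the observation that the total dimension is odd are correct refinements, not a different route.
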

\begin{proof}
The Alexander polynomial satisfies $\Delta_K(1)=1$, so its
reduction modulo $2$ is nonzero in $\F_2[t,t^{-1}]$.  The
Fox complex over the fraction field $\F_2(t)$ is acyclic by
invertibility of an Alexander--Fox minor and of $t-1$.
The meridional annulus is also acyclic because $t-1$ is a unit
in $k$, hence the long exact sequence of the pair yields
$H^\bullet(X_K,R_-;k_\phi)=0$.
On the other hand, $\widehat{HFK}(K;\F_2)$ is nonzero:
its graded Euler characteristic is $\Delta_K(t)$, which is not
the zero polynomial (and has value $1$ at $t=1$).
Flat scalar extension from $\F_2$ to $k$ preserves nonzero
homology.  Two complexes with different cohomology cannot be
quasi-isomorphic, irrespective of a common shift.
\end{proof}

\begin{remark}[Scope of the chain comparison]
The contractions in Theorem~\ref{thm:chain-gluing} use the standard
mapping-cone and Mayer--Vietoris constructions.  The Floer tensor
pairing and its Euler-characteristic identity are established results.
Theorem~\ref{thm:chain-nogo} concerns the ordinary complexes specified
there; it does not exclude comparisons using additional boundary data.
\end{remark}

\subsection{A concrete comparison: two trefoils}
For the right-handed trefoil, write
$\mathrm{Eu}(\mathcal F_{3_1};t)=t^{-1}-1+t$.  Its connected sum
with itself gives
\[
\mathrm{Eu}(\mathcal F_{3_1\#3_1};t)
 =(t^{-1}-1+t)^2=t^{-2}-2t^{-1}+3-2t+t^2.
\]
The relative torsion has the same normalized Laurent expression.
On the \emph{absolute} cellular BF side, if
$\tau_{3_1}^{\mathrm{sym}}(t)=(t^{-1}-1+t)/(t-1)$, then
annular gluing produces
\[
(t-1)\bigl(\tau_{3_1}^{\mathrm{sym}}(t)\bigr)^2
=\frac{(t^{-1}-1+t)^2}{t-1},
\]
exactly the absolute torsion of the connected-sum complement.
The annular contracting homotopy $h_A=(t-1)^{-1}$ displays
\emph{at chain level} the same correction factor that is often
inserted only after forming scalar torsions.  The example
illustrates a precise common multiplicative shadow while
leaving the Maslov-graded Floer differential out of the BF
complex.

\section{L-space knots}
\label{sec:lspace}
The acyclicity obstruction of Theorem~\ref{thm:chain-nogo} rules
out a direct quasi-isomorphism of the \emph{standard} BF and Floer
complexes.  It does not prohibit using torsion as input to a
\emph{reconstruction algorithm} for a Floer complex when a
classification theorem makes this possible.  Positive L-space knots
provide a nontrivial class where this stronger kind of relationship
can be made completely explicit.

A knot $K\subset S^3$ is called a \emph{positive L-space knot} if
some positive integral Dehn surgery on $K$ is an L-space, i.e. a
rational homology sphere $Y$ satisfying
$\operatorname{rk}\widehat{HF}(Y)=|H_1(Y;\Z)|$.
The Ozsv\'ath--Szab\'o lens-space-surgery restriction and the
resulting staircase normal form for its knot Floer complex imply
that a positive L-space knot has a symmetrized Alexander polynomial
of the form
\begin{equation}\label{eq:lspace-alex}
 \Delta_K(t)=\sum_{i=-k}^{k}(-1)^{k-i}t^{n_i},\qquad
 n_{-k}<\cdots<n_k,\quad n_{-i}=-n_i,
\end{equation}
with integral exponents and $n_k=g(K)$.  The staircase
normal form is recalled below; for the classification input see
\cite{OzsvathSzaboLspace,Krcatovich}.

\begin{theorem}[Cyclotomic reconstruction of an L-space knot Floer \emph{chain complex}]
\label{thm:lspace-chain}
Fix $g\geq0$ and an odd prime $p\geq2g+1$, and let
$\zeta=\exp(2\pi i/p)$.  Suppose $K$ is a \emph{positive L-space
knot} with $g(K)\leq g$.  Knowing the single exact algebraic number
\[
 \alpha=\mathscr B_K(\zeta)
 =\tau(X_K,R_-;L_\zeta)^{\mathrm{sym}}=\Delta_K(\zeta),
\]
with the explicitly stipulated symmetric normalization,
determines the filtered chain homotopy type of
$CFK^-(K;\F_2)$ over $\F_2[U]$.  More precisely, the following
algorithm produces a filtered free chain complex representing it:
\begin{enumerate}[label=\textup{(\roman*)},leftmargin=2em]
\item For $-g\le a\le g$, compute the \emph{integer} coefficients
\begin{equation}\label{eq:lspace-trace}
 c_a=\frac{1+\Tr_{\Q(\zeta)/\Q}(\zeta^{-a}\alpha)}{p}.
\end{equation}
\item List the exponents for which $c_a\ne0$ in increasing order,
$\{n_{-k},\ldots,n_k\}$.  Under the positive L-space-knot
hypothesis, they satisfy \eqref{eq:lspace-alex}.
\item Let $S(\alpha)$ be free over $\F_2[U]$ on generators
$x_{-k},\ldots,x_k$ with Alexander filtration
$A(x_i)=n_i$ and $A(Ux_i)=n_i-1$.  Define its differential by
\begin{equation}\label{eq:lspace-staircase}
 \partial x_i=\begin{cases}
 0,&i\equiv k\pmod2,\\[2pt]
 x_{i-1}+U^{n_{i+1}-n_i}x_{i+1},
  &i\equiv k+1\pmod2.
 \end{cases}
\end{equation}
The second line only occurs at interior indices.  The usual
absolute Maslov normalization for $S^3$ determines the remaining
grading shift, with $M(Ux)=M(x)-2$.
\end{enumerate}
Then $S(\alpha)$ is filtered chain homotopy equivalent to
$CFK^-(K;\F_2)$.  Thus, for this class of knots, one exact
cyclotomic value of the normalized meridional BV--BFV torsion
character reconstructs more than the Euler characteristic: it
recovers the filtered Floer chain homotopy type and hence its
standard filtered-chain invariants.
\end{theorem}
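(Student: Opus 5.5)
The plan is to split the argument into an arithmetic step and a Floer-theoretic classification step, and to use only results already established or cited in the paper.

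\emph{Arithmetic step.} Since $\alpha=\mathscr B_K(\zeta)=\Delta_K(\zeta)$ by Corollary~\ref{cor:bfalex}, and $\Delta_K$ is supported in $[-g(K),g(K)]\subset[-g,g]$ with $\Delta_K(1)=1$, Proposition~\ref{prop:extract} applies directly because $p\ge 2g+1$. It shows that the numbers in \eqref{eq:lspace-trace} are exactly the coefficients $c_a$ of $\Delta_K$. Hence step (i) recovers $\Delta_K$ itself, not merely some polynomial agreeing with it at $\zeta$. In particular, the exponent list in step (ii) is the support of $\Delta_K$. Under the positive L-space hypothesis this support has the form \eqref{eq:lspace-alex}, with the alternating signs $(-1)^{k-i}$ and the symmetry $n_{-i}=-n_i$. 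So the datum $\alpha$ determines the sequence $n_{-k}<\dots<n_k$.

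\emph{Floer step.} Next I invoke the Ozsv\'ath--Szab\'o theorem for positive L-space knots \cite{OzsvathSzaboLspace}, in the form made explicit for $CFK^-$ by Krcatovich \cite{Krcatovich}. It states that $CFK^-(K;\F_2)$ is filtered chain homotopy equivalent to the staircase complex determined by the exponents of $\Delta_K$. The staircase has one generator per exponent, Alexander grading $n_i$, and alternating arrows whose lengths are the gaps $n_{i+1}-n_i$. The remaining work is to match the cited normal form with \eqref{eq:lspace-staircase}. I would do this in three checks:
\begin{enumerate}[label=\textup{(\alph*)},leftmargin=2em]
\item the top generator $x_k$ has $i=k$, so it is a cycle; generators alternate between cycles and non-cycles;
\item each non-cycle $x_i$, which is interior because $\pm k$ have the parity of $k$, maps to its two neighbours: to $x_{i-1}$ with the horizontal-type coefficient, and to $x_{i+1}$ with coefficient $U^{n_{i+1}-n_i}$;
\item with $A(U^m x)=A(x)-m$, these terms are filtration compatible, since $A(x_{i-1})=n_{i-1}<n_i$ and $A(U^{n_{i+1}-n_i}x_{i+1})=n_i$.
\end{enumerate}
The relation $\partial^2=0$ is automatic because every $\partial x_i$ is a sum of cycles. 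The Maslov gradings are fixed up to one global shift by the requirement that $\partial$ lower $M$ by one and that $M(Ux)=M(x)-2$. That shift is then pinned down by requiring $H_*(S(\alpha)\otimes_{\F_2[U]}\F_2[U,U^{-1}])$ to be supported in the grading dictated by $HF^\infty(S^3)$, i.e.\ the usual absolute normalization.

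\emph{Main obstacle.} I expect the main difficulty to be convention bookkeeping in the Floer step, not the arithmetic. Several conventions vary across the literature:
\begin{enumerate}[label=\textup{(\alph*)},leftmargin=2em]
\item which end of the staircase carries the $U$-power;
\item whether the differential is written in $CFK^-$ or in $CFK^\infty$;
\item the sign convention for which generators are cycles.
\end{enumerate}
Different choices give complexes related by reversing the index order or by the symmetry $n_{-i}=-n_i$. I would first fix Krcatovich's reduced model. I would then check that the symmetric exponent set makes the reflected staircase isomorphic to \eqref{eq:lspace-staircase} via $x_i\mapsto x_{-i}$, composed with a rescaling by $U$-powers. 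If the parity conventions still mismatch, I would verify directly that both complexes have the same $\widehat{HFK}$ and the same $U$-tower. Uniqueness of the staircase in the classification then gives the equivalence. Finally, filtered chain homotopy invariance transports all standard filtered-chain invariants.
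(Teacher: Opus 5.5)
Your proposal is correct and follows the paper's proof essentially step for step. You recover $\Delta_K$ from $\alpha$ by Proposition~\ref{prop:extract}, invoke the Ozsv\'ath--Szab\'o form \eqref{eq:lspace-alex}, check that \eqref{eq:lspace-staircase} squares to zero (each $\partial x_i$ is a sum of cycles) and respects the Alexander filtration, and identify the result with $CFK^-(K;\F_2)$ via Krcatovich's staircase normal form.

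One small correction concerns the Maslov normalization. $HF^\infty(S^3)=\F_2[U,U^{-1}]$ is unchanged by a grading shift of $2$, so it fixes the overall Maslov shift only mod $2$. Pin it instead by requiring the homology of $S(\alpha)/U$, with the filtration forgotten, to be $\widehat{HF}(S^3)\cong\F_2$ in degree $0$, i.e.\ $M(x_k)=0$.
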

\begin{proof}
By Theorem~\ref{thm:main} and
Proposition~\ref{prop:extract}, the exact cyclotomic value
$\alpha=\Delta_K(\zeta)$ together with the degree bound and
$\Delta_K(1)=1$ uniquely determines all coefficients via
\eqref{eq:lspace-trace}.  The L-space-knot condition now
supplies the alternating nonzero coefficients and ordered exponents
in \eqref{eq:lspace-alex} by the Ozsv\'ath--Szab\'o
classification theorem \cite{OzsvathSzaboLspace}.

Equation \eqref{eq:lspace-staircase} defines a differential.
Indeed, if $i\equiv k\pmod2$ then $\partial x_i=0$.
If $i\equiv k+1\pmod2$, its two neighbors have parity $k$,
so $\partial^2x_i=0$.  Moreover, the two terms in the
nonzero differential have Alexander filtrations
\[
 A(x_{i-1})=n_{i-1}<n_i,\qquad
 A(U^{n_{i+1}-n_i}x_{i+1})=n_i,
\]
so $\partial$ respects the increasing Alexander filtration.
The polynomial \eqref{eq:lspace-alex} specifies all the gaps
$n_{i+1}-n_i$ and hence all the nonzero differential entries.
The staircase normal-form theorem for positive L-space knots
(e.g. the basis-and-differential formulation in
\cite[Proposition~1.1]{Krcatovich}) identifies this complex,
with its $U$-action and Alexander filtration, with
$CFK^-(K;\F_2)$ up to filtered chain homotopy equivalence.
The usual Maslov grading conventions fix the overall shift.
\end{proof}

\begin{example}[The trefoil from one cubic holonomy value]
\label{ex:lspace-trefoil}
For the right-handed trefoil, $g=1$, and the minimal permitted
prime is $p=3$.  Since
$\Delta_{3_1}(t)=t^{-1}-1+t$, a primitive cube root
$\zeta_3$ gives $\alpha=\Delta_{3_1}(\zeta_3)=-2$.
Formula \eqref{eq:lspace-trace} returns
$c_{-1}=1$, $c_0=-1$, $c_1=1$.  Therefore $k=1$ and
$(n_{-1},n_0,n_1)=(-1,0,1)$.  The reconstructed
$\F_2[U]$-complex has generators $x_{-1},x_0,x_1$ and
\[
\partial x_{-1}=\partial x_1=0,
\qquad \partial x_0=x_{-1}+Ux_1.
\]
This is the positive-trefoil staircase model, not just its
Euler-characteristic polynomial.
\end{example}

\begin{example}[The cinquefoil]
For the positive $(2,5)$ torus knot, $g=2$; a primitive fifth
root $\zeta_5$ is sufficient.  The trace recovers the five
nonzero exponents $(-2,-1,0,1,2)$, and the resulting model has
\[
\partial x_{-1}=x_{-2}+Ux_0,\qquad
\partial x_1=x_0+Ux_2,
\]
with all other generator differentials zero.  In particular,
the two step lengths of the staircase are recovered from the
one specialized BV--BFV torsion character.
\end{example}

\begin{remark}[The L-space hypothesis]
The Alexander polynomial does not determine whether a knot is an
L-space knot.  Theorem~\ref{thm:lspace-chain} assumes this property and
uses the established staircase classification; it does not construct a
map from the twisted BF complex.  For general knots, the
Kinoshita--Terasaka example in Proposition~\ref{prop:nogo} obstructs
reconstruction from the Alexander polynomial.
\end{remark}

\section{Scalar extension and pairing obstructions}
\label{sec:categorical}

The preceding theorems compare torsion and Euler characteristic, but a
functor between boundary theories would have to intertwine their gluing
operations \emph{before} taking decategorification.  We now make the simplest
possible such attempt fully precise, prove that it cannot work, and isolate
which additional algebraic operations are indispensable.  The result is an
algebraic theorem about the standard torus algebra; it does not assume a
hypothetical equivalence of BV--BFV and Floer field theories.

\subsection{Derived complexes and torsion}
\label{subsec:levels}
For a knot exterior $X_K$ and nonresonant rank-one character $z$, the
finite-dimensional linear BF model produces a twisted relative complex
$C^\bullet(X_K,R_-;k_z)$ together with cellular bases, and hence a
Milnor-torsion element.  There are at least three distinguishable
constructions:
\begin{enumerate}[label=\textup{(\alph*)},leftmargin=2em]
\item the object $[C^\bullet]$ in the ordinary derived category $D(k)$;
\item the \emph{based} complex with a chosen determinant-line
      trivialization and its torsion;
\item a quantum BV--BFV state with boundary observable algebra,
      polarization, residual fields and higher composition operations.
\end{enumerate}
Our earlier torsion comparison concerns (b).  It gives no
identification of the full structure in (c) with a bordered Floer module.
Even (a) and (b) cannot be conflated:

\begin{lemma}[Acyclicity loses torsion]\label{lem:derived-loses-torsion}
Let $k$ be a field and $u,v\in k^\times$.  The two based cochain
complexes
\[
 E_u:\;0\longrightarrow k\xrightarrow{u} k\longrightarrow 0,
 \qquad
 E_v:\;0\longrightarrow k\xrightarrow{v} k\longrightarrow 0
\]
are isomorphic to the zero object in $D(k)$, but their based torsions
can have different values.  With the chain-torsion convention of
Definition~\ref{def:milnor}, their torsions differ by $u/v$ or its
inverse according to the cochain-to-chain convention.  In particular
there is no function on the isomorphism classes of $D(k)$ that
recovers based Reidemeister torsion for all such complexes.
\end{lemma}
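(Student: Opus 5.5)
The plan is to verify the three claims of Lemma~\ref{lem:derived-loses-torsion} directly, since everything reduces to two-term complexes over a field.

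\emph{Vanishing in $D(k)$.} Since $u\in k^\times$, multiplication by $u$ is an isomorphism $k\to k$. Hence $E_u$ has zero cohomology in both degrees, and the zero map $E_u\to 0$ is a quasi-isomorphism. Explicitly, the contraction $h=u^{-1}$ from degree one to degree zero satisfies $dh+hd=1$ and $h^2=0$. This is the same formula as $h_A$ in Lemma~\ref{lem:explicit-contract}(i), so $E_u\cong 0$ in $D(k)$, and likewise $E_v\cong 0$.

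\emph{The torsions.} Regard $E_u$ as a chain complex concentrated in degrees $1,0$, each with basis the element $1\in k$. Definition~\ref{def:milnor} states that the two-term complex $0\to k\xrightarrow{u}k\to0$ has torsion $u^{-1}$. To confirm this, take $b_0=(u\cdot1)$ as basis of $B_0=\operatorname{im}\partial_1$ and its lift $\widetilde b_0=(1)$ in $C_1$. The degree-one factor is then $[\widetilde b_0/c_1]^{(+1)}=1$ and the degree-zero factor is $[b_0/c_0]^{-1}=u^{-1}$. So $\tau(E_u)=u^{-1}$ and $\tau(E_v)=v^{-1}$, and their ratio is $\tau(E_u)/\tau(E_v)=v/u$. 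If instead the complex is read as a cochain complex, the degrees are swapped and the alternating exponents flip, which replaces $u^{-1}$ by $u$ and inverts the ratio. This accounts for the phrase ``$u/v$ or its inverse.'' When $u\neq v$, for example $u=1$ and $v\neq1$ in any field other than $\F_2$, the two values differ. In $\F_2$ itself $k^\times=\{1\}$, so the qualifier ``can'' only has content when $|k|>2$.

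\emph{Nonexistence of the function.} Suppose $f$ were a function on isomorphism classes of $D(k)$ with $f([C])=\tau(C,c)$ for all such based complexes. Then $E_u\cong0\cong E_v$ would force $u^{-1}=f([0])=v^{-1}$ for all units $u,v$. Choosing $u\ne v$ gives a contradiction. The argument is the same under either sign convention.

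There is no real obstacle in any of these steps. The only point needing care is to keep the exponent convention of \eqref{eq:milnor-definition} consistent, so that the torsion of $E_u$ is stated correctly as $u^{\mp1}$.
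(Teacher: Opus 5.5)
Your proposal is correct and follows the paper's route: the invertible differential makes each complex contractible (you write out the contraction $h=u^{-1}$), the torsion in the fixed bases is governed by the single differential, giving $u^{\mp1}$ and a ratio $u/v$ up to inversion, and choosing $u\neq v$ rules out any function on isomorphism classes in $D(k)$. Your remark that $u\neq v$ needs $|k|>2$ is a worthwhile refinement that the paper's statement leaves implicit: over $\F_2$ every torsion value lies in $\F_2^\times=\{1\}$, so not only the ``can'' clause but also the final nonexistence assertion requires $k\neq\F_2$.
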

\begin{proof}
Both differentials are invertible, so both complexes are contractible.
Their images in $D(k)$ are zero.  In the fixed standard bases the only
nontrivial determinant entering the acyclic torsion is the determinant
of the one differential; changing $u$ to $v$ changes this factor by
$u/v$ (possibly inverted).  Taking $u\ne v$ establishes the claim.
\end{proof}

This small example rules out an exact functor defined \emph{only} on the
ordinary twisted derived BF complex as a mechanism for remembering
Alexander torsion.  Simple-homotopy data, cellular lifts or a
suitable determinant-line enhancement are not optional decorations.
We next show that retaining only a scalar determinant line is still
insufficient for \emph{bordered} Floer gluing.

\subsection{The torus algebra and its chord ideal}
\label{subsec:torus-algebra}
We work initially over $\F_2$ and then extend scalars to an arbitrary
field $k$ of characteristic two, for instance $k=\F_2(t)$.  We use the
ordinary \emph{hat} bordered torus algebra of
Lipshitz--Ozsv\'ath--Thurston \cite[Section~10.1]{LOT};
the generators and relations are also recorded in
\cite[equation (1.18)]{LOTNotes}.  Its underlying
differential is zero and its eight-dimensional vector-space basis is
\begin{equation}\label{eq:torus-basis}
 \mathcal A=\F_2\langle
 \iota_0,\iota_1,
 \rho_1,\rho_2,\rho_3,\rho_{12},\rho_{23},\rho_{123}\rangle.
\end{equation}
The idempotents are orthogonal, $\iota_i\iota_j=\delta_{ij}\iota_i$
and $\iota_0+\iota_1=1$.  The Reeb-chord elements have endpoints
\[
\begin{array}{c|c}
 \iota_0\mathcal A\iota_0 & \iota_0,\rho_{12}\\
 \iota_0\mathcal A\iota_1 & \rho_1,\rho_3,\rho_{123}\\
 \iota_1\mathcal A\iota_0 & \rho_2\\
 \iota_1\mathcal A\iota_1 & \iota_1,\rho_{23}.
\end{array}
\]
The nonzero products of positive-length chords are determined by
\begin{equation}\label{eq:torus-products}
 \rho_1\rho_2=\rho_{12},\quad
 \rho_2\rho_3=\rho_{23},\quad
 \rho_1\rho_{23}=\rho_{12}\rho_3=\rho_{123},
\end{equation}
while $\rho_2\rho_1=\rho_3\rho_2=0$; remaining incompatible
products vanish.  We suppress the refined group grading in the
calculations below, regarding $\mathcal A_k:=\mathcal A\otimes_{\F_2}k$
as a differential-graded algebra concentrated in cohomological
degree zero.  No claim about an equivalence preserving the original
bordered grading is hidden in this simplification.

Let $J\subset\mathcal A_k$ be the two-sided ideal spanned by the
six Reeb-chord elements.  Equation~\eqref{eq:torus-products} gives
\begin{equation}\label{eq:radical}
 J^4=0,\qquad \mathcal A_k/J\cong k\iota_0\oplus k\iota_1\cong k\times k.
\end{equation}
In particular, $J$ is the Jacobson radical of this finite-dimensional
algebra.  The two quotient projections define strict unital
augmentations
\begin{equation}\label{eq:augmentations}
 \epsilon_i:\mathcal A_k\longrightarrow k,\qquad
 \epsilon_i(\iota_j)=\delta_{ij},\quad \epsilon_i(J)=0
 \qquad(i=0,1).
\end{equation}
Each augmentation makes $k$ into a left or right
$\mathcal A_k$-module, denoted $k_{\epsilon_i}$.

\begin{proposition}[Universality of rank-one strict scalarization]
\label{prop:augmentation-classification}
For every field $k$ of characteristic two, the only unital
$k$-algebra homomorphisms $\mathcal A_k\to k$ are
$\epsilon_0$ and $\epsilon_1$.  When both algebras are regarded
as dg algebras concentrated in cohomological degree zero, the
same statement holds for unital $A_\infty$ morphisms:
all higher Taylor maps $f_n$ for $n\ge2$ vanish for degree reasons,
and $f_1$ is one of the two augmentations.
\end{proposition}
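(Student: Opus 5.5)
The plan is to use the nilpotence of the chord ideal $J$. Let $f:\mathcal A_k\to k$ be a unital $k$-algebra homomorphism. By \eqref{eq:radical}, every element $x\in J$ satisfies $x^4=0$. Then $f(x)^4=f(x^4)=0$ in the field $k$, so $f(x)=0$. Hence $f$ vanishes on $J$ and factors through $\mathcal A_k/J\cong k\times k$.

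Next, the idempotents $\iota_0,\iota_1$ are orthogonal with $\iota_0+\iota_1=1$. So $f(\iota_0),f(\iota_1)\in k$ are idempotents in a field, each equal to $0$ or $1$. They also satisfy $f(\iota_0)f(\iota_1)=0$ and $f(\iota_0)+f(\iota_1)=1$. Exactly one of them is $1$, which gives $f=\epsilon_0$ or $f=\epsilon_1$. Conversely, each $\epsilon_i$ is multiplicative, because $J$ is a two-sided ideal and the quotient projection to one factor of $k\times k$ is an algebra map. This settles the strict case and is routine. The characteristic hypothesis plays no role beyond fixing the setting.

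For the $A_\infty$ statement, let $f=\{f_n\}$ be a unital $A_\infty$ morphism between the two algebras, both concentrated in cohomological degree $0$. The Taylor component $f_n:\mathcal A_k^{\otimes n}\to k$ has degree $1-n$, and the source is concentrated in degree $0$. For $n\ge2$ the map $f_n$ would have to land in degree $1-n\ne0$, where $k$ is zero, so $f_n=0$. The first $A_\infty$ relation then reduces to multiplicativity of $f_1$, since both $\mu_1=0$ and $\mu_{\ge3}=0$ and all higher $f_n$ vanish. Strict unitality gives $f_1(1)=1$. The first part then applies.

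The only point needing care is the grading convention. With the cohomological convention the components $f_n$ have degree $1-n$; a homologically graded reader would use $n-1$. Either way the degree is nonzero for $n\ge2$, and I will state this explicitly. I will also note that "unital" means strictly unital here. For homotopy-unital morphisms the vanishing of $f_{\ge2}$ still holds for the same degree reason, and $f_1(1)$ is then an idempotent, hence $0$ or $1$. The value $0$ is excluded by unitality up to homotopy, since there are no nonzero homotopies between degree-zero objects here. I expect no genuine obstacle.
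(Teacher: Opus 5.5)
Your proof is correct and follows the paper's argument essentially step for step. Nilpotence of $J$ forces $f$ to factor through $\mathcal A_k/J\cong k\times k$, and the orthogonal idempotents summing to one then single out $\epsilon_0$ or $\epsilon_1$; in the $A_\infty$ case the degree $1-n$ of $f_n$ kills $f_n$ for $n\ge2$, after which the $n=2$ relation and strict unitality reduce everything to the strict case. Your additional remarks, namely the converse check that each $\epsilon_i$ is a unital algebra map and the aside on homotopy-unital morphisms, are harmless extras that the paper does not include.
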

\begin{proof}
Let $f:\mathcal A_k\to k$ be a unital algebra homomorphism.
For every $x\in J$, equation~\eqref{eq:radical} gives $x^4=0$,
so $f(x)^4=0$.  Since $k$ is a field, $f(x)=0$.  Thus $f$
factors through $\mathcal A_k/J\cong k\times k$.
The orthogonal idempotents $f(\iota_0)$ and $f(\iota_1)$ are
idempotents in a field whose sum is one.  Exactly one is one and
the other zero, proving the first assertion.  For an uncurved
$A_\infty$ morphism of dg algebras concentrated in degree zero,
$f_n$ has cohomological degree $1-n$, so $f_n=0$ for $n\ge2$.
The $n=2$ $A_\infty$ identity then says that $f_1$ respects
products, and strict unitality says $f_1(1)=1$.
\end{proof}

\begin{remark}[Keeping the actual bordered grading]
The preceding assertion about $A_\infty$ morphisms is deliberately
restricted to the \emph{degree-zero regrading} of the hat torus
algebra.  With the usual noncommutative group grading, or with
additional target cohomological degrees, higher maps need not
vanish for this reason.  Nevertheless, the first Taylor map on
cohomology of a strictly unital $A_\infty$ morphism to a
one-dimensional field target is an algebra character, and so
kills $J$.  This is an obstruction to recovering chord information
through the induced cohomology-algebra map; it is \emph{not}
a proof that all imaginable $A_\infty$ functors are impossible.
\end{remark}

\subsection{Derived scalar extension}
\label{subsec:reduction-functor}
The augmentations are not merely formal maps.  Each gives a
well-defined exact derived functor on the category of right modules,
\begin{equation}\label{eq:derived-functor}
 \mathsf R_i:\operatorname{Perf}(\mathcal A_k)
 \longrightarrow D(k),\qquad
 \mathsf R_i(M)=M\otimes^{\mathbf L}_{\mathcal A_k}k_{\epsilon_i}.
\end{equation}
Here $\operatorname{Perf}(\mathcal A_k)$ is the category of perfect
dg modules.  An $A_\infty$ right module may be replaced by a
bar-type dg model before applying the derived tensor product.
Consequently~\eqref{eq:derived-functor} is homotopy invariant and
respects all $A_\infty$ identities \emph{internal to this
extension-of-scalars construction}.  Its construction uses the torus algebra and does not assign a
BV--BFV state to a knot exterior.

For a left module we similarly write
$\mathsf L_i(N)=k_{\epsilon_i}\otimes^{\mathbf L}_{\mathcal A_k}N$.
A gluing-preserving rank-one reduction would require, for all
appropriate module pairs, a natural quasi-isomorphism between
the original derived pairing and the pairing of the reductions:
\begin{equation}\label{eq:desired-pairing}
 M\otimes^{\mathbf L}_{\mathcal A_k}N
 \overset{?}{\simeq}
 \mathsf R_i(M)\otimes_k^{\mathbf L}\mathsf L_i(N).
\end{equation}
The following theorem shows that equation~\eqref{eq:desired-pairing}
cannot hold on the entire bordered module category, even after
forgetting all gradings.

\begin{theorem}[Explicit Reeb-chord obstruction to scalar gluing]
\label{thm:pairing-defect}
Fix $i\in\{0,1\}$ and let $j=1-i$.  In the degree-zero model of
the bordered torus algebra over $k$, take the projective right
module $P_i=\iota_i\mathcal A_k$ and the projective left module
$Q_j=\mathcal A_k\iota_j$.  Then
\begin{equation}\label{eq:pairing-defect}
 H^0\!\left(P_i\otimes^{\mathbf L}_{\mathcal A_k}Q_j\right)
 \cong\iota_i\mathcal A_k\iota_j\ne0,
 \qquad
 \mathsf R_i(P_i)\otimes_k^{\mathbf L}\mathsf L_i(Q_j)
 \simeq0.
\end{equation}
For $i=0$ the first space has dimension three; for $i=1$ it
has dimension one.  Hence no natural quasi-isomorphisms
\eqref{eq:desired-pairing} exist for all perfect modules.
In particular, \emph{scalar character base change does not
preserve the bordered gluing operation}.
\end{theorem}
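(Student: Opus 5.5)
Both sides of \eqref{eq:pairing-defect} can be computed directly because $P_i$ and $Q_j$ are projective, so no resolutions are needed. The plan is to compute each side separately and then argue that the mismatch rules out natural quasi-isomorphisms.

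\emph{Left-hand side.} Since $P_i=\iota_i\mathcal A_k$ is a direct summand of the free module $\mathcal A_k=\iota_0\mathcal A_k\oplus\iota_1\mathcal A_k$, it is projective (and cofibrant in the degree-zero dg setting). The derived tensor product is therefore the ordinary one. The standard idempotent identity
\[
\iota_i\mathcal A_k\otimes_{\mathcal A_k}\mathcal A_k\iota_j\cong\iota_i\mathcal A_k\iota_j,\qquad x\otimes y\mapsto xy,
\]
then computes it. Its inverse is $x\mapsto \iota_i\otimes x$; this is well defined because $x=\iota_i x\iota_j$, and the composite is the identity because $x\otimes y=\iota_i\otimes xy$ by $\mathcal A_k$-balancedness. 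The complex is concentrated in degree zero, so $H^0$ is this space and all other cohomology vanishes. Reading dimensions off the endpoint table: for $(i,j)=(0,1)$ the space is spanned by $\rho_1,\rho_3,\rho_{123}$, giving dimension three; for $(i,j)=(1,0)$ it is spanned by $\rho_2$, giving dimension one. Both are nonzero.

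\emph{Right-hand side.} By the same projectivity,
\[
\mathsf R_i(P_i)=\iota_i\mathcal A_k\otimes_{\mathcal A_k}k_{\epsilon_i}\cong k_{\epsilon_i}\cdot\epsilon_i(\iota_i)=k,
\]
which is again computed by the idempotent identity: for a right module $M$, $M\otimes\iota_i$ picks out the image of $\iota_i$. On the other side,
\[
\mathsf L_i(Q_j)=k_{\epsilon_i}\otimes_{\mathcal A_k}\mathcal A_k\iota_j\cong k_{\epsilon_i}\iota_j=\epsilon_i(\iota_j)k=0,
\]
since $j\ne i$. The tensor product over $k$ of $k$ with $0$ is zero, so the right-hand side is $\simeq0$.

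\emph{Conclusion.} A quasi-isomorphism preserves cohomology, so \eqref{eq:desired-pairing} already fails for the single pair $(P_i,Q_j)$. Natural quasi-isomorphisms on all of $\operatorname{Perf}$ therefore cannot exist, with or without gradings.

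The only point requiring care is the claim that the derived tensor products may be computed underived. This follows because $P_i$ and $Q_j$ are summands of free modules over a dg algebra concentrated in degree zero with zero differential, so they are K-projective. Everything else is bookkeeping with the idempotents and the endpoint table.
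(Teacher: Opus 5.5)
Your proposal is correct and takes essentially the same route as the paper. Projectivity of $P_i$ and $Q_j$ lets you compute both derived tensor products underived. Multiplication identifies the left side with $\iota_i\mathcal A_k\iota_j$, whose dimension is read off the endpoint table, and the augmentation identities give $\mathsf R_i(P_i)\cong k$ and $\mathsf L_i(Q_j)\cong k_{\epsilon_i}\iota_j=0$, so the pair $(P_i,Q_j)$ already gives the contradiction.
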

\begin{proof}
Since $P_i$ and $Q_j$ are summands of free right and left
$\mathcal A_k$-modules, respectively, they are projective.
The derived tensor product is therefore represented by the
ordinary tensor product, and multiplication induces the
isomorphism
\[
 (\iota_i\mathcal A_k)\otimes_{\mathcal A_k}
 (\mathcal A_k\iota_j)\xrightarrow{\;\cong\;}
 \iota_i\mathcal A_k\iota_j.
\]
By the table in Section~\ref{subsec:torus-algebra},
$\iota_0\mathcal A_k\iota_1$ has basis
$\rho_1,\rho_3,\rho_{123}$, while
$\iota_1\mathcal A_k\iota_0$ has basis $\rho_2$.
On the other hand, projectivity and the augmentation identities give
\[
 \mathsf R_i(P_i)\cong\iota_i k_{\epsilon_i}\cong k,
 \qquad
 \mathsf L_i(Q_j)\cong k_{\epsilon_i}\iota_j=0
 \quad(j\ne i).
\]
The right-hand side of~\eqref{eq:pairing-defect} is therefore zero.
A nonzero complex concentrated in degree zero cannot be
quasi-isomorphic to the zero complex.
\end{proof}

\begin{corollary}[Two idempotents without Reeb chords do not suffice]
\label{cor:two-idempotents}
Replacing $\mathcal A_k$ by its semisimple quotient
$\mathcal A_k/J\cong k\times k$ also destroys the off-diagonal
pairings of Theorem~\ref{thm:pairing-defect}.  Thus even a boundary
model carrying \emph{both} scalar idempotent sectors cannot reproduce
all bordered tensor products if it discards the Reeb-chord ideal.
\end{corollary}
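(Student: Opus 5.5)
The plan is to compute both sides of the pairing in the semisimple quotient $\bar{\mathcal A}=\mathcal A_k/J\cong k\times k$ directly, reusing the projective computation from the proof of Theorem~\ref{thm:pairing-defect}. Let $\pi:\mathcal A_k\to\bar{\mathcal A}$ be the quotient map, and write $\bar\iota_0,\bar\iota_1$ for the images of the idempotents. These are orthogonal central idempotents in $\bar{\mathcal A}$. Replacing $\mathcal A_k$ by its quotient sends the projectives $P_i=\iota_i\mathcal A_k$ and $Q_j=\mathcal A_k\iota_j$ to
\[
\bar P_i=P_i\otimes^{\mathbf L}_{\mathcal A_k}\bar{\mathcal A}\cong\bar\iota_i\bar{\mathcal A},
\qquad
\bar Q_j=\bar{\mathcal A}\otimes^{\mathbf L}_{\mathcal A_k}Q_j\cong\bar{\mathcal A}\bar\iota_j.
\]
Both are projective, because induction carries summands of free modules to summands of free modules. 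The derived tensor product is therefore the ordinary one. Multiplication identifies
\[
\bar P_i\otimes^{\mathbf L}_{\bar{\mathcal A}}\bar Q_j\cong \bar\iota_i\bar{\mathcal A}\bar\iota_j .
\]
For $j=1-i$, centrality and orthogonality give $\bar\iota_i\bar{\mathcal A}\bar\iota_j=\bar\iota_i\bar\iota_j\bar{\mathcal A}=0$.

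Next I would compare this with the original pairing. By Theorem~\ref{thm:pairing-defect}, $H^0(P_i\otimes^{\mathbf L}_{\mathcal A_k}Q_j)\cong\iota_i\mathcal A_k\iota_j$. This space has dimension three when $i=0$ (basis $\rho_1,\rho_3,\rho_{123}$) and dimension one when $i=1$ (basis $\rho_2$). All of these basis vectors lie in $J$, which is exactly why they disappear under $\pi$. So no natural quasi-isomorphism
\[
M\otimes^{\mathbf L}_{\mathcal A_k}N\simeq \bar M\otimes^{\mathbf L}_{\bar{\mathcal A}}\bar N
\]
can hold for all perfect modules, since it already fails for $(M,N)=(P_i,Q_j)$. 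A nonzero complex concentrated in degree zero is not quasi-isomorphic to zero.

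Finally I would explain why retaining both sectors does not help. $\bar{\mathcal A}$ carries both scalar idempotent sectors at once. Any boundary model that factors its pairing through $\bar{\mathcal A}$ therefore computes pairings of the form $\bar\iota_i(\cdot)\bar\iota_j$, and these vanish off the diagonal. By contrast, the surviving pairing data in $\mathcal A_k$ live entirely in $J$.

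I do not expect a genuine obstacle here. The one point needing care is that induction along $\pi$ commutes with the identification of projectives, which follows from $\iota_i\mathcal A_k\otimes_{\mathcal A_k}\bar{\mathcal A}\cong\bar\iota_i\bar{\mathcal A}$. I would also make clear that "destroys" refers to the comparison map induced by $\pi$, which sends the nonzero space $\iota_i\mathcal A_k\iota_j\subset J$ to zero.
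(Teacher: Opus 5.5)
Your proposal is correct and follows essentially the paper's argument: the paper's proof is the one-line observation that $\pi$ sends the nonzero space $\iota_i\mathcal A_k\iota_j$ ($i\ne j$) to $\iota_i(\mathcal A_k/J)\iota_j=0$. Your version adds the verification that inducing the projectives $P_i,Q_j$ along $\pi$ and pairing over $\mathcal A_k/J$ actually computes $\bar\iota_i\bar{\mathcal A}\bar\iota_j$, which is the computation the paper later formalizes as $R_S(M)\otimes_S^{\mathbf L}L_S(N)$ in Theorem~\ref{thm:radical-defect}.
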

\begin{proof}
The quotient maps each $\iota_i\mathcal A_k\iota_j$ for $i\ne j$
to $\iota_i(\mathcal A_k/J)\iota_j=0$, although the original
space is nonzero.
\end{proof}

\begin{remark}[Interpretation and precise domain of the theorem]
The modules $P_i$ and $Q_j$ are elementary \emph{algebraic}
projective modules; we do not assert that each is the bordered
Floer invariant of a particular knot exterior.  The theorem
rules out a rank-one, gluing-preserving scalar reduction on the
\emph{whole perfect module category}; it does not rule out
specialized comparisons for particular geometric objects.
It also does not identify the derived functor $\mathsf R_i$ with
the BFV pushforward or with a physical choice of polarization.
Establishing such an identification would require a separate
construction of a boundary observable algebra and a functor
from it to $\mathcal A_k$.
\end{remark}

\subsection{The projective pairing matrix}
\label{subsec:rank-obstruction}
The preceding counterexample uses a specified augmentation.  There is a
stronger obstruction that does not assume that the scalar reductions
come from algebra maps, or even that the reductions are functorial.
Write $P_a=\iota_a\mathcal A_k$ and $Q_b=\mathcal A_k\iota_b$ for
$a,b\in\{0,1\}$.  Their derived pairing is concentrated in degree zero,
with dimension matrix
\begin{equation}\label{eq:projective-pairing-matrix}
 C_{ab}=\dim_k(\iota_a\mathcal A_k\iota_b),\qquad
 C=\begin{pmatrix}2&3\\1&2\end{pmatrix},\qquad \det C=1.
\end{equation}
The determinant in this equation is a determinant of
\emph{integer dimension counts}; it is not a computation in the
characteristic-two coefficient field.

\begin{theorem}[Obstruction to independent scalar tensor factorization]
\label{thm:scalar-rank-obstruction}
There do not exist four perfect complexes $V_0,V_1,W_0,W_1$ over $k$
such that, for every $a,b\in\{0,1\}$,
\begin{equation}\label{eq:scalar-factorization}
 P_a\otimes_{\mathcal A_k}^{\mathbf L}Q_b
 \simeq V_a\otimes_k^{\mathbf L}W_b.
\end{equation}
Consequently no pair of assignments from all perfect right and left
$\mathcal A_k$-modules to perfect scalar complexes can preserve every
derived pairing by ordinary tensor product over $k$.
The obstruction already occurs on the four projective pairs and requires
neither naturality nor an augmentation hypothesis.
\end{theorem}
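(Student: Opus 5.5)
Since the pairings $P_a\otimes^{\mathbf L}_{\mathcal A_k}Q_b$ are all concentrated in degree zero with dimensions $C_{ab}$, the plan is to turn \eqref{eq:scalar-factorization} into a numerical rank-one factorization of the matrix $C$ and show that none exists. The first step is to record, as in the proof of Theorem~\ref{thm:pairing-defect}, that $P_a$ and $Q_b$ are projective. Hence the derived tensor product is the ordinary one, and multiplication identifies it with $\iota_a\mathcal A_k\iota_b$. The table of Section~\ref{subsec:torus-algebra} gives the dimensions $2,3,1,2$, which is \eqref{eq:projective-pairing-matrix}.

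Next, suppose perfect complexes $V_a,W_b$ satisfying \eqref{eq:scalar-factorization} exist. Over a field every perfect complex is quasi-isomorphic to its cohomology, a finite-dimensional graded vector space. Write $v_a(x)=\sum_n \dim H^n(V_a)\,x^n$ and $w_b(x)=\sum_n\dim H^n(W_b)\,x^n$ for the Poincar\'e polynomials, which are Laurent polynomials with nonnegative integer coefficients. The K\"unneth formula over $k$ gives $v_a(x)w_b(x)=C_{ab}$, a constant, because the left side of \eqref{eq:scalar-factorization} sits in degree zero with dimension $C_{ab}$. Since every $C_{ab}$ is nonzero, each $v_a$ and $w_b$ is nonzero.

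A product of nonzero Laurent polynomials with nonnegative coefficients is a monomial only if both factors are monomials, since the top and bottom degrees add and no cancellation occurs. So $v_a=\alpha_a x^{m_a}$ and $w_b=\beta_b x^{-m_a}$ with positive integers $\alpha_a,\beta_b$, and the equations become $C_{ab}=\alpha_a\beta_b$. This forces $C$ to have rank one, so $\det C=0$. But $\det C=4-3=1$, a contradiction. One can even avoid the determinant: $\alpha_1\beta_0=1$ forces $\alpha_1=\beta_0=1$, then $\alpha_1\beta_1=2$ gives $\beta_1=2$, and $\alpha_0\beta_0=2$ gives $\alpha_0=2$, whence $C_{01}=\alpha_0\beta_1=4\ne3$.

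For the ``consequently'' clause, any pair of assignments preserving every derived pairing would in particular produce such $V_a=\mathsf F(P_a)$ and $W_b=\mathsf G(Q_b)$, which is impossible by the above. No naturality or augmentation is used anywhere. The only step requiring care is the monomial argument, that is, ruling out graded complexes whose tensor product has cancellation-free cohomology concentrated in one degree. Positivity of the dimension counts handles this, and it is exactly why the argument uses integer dimensions rather than Euler characteristics or a computation in characteristic two.
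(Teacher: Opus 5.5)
Your proposal is correct and is essentially the paper's proof: K\"unneth over $k$ turns the assumed factorization into an integer outer-product factorization $C_{ab}=\alpha_a\beta_b$, which is incompatible with $\det C=1$. The paper takes $\alpha_a,\beta_b$ to be total homology dimensions $\sum_n\dim_k H^n$, which are already multiplicative, so your Poincar\'e-polynomial monomial step is a harmless but unnecessary refinement; contrary to your closing remark, positivity is not what makes the argument work, since Euler characteristics would also give an outer product.
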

\begin{proof}
For a perfect scalar complex $V$, put
$b(V)=\sum_n\dim_k H^n(V)$, its finite total homology dimension.
The K\"unneth formula over a field gives
$b(V\otimes_k^{\mathbf L}W)=b(V)b(W)$.
If~\eqref{eq:scalar-factorization} held, then
\[
 C_{ab}=b(V_a)b(W_b).
\]
The resulting $2\times2$ matrix is an outer product, so it has determinant
zero over the integers.  This contradicts~\eqref{eq:projective-pairing-matrix}.
\end{proof}

\begin{remark}[Scope of the rank obstruction]
This theorem concerns independent scalar complexes with the
\emph{ordinary tensor product} as pairing.  It does not exclude a target
with an additional boundary kernel, a non-scalar observable algebra, or
a pairing other than ordinary tensor product.  It also uses the same
degree-zero algebra model as Theorem~\ref{thm:pairing-defect}; no
extension to every grading or every geometric subcategory is asserted.
The rank computation is stronger than the augmentation example, but
it is not a general nonexistence theorem for BV--BFV/Floer functors.
\end{remark}

\subsection{Semisimple reduction and its pairing defect}
\label{subsec:radical-defect}
Let $S=\mathcal A_k/J=k\iota_0\oplus k\iota_1$, and denote the
quotient map by $\pi:\mathcal A_k\to S$.  For perfect right and left
modules define
\[
 R_S(M)=M\otimes_{\mathcal A_k}^{\mathbf L}S,\qquad
 L_S(N)=S\otimes_{\mathcal A_k}^{\mathbf L}N.
\]
These are perfect $S$-complexes.  Since $S$ is a product of two fields,
each consists of the two corresponding scalar complexes.  Keeping both
sectors has a useful positive property which must be distinguished from
preservation of the pairing.

\begin{proposition}[The two-sector reduction is conservative]
\label{prop:two-sector-conservative}
The functors $R_S$ and $L_S$ on perfect modules reflect
quasi-isomorphisms.  Equivalently, $R_S(M)\simeq0$ implies $M\simeq0$,
and the analogous assertion holds on the left.
Nevertheless, $R_S$ is not faithful on morphisms, and neither reduction
preserves all derived pairings.
\end{proposition}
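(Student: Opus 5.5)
Conservativity will come from the fact that $\mathcal A_k$ is finite-dimensional with nilpotent radical $J$ ($J^4=0$ by \eqref{eq:radical}), via a derived Nakayama argument. Every perfect right dg module $M$ over $\mathcal A_k$ (concentrated in degree zero, with zero differential) is quasi-isomorphic to a bounded complex of finitely generated projectives. Since $\mathcal A_k$ is a finite-dimensional basic algebra, such a complex has a \emph{minimal} model $F^\bullet$: each $F^n$ is a finite sum of copies of $P_0,P_1$, and the differential has image in $F^{n+1}J$. I would obtain this by splitting off contractible summands $P_a\xrightarrow{\cong}P_a$ whenever some matrix entry of the differential is a unit. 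By \eqref{eq:radical}, the units of $\iota_a\mathcal A_k\iota_a$ are exactly the elements with nonzero $\iota_a$-component, and $\iota_a\mathcal A_k\iota_b\subset J$ for $a\ne b$. This is a finite induction on total rank.

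Tensoring the minimal model with $S$ gives $R_S(F)=F\otimes_{\mathcal A_k}S$, whose differential is $d\otimes 1$ and therefore vanishes, since $J$ maps to zero in $S$. Hence $H^n(R_S(F))\cong F^n\otimes S$, which is nonzero whenever $F^n\ne0$. So $R_S(M)\simeq0$ forces $F=0$, that is, $M\simeq0$. Applying this to the cone of a morphism gives reflection of quasi-isomorphisms, because $R_S$ is exact. The left-module case is identical, or follows from the anti-isomorphism with the opposite algebra.

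For non-faithfulness, I would take the morphism $P_1\to P_0$ given by left multiplication by $\rho_1\in\iota_0\mathcal A_k\iota_1$. It is nonzero in $\operatorname{Hom}(P_1,P_0)\cong\iota_0\mathcal A_k\iota_1$, and because the modules are projective this is also its class in the derived category. After applying $R_S$ it becomes the map $\iota_1S\to\iota_0S$ induced by $\pi(\rho_1)=0$. The failure to preserve pairings follows from Theorem~\ref{thm:pairing-defect} together with Corollary~\ref{cor:two-idempotents}: $P_0\otimes^{\mathbf L}Q_1$ has dimension $3$. On the reduced side, the natural pairing $R_S(P_0)\otimes^{\mathbf L}_S L_S(Q_1)\cong\iota_0S\iota_1=0$, and Theorem~\ref{thm:scalar-rank-obstruction} excludes any factorization sector by sector.

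I expect the main obstacle to be the existence of the minimal model for perfect \emph{dg} (rather than ordinary) modules. The cleanest route is to note that, because $\mathcal A_k$ sits in degree zero, $\operatorname{Perf}(\mathcal A_k)$ is equivalent to $K^b(\operatorname{proj}\mathcal A_k)$, up to retracts. Idempotent completeness holds there since $\mathcal A_k$ has finite global dimension on the relevant side, or because Krull–Schmidt holds for finite-dimensional algebras. The splitting-off procedure then works on homotopy classes.
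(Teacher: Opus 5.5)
Your proof is correct, but the conservativity half takes a genuinely different route from the paper.

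\textbf{The paper's route.} The paper never passes to minimal models. It filters the bimodule $\mathcal A_k$ by $J\supset J^2\supset J^3\supset 0$ and notes that each layer $J^r/J^{r+1}$ is an $S$-bimodule. Hence $M\otimes^{\mathbf L}_{\mathcal A_k}(J^r/J^{r+1})\simeq R_S(M)\otimes^{\mathbf L}_S(J^r/J^{r+1})$ vanishes whenever $R_S(M)$ does. It then runs a downward induction from $J^4=0$ through the triangles of $0\to J^{r+1}\to J^r\to J^r/J^{r+1}\to0$, and finishes with the triangle of $0\to J\to\mathcal A_k\to S\to0$. That argument uses only the nilpotence of $J$ and associativity of derived tensor products, so it gives more than you need:
\begin{itemize}
\item it applies to arbitrary dg modules, not just perfect ones;
\item it avoids both the identification $\operatorname{Perf}(\mathcal A_k)\simeq K^b(\operatorname{proj}\mathcal A_k)$ and the Gaussian-elimination splitting;
\item it reuses the same filtration that drives Theorem~\ref{thm:radical-defect}.
\end{itemize}

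\textbf{Your route.} Your derived-Nakayama argument draws on more structure of the finite-dimensional algebra: the local rings $\iota_a\mathcal A_k\iota_a$, off-diagonal entries lying in $J$, and idempotent completeness. The obstacle you flag is harmless: perfect complexes over a ring are, up to isomorphism, exactly the bounded complexes of finitely generated projectives, and here $\mathcal A_k$ in any case has global dimension at most three. In return you get a sharper statement. On a minimal model $R_S$ has zero differential, so $\dim_k H^n(R_S(M)\iota_a)$ counts exactly the $P_a$ summands in degree $n$. What $R_S$ forgets is precisely the $J$-valued minimal differential, which makes the non-faithfulness visible in the same stroke.

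Your non-faithfulness example $\rho_1\colon P_1\to P_0$ and your pairing counterexample via $P_0$ and $Q_1$ coincide with the paper's.
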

\begin{proof}
Each bimodule $J^r/J^{r+1}$, for $1\le r\le3$, is annihilated by
$J$ on both sides, and hence is an $S$-bimodule.  Associativity of derived
tensor product gives
\[
 M\otimes_{\mathcal A_k}^{\mathbf L}(J^r/J^{r+1})
 \simeq R_S(M)\otimes_S^{\mathbf L}(J^r/J^{r+1}),
\]
where the right side is viewed as a right $\mathcal A_k$-complex by
restriction along $\pi$.  If $R_S(M)\simeq0$, all these complexes vanish.
Starting with $J^4=0$ and using the triangles associated to
$0\to J^{r+1}\to J^r\to J^r/J^{r+1}\to0$, downward induction gives
$M\otimes_{\mathcal A_k}^{\mathbf L}J\simeq0$.
The triangle obtained from $0\to J\to\mathcal A_k\to S\to0$
then gives $M\simeq0$.  Applying this argument to the cone of a
morphism proves reflection of quasi-isomorphisms.  The left-module
proof is identical.

For nonfaithfulness, consider the nonzero right-module morphism
\[
 f:P_1\longrightarrow P_0,\qquad
 f(\iota_1a)=\rho_1a.
\]
It is nonzero in the derived category because both modules are projective
and concentrated in degree zero.  Its reduction is zero since
$\pi(\rho_1)=0$.  Failure to preserve pairings follows already from
Corollary~\ref{cor:two-idempotents}, and is quantified below.
\end{proof}

\begin{theorem}[Canonical defect and its three chord layers]
\label{thm:radical-defect}
For perfect right and left $\mathcal A_k$-modules $M,N$, the quotient
$\pi$ induces a natural pairing map
\begin{equation}\label{eq:semisimple-pairing-map}
 \theta_{M,N}:M\otimes_{\mathcal A_k}^{\mathbf L}N
 \longrightarrow R_S(M)\otimes_S^{\mathbf L}L_S(N).
\end{equation}
Its homotopy fiber is
\begin{equation}\label{eq:radical-fiber}
 \mathscr D(M,N)=
 M\otimes_{\mathcal A_k}^{\mathbf L}J
   \otimes_{\mathcal A_k}^{\mathbf L}N.
\end{equation}
In particular there is a natural distinguished triangle
\begin{equation}\label{eq:radical-triangle}
 \mathscr D(M,N)\longrightarrow
 M\otimes_{\mathcal A_k}^{\mathbf L}N
 \xrightarrow{\theta_{M,N}}
 R_S(M)\otimes_S^{\mathbf L}L_S(N)
 \longrightarrow\mathscr D(M,N)[1].
\end{equation}
The defect admits a finite filtration with three quotients
\begin{equation}\label{eq:radical-graded-pieces}
 \operatorname{gr}^{r}\mathscr D(M,N)\simeq
 R_S(M)\otimes_S^{\mathbf L}V_r
       \otimes_S^{\mathbf L}L_S(N),\qquad
 V_r=J^r/J^{r+1},\quad 1\le r\le3.
\end{equation}
Representatives for the three $S$-bimodules and their endpoint
dimension matrices $G^{(r)}_{ab}=\dim_k(\iota_aV_r\iota_b)$ are
\begin{equation}\label{eq:chord-layer-matrices}
\begin{array}{c|c|c}
 r & \text{basis representatives for }V_r & G^{(r)}\\ \hline
 1 & \rho_1,\rho_2,\rho_3 &
       \begin{pmatrix}0&2\\1&0\end{pmatrix}\\[4pt]
 2 & \rho_{12},\rho_{23} &
       \begin{pmatrix}1&0\\0&1\end{pmatrix}\\[4pt]
 3 & \rho_{123} &
       \begin{pmatrix}0&1\\0&0\end{pmatrix}
\end{array}
\end{equation}
Thus the associated finite spectral sequence computes the homology
of the pairing defect from these layers and their extension maps.
For projective pairs,
\begin{equation}\label{eq:projective-defect}
 \mathscr D(P_a,Q_b)\cong\iota_aJ\iota_b,\qquad
 \bigl(\dim_k\mathscr D(P_a,Q_b)\bigr)_{ab}
   =C-I=\begin{pmatrix}1&3\\1&1\end{pmatrix}.
\end{equation}
The quotient map~\eqref{eq:semisimple-pairing-map} is a quasi-isomorphism
for a given pair exactly when its defect is acyclic.
\end{theorem}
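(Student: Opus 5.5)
The plan starts from the short exact sequence of $\mathcal A_k$-bimodules $0\to J\to\mathcal A_k\xrightarrow{\pi}S\to0$. For perfect $M$ and $N$, we apply the exact functor $M\otimes^{\mathbf L}_{\mathcal A_k}(-)\otimes^{\mathbf L}_{\mathcal A_k}N$ to the corresponding triangle of bimodules. This gives a natural distinguished triangle
\[
 M\otimes^{\mathbf L}J\otimes^{\mathbf L}N\to M\otimes^{\mathbf L}\mathcal A_k\otimes^{\mathbf L}N\to M\otimes^{\mathbf L}S\otimes^{\mathbf L}N\to.
\]
The middle term is canonically $M\otimes^{\mathbf L}_{\mathcal A_k}N$. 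For the third term we use $S\cong S\otimes_S S$ and associativity of derived tensor products:
\[
 M\otimes^{\mathbf L}_{\mathcal A_k}S\otimes^{\mathbf L}_{\mathcal A_k}N\simeq R_S(M)\otimes^{\mathbf L}_S L_S(N).
\]
This needs $S\otimes^{\mathbf L}_S S\simeq S$, which is trivial because $S$ is semisimple and hence every $S$-module is projective. With this identification the map induced by $\pi$ is $\theta_{M,N}$, and the triangle \eqref{eq:radical-triangle} follows, with fiber \eqref{eq:radical-fiber}. Naturality holds because everything is functorial in $M$ and $N$, computed with, say, cofibrant (bar) replacements. The final sentence of the theorem, that $\theta_{M,N}$ is a quasi-isomorphism exactly when $\mathscr D(M,N)$ is acyclic, is then the usual two-out-of-three property for distinguished triangles.

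For the filtration, we use the bimodule filtration $J\supset J^2\supset J^3\supset J^4=0$ from \eqref{eq:radical}. Tensoring the triangles coming from $0\to J^{r+1}\to J^r\to V_r\to0$ gives a three-step filtration of $\mathscr D(M,N)$ whose graded pieces are $M\otimes^{\mathbf L}V_r\otimes^{\mathbf L}N$. Each $V_r$ is killed by $J$ on both sides, so it is an $S$-bimodule pulled back along $\pi$. The same associativity argument used in the proof of Proposition~\ref{prop:two-sector-conservative} then rewrites these pieces as $R_S(M)\otimes^{\mathbf L}_S V_r\otimes^{\mathbf L}_S L_S(N)$, which is \eqref{eq:radical-graded-pieces}. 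The finite spectral sequence is the one attached to this length-three filtration.

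The remaining work is explicit computation. From \eqref{eq:torus-products}, $J^2$ is spanned by $\rho_{12},\rho_{23},\rho_{123}$, and $J^3$ is spanned by $\rho_{123}$. We have to check that no other products of chords survive; in particular $\rho_{12}\rho_3$ and $\rho_1\rho_{23}$ both land in $\rho_{123}$, and every further product vanishes. It follows that $V_1$ has basis $\rho_1,\rho_2,\rho_3$, $V_2$ has basis $\rho_{12},\rho_{23}$, and $V_3$ has basis $\rho_{123}$. Reading off idempotent endpoints from the table in Section~\ref{subsec:torus-algebra} gives the matrices $G^{(r)}$ in \eqref{eq:chord-layer-matrices}.

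For projective pairs, $P_a$ and $Q_b$ are projective, so the derived tensor products are ordinary ones:
\[
 \iota_a\mathcal A_k\otimes_{\mathcal A_k}J\otimes_{\mathcal A_k}\mathcal A_k\iota_b\cong\iota_aJ\iota_b.
\]
Its dimension is $C_{ab}-\delta_{ab}$, since the only non-chord basis elements are the idempotents on the diagonal. As a check, this agrees with $\sum_r G^{(r)}=C-I$.

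I expect the main obstacle to be identifying $\theta_{M,N}$ canonically, as an actual map compatible with the triangle rather than merely an abstract isomorphism of the third term. I would handle this by working with bar resolutions and writing $\theta$ explicitly as $1\otimes\pi\otimes1$.
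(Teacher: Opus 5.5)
Your proposal is correct and follows the paper's proof: the bimodule sequence $0\to J\to\mathcal A_k\xrightarrow{\pi}S\to0$ tensored on both sides plus associativity for the triangle, the $J$-adic filtration whose layers $V_r$ are $S$-bimodules, the explicit chord computation of $J^2$ and $J^3$, and ordinary tensor products for the projective pairs. The only differences are cosmetic: the paper tensors bounded projective representatives of $M,N$ with $J\supset J^2\supset J^3$, so the spectral sequence comes from an honest filtered complex rather than a tower of triangles, and $S\otimes^{\mathbf L}_S S\simeq S$ holds for any ring, so semisimplicity is not needed at that step.
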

\begin{proof}
The short exact sequence
$0\to J\to\mathcal A_k\xrightarrow{\pi}S\to0$
is a short exact sequence of $\mathcal A_k$-bimodules.  Derived tensoring
on the left with $M$ and on the right with $N$ gives a distinguished
triangle.  Its middle term is $M\otimes_{\mathcal A_k}^{\mathbf L}N$.
Associativity identifies its last term as
\[
 M\otimes_{\mathcal A_k}^{\mathbf L}S
   \otimes_{\mathcal A_k}^{\mathbf L}N
 \simeq
 (M\otimes_{\mathcal A_k}^{\mathbf L}S)
 \otimes_S^{\mathbf L}
 (S\otimes_{\mathcal A_k}^{\mathbf L}N).
\]
This proves~\eqref{eq:radical-triangle} and the fiber assertion.

The multiplication table gives
\[
 J^2=\langle\rho_{12},\rho_{23},\rho_{123}\rangle_k,\qquad
 J^3=\langle\rho_{123}\rangle_k,\qquad J^4=0.
\]
Choose bounded complexes of finitely generated projective modules
representing $M$ and $N$.  Tensoring them with the filtration
$J\supset J^2\supset J^3\supset0$ gives a filtration of a representative
of~\eqref{eq:radical-fiber}; projectivity preserves the short exact
sequences in each degree.  Its quotients are
$M\otimes_{\mathcal A_k}^{\mathbf L}V_r
  \otimes_{\mathcal A_k}^{\mathbf L}N$.
Since the two actions on $V_r$ factor through $S$, associativity
identifies these with~\eqref{eq:radical-graded-pieces}.
The endpoint table for the chords proves~\eqref{eq:chord-layer-matrices}.
The filtration is finite and the representative complexes are bounded
and finite-dimensional over $k$, so its spectral sequence converges
to $H^\bullet(\mathscr D(M,N))$.

For $M=P_a,N=Q_b$, no resolutions are needed.  Multiplication identifies
$P_a\otimes_{\mathcal A_k}J\otimes_{\mathcal A_k}Q_b$
with $\iota_aJ\iota_b$, yielding~\eqref{eq:projective-defect}.
The last assertion follows directly from the distinguished triangle.
\end{proof}

\begin{corollary}[Euler pairing and its exact scalar defect]
\label{cor:Euler-defect}
Let
\[
 m_a=\chi(R_S(M)\iota_a),\qquad
 n_b=\chi(\iota_bL_S(N)),
\]
and regard $m,n$ as two-component columns of integers.  Then
\begin{align}
 \chi(M\otimes_{\mathcal A_k}^{\mathbf L}N)&=m^{\mathsf T}Cn,
       \label{eq:Cartan-Euler}\\
 \chi(R_S(M)\otimes_S^{\mathbf L}L_S(N))&=m^{\mathsf T}n,
       \label{eq:semisimple-Euler}\\
 \chi(\mathscr D(M,N))&=m^{\mathsf T}(C-I)n.
       \label{eq:defect-Euler}
\end{align}
Here the Euler characteristics refer to the ordinary cohomological
grading of the specified degree-zero algebra model.
\end{corollary}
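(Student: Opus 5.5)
The plan is to reduce all three identities to the projective case by additivity of Euler characteristics, and then read off the numbers from the dimension matrices already computed.

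\textbf{Step 1: reduce to graded pieces of $M$ and $N$.} Choose bounded complexes of finitely generated projective $\mathcal A_k$-modules representing $M$ and $N$. Since $\mathcal A_k=P_0\oplus P_1$ as a right module and $\mathcal A_k=Q_0\oplus Q_1$ as a left module, every finitely generated projective module is a finite direct sum of copies of the $P_a$ (respectively $Q_b$). This needs Krull--Schmidt, which holds because $\mathcal A_k$ is a finite-dimensional algebra over a field whose indecomposable projectives are exactly $\iota_a\mathcal A_k$. So in cohomological degree $n$ we can write $M^n\cong\bigoplus_a P_a^{\oplus\mu_a^n}$ and $N^n\cong\bigoplus_b Q_b^{\oplus\nu_b^n}$.

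\textbf{Step 2: compute the Euler vectors $m$ and $n$.} All three complexes in the statement are computed by ordinary tensor products of these representatives. Each is then a bounded complex of finite-dimensional $k$-spaces, so its Euler characteristic equals the alternating sum of chain dimensions. Using $P_a\otimes_{\mathcal A_k}S=\iota_aS$ and the idempotent decomposition, we get $R_S(M)\iota_a$ in degree $n$ of dimension $\mu_a^n$. Hence $m_a=\sum_n(-1)^n\mu_a^n$, and likewise $n_b=\sum_n(-1)^n\nu_b^n$.

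\textbf{Step 3: evaluate each identity on chains.} The degree-$N$ chain group of $M\otimes_{\mathcal A_k}N$ is $\bigoplus_{p+q=N}\bigoplus_{a,b}\mu_a^p\nu_b^q\,\iota_a\mathcal A_k\iota_b$. Its dimension is $\sum\mu_a^p C_{ab}\nu_b^q$. Taking the alternating sum, the sign $(-1)^{p+q}$ factors, which gives \eqref{eq:Cartan-Euler}. The same computation with $S$ in place of $\mathcal A_k$ uses $\dim\iota_aS\iota_b=\delta_{ab}$ and gives \eqref{eq:semisimple-Euler}. Note that the derived tensor over $S$ of the reductions is again computed by the projective $S$-modules $\iota_aS$.

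\textbf{Step 4: the defect identity.} For \eqref{eq:defect-Euler}, use additivity of Euler characteristic in the distinguished triangle \eqref{eq:radical-triangle}. This gives $\chi(\mathscr D)=\chi(M\otimes^{\mathbf L}N)-\chi(R_S(M)\otimes_S^{\mathbf L}L_S(N))=m^{\mathsf T}(C-I)n$. As a cross-check, the same result comes directly from $\dim\iota_aJ\iota_b=(C-I)_{ab}$ in \eqref{eq:projective-defect}.

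The main point requiring care is Step 1: the Euler vectors must be well defined and independent of the chosen resolution. This is immediate because $m,n$ are defined intrinsically via $R_S$ and $L_S$, and Step 2 shows they agree with the multiplicity counts. Everything else is bilinear bookkeeping.
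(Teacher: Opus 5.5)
Your argument is correct, but it runs in the opposite direction from the paper's.

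\textbf{Your route.} You work at the chain level. You decompose each term of a bounded projective representative into copies of $P_0,P_1$ (respectively $Q_0,Q_1$). This identifies $m,n$ with the alternating multiplicity vectors, i.e.\ with the classes of $M,N$ in $K_0\cong\mathbb Z^2$. Then \eqref{eq:Cartan-Euler} and \eqref{eq:semisimple-Euler} become bilinear bookkeeping with $\dim\iota_a\mathcal A_k\iota_b=C_{ab}$ and $\dim\iota_aS\iota_b=\delta_{ab}$. You then obtain \eqref{eq:defect-Euler} by additivity in the triangle \eqref{eq:radical-triangle}.

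\textbf{The paper's route.} The paper never decomposes projective terms into indecomposables. It uses only that $R_S(M)$ and $L_S(N)$ split along the two field sectors of $S=k\iota_0\oplus k\iota_1$. Multiplicativity of $\chi$ over $k$ then gives $m^{\mathsf T}n$ for the semisimple pairing. It also gives $m^{\mathsf T}G^{(r)}n$ for each layer $\operatorname{gr}^r\mathscr D(M,N)\simeq R_S(M)\otimes_S V_r\otimes_S L_S(N)$ of Theorem~\ref{thm:radical-defect}. Summing the layers with $I+G^{(1)}+G^{(2)}+G^{(3)}=C$ yields the defect formula, and only then does the triangle give the Cartan formula \eqref{eq:Cartan-Euler}. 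So the paper derives the Cartan pairing from the chord-layer defect, whereas you derive the defect from the Cartan pairing.

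\textbf{What each buys.} Your approach is more elementary and makes transparent that the Euler pairing is the Cartan bilinear form on $K_0(\operatorname{Perf}\mathcal A_k)$. It costs the primitive-idempotent (Krull--Schmidt) description of projectives, which you justify adequately. The paper's approach stays intrinsic to the reductions $R_S,L_S$. It also records the Euler contribution of each Reeb-chord layer separately, which is the finer information the surrounding discussion is after.

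A minor notational point: in Step~3 you use $N$ both for the left module and for the total degree index, which you should rename.
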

\begin{proof}
Euler characteristic is additive under the finite filtration and
under~\eqref{eq:radical-triangle}, and multiplicative for tensor
products of finite scalar complexes.  The decomposition
$S=k\iota_0\oplus k\iota_1$ therefore gives
\[
 \chi(\operatorname{gr}^{r}\mathscr D(M,N))
       =m^{\mathsf T}G^{(r)}n.
\]
The matrices satisfy $I+G^{(1)}+G^{(2)}+G^{(3)}=C$.
Adding the three layer contributions proves the formulas.
\end{proof}

\begin{remark}[What the filtration does and does not reconstruct]
Conservativity detects whether a perfect object or a morphism cone is
acyclic; it is weaker than faithfulness or equivalence of categories.
Likewise, the matrices~\eqref{eq:chord-layer-matrices} determine the
Euler defect, but do not determine the extension maps in the filtered
defect complex.  No direct-sum splitting of
\eqref{eq:radical-graded-pieces} is asserted for general $M,N$.
In particular, the dimensions of the scalar sectors alone do not
reconstruct the original pairing.  The nontrivial products
$\rho_1\rho_2=\rho_{12}$,
$\rho_2\rho_3=\rho_{23}$ and
$\rho_1\rho_2\rho_3=\rho_{123}$
are precisely the algebraic information omitted by the semisimple
quotient.

This calculation supplies a concrete condition on a proposed enriched
boundary comparison: if it uses $\pi$ as its reduction, then its
additional gluing data must represent the bimodule $J$ and the derived
defect~\eqref{eq:radical-fiber}, rather than simply retaining two
independent torsion lines.  It is an algebraic requirement; a realization
by quantum BV--BFV states, in characteristic zero and with the original
bordered grading, would still require a separate construction.
\end{remark}

\subsection{Torsion and object reconstruction}
\label{subsec:alextrivial}
Theorem~\ref{thm:pairing-defect} is an obstruction to preserving
\emph{composition}.  There is a separate obstruction to recovering
\emph{objects} from the scalar torsion function, even if all
nonresonant characters are retained.  Define the normalized
meridional BF torsion function
\[
 \mathscr T_K(t)=\tau(X_K,R_-;\Q(t)_\phi)^{\rm sym}
 =\Delta_K(t).
\]
Here the equality uses the chosen symmetric normalization and
means the particular relative Milnor-torsion representative used
throughout this paper.

\begin{theorem}[Failure of object reconstruction from all scalar twists]
\label{thm:all-characters}
There are knots $K$ and $K'$ with
$\mathscr T_K(t)=\mathscr T_{K'}(t)$ as rational functions,
and hence with equal nonresonant specializations at \emph{every}
character, but whose Alexander-graded knot Floer homologies
are not isomorphic.  Consequently no rule assigning a Floer
object \emph{solely as a function of the normalized scalar
meridional BF torsion} can recover knot Floer homology for all knots.
\end{theorem}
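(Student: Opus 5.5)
The plan is to reuse the example from Proposition~\ref{prop:nogo}, now phrased through the relative torsion function $\mathscr T_K$. Take $K=U$, the unknot, and $K'=KT$, a nontrivial Kinoshita--Terasaka knot of genus $2$.

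First, I identify the torsion functions. By Corollary~\ref{cor:bfalex}, for every knot the normalized meridional relative torsion satisfies $\mathscr T_K(t)=\mathscr B_K(t)=\Delta_K(t)$ as an element of $\Z[t,t^{-1}]\subset\Q(t)$. The cited computation gives $\Delta_{KT}(t)=1=\Delta_U(t)$, so $\mathscr T_{KT}=\mathscr T_U=1$ as rational functions.

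Second, I pass to characters. For a nonresonant $z\ne1$, Proposition~\ref{prop:relative} shows that the specialized relative torsion at $L_z$ is $\Delta_K(z)$ after the same symmetric normalization. The specializations therefore agree at every nonresonant character. Here every $z\ne1$ is nonresonant, since $\Delta\equiv1$ never vanishes, and Lemma~\ref{lem:corner} supplies the acyclicity.

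Third, I distinguish the Floer groups. For the unknot, $\HF(U,a)=0$ for $a\ne0$. For $KT$, genus detection \cite{OzsvathSzaboGenus} gives $\HF(KT,g(KT))\ne0$ with $g(KT)=2$. So the Alexander-graded groups in grading $2$ differ, and no isomorphism of Alexander-graded vector spaces exists. The final "consequently" clause is then formal: a rule $\Phi$ with $\HF(K)\cong\Phi(\mathscr T_K)$ for all $K$ would force $\HF(U)\cong\HF(KT)$.

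The only genuinely external step, and so the main obstacle, is the input about $KT$: that $\Delta_{KT}=1$ and that $g(KT)=2$, so that its top Alexander grading is nonzero. I would cite these rather than reprove them, as in Proposition~\ref{prop:nogo}. The rest of the argument is a one-line reduction to that proposition. Any pair of distinct knots with trivial Alexander polynomial would also work, since genus detection separates each nontrivial one from the unknot.
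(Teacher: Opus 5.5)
Your proof is correct and is essentially the paper's own argument: the unknot and the Kinoshita--Terasaka knot both have $\mathscr T=\Delta=1$, so every specialization agrees, while genus detection gives $\HF(KT,2)\neq0=\HF(U,2)$. One small caveat concerns only your closing aside. Genus detection separates a nontrivial knot with $\Delta=1$ from the unknot, but not two nontrivial such knots from each other: a chiral knot with $\Delta=1$ and its mirror have isomorphic Alexander-graded groups. The aside should therefore read ``the unknot paired with any nontrivial knot of trivial Alexander polynomial.''
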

\begin{proof}
Let $K$ be the unknot and $K'$ the Kinoshita--Terasaka knot.
Both have Alexander polynomial $1$, so their relative torsion
functions and every specialization coincide.  The unknot has
Seifert genus zero, whereas the Kinoshita--Terasaka knot has
genus two.  Knot Floer homology detects Seifert genus
\cite{OzsvathSzaboGenus,OzsvathSzaboMutation}; hence
$\widehat{HFK}(K)$ and $\widehat{HFK}(K')$ have different
Alexander support and are nonisomorphic.
\end{proof}

\subsection{The resulting comparison diagram}
The results above give the following diagram of \emph{known}
operations and obstructions.  The upper row is a shadow identity;
there is no implied upper-to-lower equivalence of categories:
\[
\begin{tikzcd}[column sep=large,row sep=large]
 \text{based relative BF complexes}
   \arrow[r,"\text{torsion}"]
 & \Q(t)^\times/\{\pm t^n\}
 & \text{knot Floer complexes}
   \arrow[l,swap,"\chi_A"]\\
 \operatorname{Perf}(\mathcal A_k)
   \arrow[r,swap,"-\otimes^{\mathbf L}_{\mathcal A_k}k_{\epsilon_i}"]
 & D(k)
 \end{tikzcd}
\]
The top horizontal comparison holds on the geometrically
specified knot data with the normalizations of this paper.
The lower arrow is an actual derived, $A_\infty$-compatible
base-change functor.  Theorem~\ref{thm:pairing-defect} shows
why it is \emph{not} a tensor/gluing bridge to a scalar boundary
theory.  Lemma~\ref{lem:derived-loses-torsion} says why its
ordinary derived output cannot even retain based torsion without
additional structure.  Theorem~\ref{thm:all-characters} shows
that scalar torsion is insufficient to reconstruct all knot
Floer objects.  The distinction between these statements is
important: they rule out three particular modes of comparison,
not every potential enriched BV--BFV/Floer correspondence.

\subsection{Requirements for an enriched comparison}
The obstruction suggests a more sharply formulated research
problem.  Rather than seeking a functor from one-dimensional
BFV torsion lines to bordered modules, one would need an
enhancement $\mathcal B_{\partial}$ of the boundary theory,
containing at least nonzero off-diagonal morphisms that can model
the Reeb-chord sector, together with a homotopy-coherent
comparison with the bordered torus algebra.  Such an enhancement
should respect the knot's peripheral parametrization, derived
pairings, gradings and the relative determinant-line
normalization.  The existence of a geometrically natural
BV--BFV construction with all these properties remains open
within the present work; we do \emph{not} claim to have
constructed it.

\section{A dg resolution of the torus algebra}
\label{sec:constructive}
The scalar obstructions do not prevent a comparison that retains the chord
algebra.  We now construct a small dg resolution of that algebra and prove
a pairing-preserving equivalence of perfect-module categories.
We also realize the differential on its framed representation space as a
classical BFV Hamiltonian and as the target of a one-dimensional BV--BFV
cotangent model.  These are two explicit constructions.  They must not
be conflated with a derivation of the resolved algebra from the
three-dimensional BF theory on a knot exterior: that identification is
not supplied by choosing a resolution of the Floer algebra.

\subsection{A six-generator resolution}
\label{subsec:six-generator}
In this subsection $k$ is any field.  Let $A_k$ be the algebra with the
same generators, idempotents and unsigned multiplication table as the
degree-zero torus algebra.  In characteristic two it is the
$\mathcal A_k$ of Section~\ref{sec:categorical}.  In other characteristics
it denotes an algebraic lift of that table, not an assertion about
sign-refined bordered Floer invariants.  Set
$a=\rho_1$, $b=\rho_2$, $c=\rho_3$; thus
\[
 a,c\in\iota_0A_k\iota_1,\qquad
 b\in\iota_1A_k\iota_0,\qquad ba=cb=0.
\]
The multiplication convention reads paths from left to right, as in
\cite[equation (1.18)]{LOTNotes}.  Define the free graded path algebra
$B_k$ over $S=k\iota_0\oplus k\iota_1$ by six generators:
\begin{equation}\label{eq:resolved-generators}
\begin{array}{c|c|c}
 \text{generator} & \text{cohomological degree} & \text{idempotent endpoints}\\ \hline
 a,c & 0 & \iota_0(-)\iota_1\\
 b   & 0 & \iota_1(-)\iota_0\\
 u   & -1 & \iota_1(-)\iota_1\\
 v   & -1 & \iota_0(-)\iota_0\\
 w   & -2 & \iota_0(-)\iota_1
\end{array}
\end{equation}
There are no relations besides the endpoint rules.  In particular $B_k$
has finitely many generators but is infinite-dimensional.  Its differential
is the graded derivation determined by
\begin{equation}\label{eq:resolved-differential}
 da=db=dc=0,\qquad du=ba,\qquad dv=cb,\qquad dw=cu-va.
\end{equation}
The generators $u,v$ resolve the two relations, and $w$ resolves their
overlap.  There is an augmented dg-algebra map
\begin{equation}\label{eq:resolved-counit}
 q:B_k\longrightarrow A_k,\qquad
 q(a)=\rho_1,\quad q(b)=\rho_2,\quad q(c)=\rho_3,\quad
 q(u)=q(v)=q(w)=0.
\end{equation}

\begin{theorem}[Explicit dg resolution of the torus algebra]
\label{thm:small-resolution}
The differential~\eqref{eq:resolved-differential} squares to zero, and
$q$ is a quasi-isomorphism.  More precisely,
\[
 H^n(B_k)=0\quad(n\ne0),\qquad
 H^0(B_k)=
 k\langle\iota_0,\iota_1,a,b,c,ab,bc,abc\rangle\cong A_k.
\]
Every expanded path containing a forbidden subword admits an explicit
contracting homotopy.  Thus the assertion follows from a direct
word-by-word contraction, without assuming Koszul formality.
\end{theorem}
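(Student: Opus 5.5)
The plan is to check $d^2=0$ on generators, and then to prove the quasi-isomorphism by an explicit contraction onto normal words. The contraction will be organized as a matching of monomials, in the style of algebraic Morse theory or Anick's resolution for monomial algebras.

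\emph{Step 1: $d^2=0$ and $q$ is a chain map.} Since $d$ is a graded derivation of a free path algebra, $d^2=\tfrac12[d,d]$ is again a derivation, so it suffices to check $d^2$ on the six generators. For $a,b,c,u,v$ this is immediate from $d(ba)=d(cb)=0$. For $w$, using $|c|=0$ and $|v|=-1$:
\[
 d(cu)=c\,du=cba,\qquad d(va)=(dv)a-v\,da=cba,
\]
so $d^2w=0$. The map $q$ is a dg map because it kills $u,v,w$ and sends $du=ba$ and $dv=cb$ to $\rho_2\rho_1=\rho_3\rho_2=0$. Note that left-to-right path reading turns $ab,bc,abc$ into $\rho_{12},\rho_{23},\rho_{123}$.

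\emph{Step 2: weight grading.} Assign weights $a,b,c\mapsto1$, $u,v\mapsto2$, $w\mapsto3$. The differential preserves weight, and each weight piece $B_k^{(n)}$ is a finite-dimensional complex spanned by endpoint-admissible words in the six letters. It therefore suffices to show, weight by weight, two facts.
\begin{itemize}
\item $H(B_k^{(n)})$ is spanned by the normal words in $a,b,c$, namely those avoiding $ba$ and $cb$.
\item These normal words are exactly $\iota_0,\iota_1,a,b,c,ab,bc,abc$.
\end{itemize}
For the second fact, the endpoint rules make $a,c$ go $0\to1$ and $b$ go $1\to0$, so an admissible word alternates between $\{a,c\}$ and $b$. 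Forbidding $ba$ and $cb$ then caps normal words at $abc$. From these two facts, $H^0\cong A_k$ and $H^{\neq0}=0$ follow at once.

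\emph{Step 3: the contraction, via a Morse matching.} Expand each monomial $m$ into its underlying $a,b,c$-word by substituting $u\mapsto ba$, $v\mapsto cb$, $w\mapsto cba$. Scan from the left for the first position where $m$ is not ``reduced''. Then pair $m$ with the word obtained by either collapsing the leftmost forbidden subword or expanding the leftmost resolving letter:
\begin{itemize}
\item collapse: $ba\to u$, $cb\to v$, and the overlap $cu\to w$ (or, from the other side, $va\to w$);
\item expand: $u\to ba$ and $v\to cb$, or $w\to$ its leading term $cu$.
\end{itemize}
The leading coefficient of each matched differential is $\pm1$. Define $h$ on a monomial as the inverse of its matched arrow and zero otherwise, and extend it by the usual Morse zig-zag formula.

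One must then verify that the matching is acyclic. This follows because each step strictly increases a lexicographic complexity: the position of the first unreduced site, followed by the number of $u,v,w$ letters. Given acyclicity, the Morse lemma yields $dh+hd=1-\pi$, where $\pi$ projects onto the unmatched (critical) cells. One must also check that the critical cells are exactly the normal $a,b,c$-words. Because the only overlap of leading terms of the relations is $cba$, which is resolved by $w$, no further critical generators arise; this is Anick's chain count for the monomial algebra.

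\emph{Main obstacle.} The delicate point is this last verification. It requires showing that words containing a $w$, or containing $u$ or $v$ adjacent to further letters (for example $uu$, $vv$, $uba$, $cbv$), are all matched, and that the second term $-va$ of $dw$ never creates a cycle in the matching. I would first try the leftmost-occurrence rule described above and check all length-three and length-four patterns by hand. If a cycle appears, I would switch to a rightmost-occurrence rule, which trades the roles of $cu$ and $va$. As a fallback, the same weight filtration lets one compare Euler characteristics: the Hilbert series identity $\bigl(1-H_{\mathrm{gen}}(t)\bigr)^{-1}=H_{A}(t)$ holds in $2\times2$ matrix form, with $t$ recording weight and signs from cohomological degree. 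Combined with surjectivity of $H^0(q)$ and vanishing of $H^{-1}$ checked directly, this would reduce the claim to the statement that no positive-degree cohomology can cancel.
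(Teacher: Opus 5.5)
\textbf{There is a genuine gap.} Steps 1 and 2 are sound. One slip: $d^2=\tfrac12[d,d]$ is not available in characteristic two, which the theorem covers. But $d^2$ is still a derivation, because the cross terms $\pm\,dx\,dy$ cancel in any characteristic. Also, $H^0(B_k)\cong A_k$ is immediate, since $d(B_k^{-1})$ is exactly the two-sided ideal generated by $ba$ and $cb$. So the whole content of the theorem is the vanishing of $H^n(B_k)$ for $n<0$, and Step~3 does not establish it.

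\emph{The matching is not an involution as written.} On the block $\{cba,\,cu,\,va,\,w\}$ your leftmost rule sends $cba$ to $va$. But $cu$ could be paired either with $w$ (collapse the overlap) or with $cba$ (expand $u$). The alternative ``$va\to w$'' also conflicts with ``$w\to cu$''. You must fix one of the two consistent choices, $\{cba\leftrightarrow va,\ cu\leftrightarrow w\}$ or $\{cba\leftrightarrow cu,\ va\leftrightarrow w\}$.

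\emph{Acyclicity and the critical cells are only asserted.} Your monotone complexity is not justified. A $d$-edge in a zig-zag path may act on any $u,v,w$ letter, not only at the leftmost unreduced site. Collapsing $cu\to w$ does not change the number of such letters. Checking patterns of length three and four by hand does not control arbitrarily long words unless you know why the complex is local.

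\emph{The Euler-characteristic fallback cannot close the gap.} Your matrix identity is correct: $(1-H_{\mathrm{gen}})^{-1}$ has entries $1+t^2$, $2t+t^3$, $t$, $1+t^2$. But it only controls alternating sums in each weight, so classes in, say, degrees $-2$ and $-3$ could cancel. (Your ``positive-degree'' should read ``negative-degree'', since $B_k$ lives in degrees $\le0$.) ``Checking $H^{-1}=0$ directly'' in every weight is the same problem again.

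\textbf{The missing idea, which is the paper's argument.} The differential $d$ preserves the expanded $a,b,c$-word itself, not merely its weight. So $B_k$ splits as a direct sum of finite complexes indexed by expanded paths.
\begin{enumerate}
\item In a fixed path, the occurrences of $ba$ and $cb$ fall into disjoint clusters of type $ba$, $cb$ or $cba$. No three can chain, because no forbidden word begins with $a$ or ends with $c$.
\item The basis monomials over that path are tilings, and tilings factor cluster by cluster.
\item Hence each summand is a Koszul-signed tensor product of one-dimensional arrow factors with the complexes $ku\to k\,ba$, $kv\to k\,cb$ and $kw\to k\,cu\oplus k\,va\to k\,cba$.
\item Each of these has an explicit contraction; for the last one, $h(cba)=cu$, $h(va)=-w$, $h(cu)=h(w)=0$.
\item Tensoring the contraction of the first cluster with identities contracts every summand containing a forbidden word. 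The survivors are the eight normal words in degree $0$.
\end{enumerate}
Your leftmost matching, once made consistent, is exactly ``contract the leftmost cluster, identity elsewhere''. Its acyclicity comes from this tensor decomposition, not from a complexity function.

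Alternatively, you could cite Koszulity of quadratic monomial algebras. $B_k$ is the cobar construction on the Koszul dual coalgebra, whose positive-weight part is spanned by $a,b,c,ba,cb,cba$. The statement deliberately avoids that input.
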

\begin{proof}
The only generator with a potentially nonzero second differential is $w$,
and
\[
 d^2w=c\,du-(dv)a=cba-cba=0.
\]
Since $d$ is a graded derivation, this proves $d^2=0$.

For a monomial in $B_k$, form its \emph{expanded path} by replacing
$u$ by $ba$, $v$ by $cb$, and $w$ by $cba$.  The differential preserves
the expanded path.  Hence the underlying complex of $B_k$ is the
direct sum of the complexes associated to individual expanded paths
in the alphabet $a,b,c$, together with the two idempotents.

Mark each occurrence of $ba$ or $cb$ in an expanded path.  Two marked
occurrences overlap only as the pair $cb,ba$ inside $cba$.
There is no chain of three overlapping occurrences, since the letters
that finish $ba$ and begin $cb$ are incompatible with extending such a
chain.  The marked occurrences therefore split into disjoint clusters
of two types: a single relation word $ba$ or $cb$, or the overlapping
word $cba$.  The remaining intervals consist of ordinary arrows.

The complex for a single relation cluster is
$ku\xrightarrow{d}kba$, or $kv\xrightarrow{d}kcb$, with the first
term in degree $-1$.  It contracts by $h(ba)=u$, respectively
$h(cb)=v$, and $h(u)=h(v)=0$.
For an overlap cluster the complex is
\[
 kw\xrightarrow{d} k(cu)\oplus k(va)
   \xrightarrow{d} k(cba),
\]
in degrees $-2,-1,0$, with
$dw=cu-va$ and $d(cu)=d(va)=cba$.
An explicit contraction is
\begin{equation}\label{eq:overlap-contraction}
 h(cba)=cu,\qquad h(cu)=0,\qquad h(va)=-w,\qquad h(w)=0.
\end{equation}
Direct substitution gives $dh+hd=1$ and $h^2=0$.
The formulas remain valid in characteristic two, where the minus
sign becomes a plus sign.

A basis monomial with a fixed expanded path is exactly a tiling of
that path by its arrows and the generators $u,v,w$.  Such tilings
factor over the disjoint clusters just described.  The corresponding
complex is therefore a tensor product of the cluster complexes, with
the ordinary Koszul signs, and one-dimensional degree-zero factors
for the remaining arrow intervals.  If a relation cluster is present,
apply its contraction to the first such factor and the identity to
the others.  This contracts the whole expanded-path complex.

An expanded path without a forbidden subword contributes precisely one
degree-zero basis element.  The only such nonempty paths are
$a,b,c,ab,bc,abc$: a compatible path of length four necessarily
contains $ba$ or $cb$.  Including the two idempotents gives the eight
listed cohomology classes.  The map $q$ identifies their products
with the torus multiplication table and annihilates all contracted
blocks, proving the quasi-isomorphism.
\end{proof}

\begin{remark}[Why the coherence generator is indispensable]
If one adds only $u,v$ with $du=ba,dv=cb$, the element $cu-va$
is a degree-$-1$ cycle.  Its expanded-path block has no degree-$-2$
generator, so it is not a boundary.  Such a two-relation construction
is not a dg resolution of $A_k$.  Adding $w$ eliminates exactly this
overlap class; Theorem~\ref{thm:small-resolution} proves that no further
generators are needed.  This is the concrete coherence that is lost
when the chords are replaced by scalar idempotents.
\end{remark}

\subsection{Derived equivalence and tensor pairings}
\label{subsec:resolved-equivalence}
\begin{theorem}[Resolved boundary-category equivalence]
\label{thm:resolved-equivalence}
Extension and restriction of scalars along $q$ give inverse equivalences
\begin{equation}\label{eq:resolved-Morita}
 F=-\otimes_{B_k}^{\mathbf L}A_k:
       \operatorname{Perf}(B_k)
       \ \rightleftarrows\
       \operatorname{Perf}(A_k):G=q^* .
\end{equation}
For perfect right $A_k$-modules $M$ and perfect left $A_k$-modules $N$
there is a natural quasi-isomorphism
\begin{equation}\label{eq:resolved-pairing}
 q^*M\otimes_{B_k}^{\mathbf L}q^*N
       \simeq M\otimes_{A_k}^{\mathbf L}N.
\end{equation}
The same scalar-extension comparison respects derived composition of
perfect bimodules.  Thus the resolved boundary category retains the
entire algebraic gluing operation, rather than only its Euler shadow.
\end{theorem}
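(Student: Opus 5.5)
The argument is the standard one: a quasi-isomorphism of dg algebras induces a Quillen equivalence of module categories. I would apply it to $q$, using Theorem~\ref{thm:small-resolution} as the only input.

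First, I would check that $F$ and $G$ preserve perfect objects.
\begin{itemize}
\item $F$ sends $B_k$ to $B_k\otimes_{B_k}A_k=A_k$. It is exact and commutes with shifts, cones and retracts, so it preserves perfect modules.
\item For $G=q^*$, there is no need to check that $q^*A_k$ is perfect over $B_k$ directly. Instead, I would use the unit and counit and then identify the essential image.
\end{itemize}

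For the unit, take $P\in\operatorname{Perf}(B_k)$. The map
\[
 P\to q^*(P\otimes^{\mathbf L}_{B_k}A_k)
\]
is, for $P=B_k$, exactly $q:B_k\to A_k$ viewed as a right $B_k$-module map. This is a quasi-isomorphism by Theorem~\ref{thm:small-resolution}. Both sides are exact functors of $P$, so the class of $P$ for which the unit is a quasi-isomorphism is closed under shifts, cones and retracts. It therefore contains all of $\operatorname{Perf}(B_k)$.

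For the counit, take $M$ a right $A_k$-module. The map
\[
 q^*M\otimes^{\mathbf L}_{B_k}A_k\to M
\]
is, for $M=A_k$, again the map induced by $q$ on $A_k\otimes^{\mathbf L}_{B_k}A_k\simeq A_k$. The same thick-subcategory argument extends it to all of $\operatorname{Perf}(A_k)$.

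It remains to see that $q^*$ lands in $\operatorname{Perf}(B_k)$. Since $q^*A_k\simeq B_k$ as right $B_k$-modules, again via $q$, thick closure gives that $q^*M$ is perfect whenever $M$ is. Hence $F$ and $G$ are inverse equivalences.

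For the pairing \eqref{eq:resolved-pairing}, I would write
\[
 q^*M\otimes^{\mathbf L}_{B_k}q^*N\;\to\; M\otimes^{\mathbf L}_{A_k}N .
\]
This natural map is induced by $q$ and can be computed using a cofibrant (semifree) replacement of $q^*M$. When $M=A_k$, it reduces to
\[
 A_k\otimes^{\mathbf L}_{B_k}q^*N\simeq B_k\otimes_{B_k}q^*N=q^*N\to N,
\]
where I use $A_k\simeq B_k$ as right $B_k$-modules. This map is the identity on underlying complexes.

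Both sides are exact in $M$, so a thick-subcategory argument in $M$, with $N$ fixed, proves \eqref{eq:resolved-pairing}. The bimodule statement follows the same way. Derived composition $X\otimes^{\mathbf L}_{A_k}Y$ of perfect bimodules is compared with $q^*X\otimes^{\mathbf L}_{B_k}q^*Y$ by the same reduction to the free bimodule $A_k\otimes A_k^{\mathrm{op}}$, whose restriction is quasi-isomorphic to $B_k\otimes B_k^{\mathrm{op}}$ because $q\otimes q$ is a quasi-isomorphism over the field $k$.

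The main obstacle is homotopical bookkeeping rather than any computation. Derived tensor products over $B_k$ must be formed with semifree resolutions, because $B_k$ is non-negatively graded downward and infinite-dimensional, and one must check that the comparison maps are genuinely natural. I would handle this by invoking the standard fact that a dg-algebra quasi-isomorphism induces a Quillen equivalence on the projective model structures of dg modules. That equivalence restricts to compact objects, which are exactly the perfect modules.
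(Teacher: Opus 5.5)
Your proposal is correct and follows the paper's argument for the equivalence. The unit and counit are quasi-isomorphisms on the free generators because $q$ is one (Theorem~\ref{thm:small-resolution}), and closure under shifts, cones and direct summands extends this to the perfect categories. For the pairing \eqref{eq:resolved-pairing}, the paper argues in one step via associativity and the quasi-isomorphism $A_k\otimes^{\mathbf L}_{B_k}A_k\to A_k$, which avoids a generation argument in $M$; your thick-subcategory reduction in $M$ down to $M=A_k$ uses the same input ($q^*A_k\simeq B_k$) and is equally valid.
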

\begin{proof}
The quasi-isomorphism $q:B_k\to q^*A_k$ identifies the restricted free
$A_k$-module with the free $B_k$-module in the derived category.
The unit and counit of scalar extension and restriction are therefore
quasi-isomorphisms on free generators.  They remain quasi-isomorphisms
under shifts, finite cones and direct summands.  These operations generate
the perfect categories, which proves~\eqref{eq:resolved-Morita}.
This is a concrete instance of standard dg derived Morita theory
\cite{KellerDG}.

Likewise, the multiplication map
$A_k\otimes_{B_k}^{\mathbf L}A_k\to A_k$
is a quasi-isomorphism: tensor the quasi-isomorphism
$B_k\to A_k$ with a semifree replacement and use the derived unit.
Associativity then gives
\[
 q^*M\otimes_{B_k}^{\mathbf L}q^*N
 \simeq
 M\otimes_{A_k}^{\mathbf L}
 (A_k\otimes_{B_k}^{\mathbf L}A_k)
 \otimes_{A_k}^{\mathbf L}N
 \simeq M\otimes_{A_k}^{\mathbf L}N.
\]
Applying the same argument to bimodules gives compatibility with
composition; the associativity is the usual associativity of derived
tensor products, as realized by semifree or bar models.
\end{proof}

For the four projective pairs, equation~\eqref{eq:resolved-pairing}
recovers all entries of the matrix $C$ in
\eqref{eq:projective-pairing-matrix}, including its off-diagonal entries.
There is no conflict with Theorem~\ref{thm:scalar-rank-obstruction}:
the target category uses the non-scalar dg algebra $B_k$ and keeps
the relation and overlap homotopies.
Theorem~\ref{thm:resolved-equivalence} is an equivalence with the
\emph{specified algebraic torus-module model}.  We have not changed
the original grading qualifications or proved that these module
categories are the quantum state categories of Abelian BF theory.

\subsection{A classical BFV Hamiltonian for the resolution}
\label{subsec:cotangent-BFV}
Here work over a characteristic-zero field, for instance $\Q$, with
ordinary graded symplectic geometry.  Fix finite vector spaces
$V_0,V_1$ and consider the graded affine space $\mathcal R$ of framed
matrix coordinates for the six generators:
\[
\begin{array}{c|c}
 \text{coordinates} & \text{matrix spaces}\\ \hline
 a,c,w & \operatorname{Hom}(V_1,V_0)\\
 b & \operatorname{Hom}(V_0,V_1)\\
 u & \operatorname{End}(V_1)\\
 v & \operatorname{End}(V_0).
\end{array}
\]
Their degrees are those in~\eqref{eq:resolved-generators}.
The entries of the odd coordinates $u,v$ are graded-commuting odd
variables.  The degree-one vector field $Q_{\mathcal R}$ is
\begin{equation}\label{eq:matrix-BRST}
 Q_{\mathcal R}a=Q_{\mathcal R}b=Q_{\mathcal R}c=0,\quad
 Q_{\mathcal R}u=ba,\quad Q_{\mathcal R}v=cb,\quad
 Q_{\mathcal R}w=cu-va.
\end{equation}
The same overlap cancellation proves $Q_{\mathcal R}^2=0$.
This is the commutative coordinate dg algebra of the framed
representation construction for $B_\Q$.  At an ordinary degree-zero
point, $u,v,w$ vanish and the $Q$-fixed-point equations are $ba=cb=0$.
It is a framed representation model; no quotient by changes of basis
and no category of quantized branes is implicit in this statement.

Let $\mathcal P=T^*\mathcal R$, with its degree-zero canonical
symplectic form and Liouville one-form $\lambda$.
For a coordinate $x$, its momentum $p_x$ has degree $-|x|$.
Using the canonical graded evaluation convention, define
\begin{equation}\label{eq:cotangent-charge}
 \Theta=\langle p,Q_{\mathcal R}x\rangle
   =\langle p_u,ba\rangle+\langle p_v,cb\rangle
     +\langle p_w,cu-va\rangle .
\end{equation}
Each term has degree one.  The pairings are the trace pairings between
rectangular matrix spaces and their duals, with the Koszul convention
fixed by the canonical cotangent primitive.

\begin{proposition}[Classical cotangent BV--BFV realization]
\label{prop:cotangent-BVBFV}
The Hamiltonian vector field of $\Theta$ is the cotangent lift of
$Q_{\mathcal R}$ and is cohomological.  In particular,
$\{\Theta,\Theta\}=0$.  For an oriented interval $I$, the fields
\[
 \mathcal F_I=\operatorname{Map}(T[1]I,\mathcal P)
\]
carry the classical one-dimensional AKSZ BV--BFV structure with
bulk symplectic form and action
\begin{equation}\label{eq:interval-AKSZ}
 \omega_I=\int_{T[1]I}\delta\mathsf p\,\delta\mathsf x,\qquad
 S_I=\int_{T[1]I}
     \bigl(\langle\mathsf p,D\mathsf x\rangle
                  +\Theta(\mathsf x,\mathsf p)\bigr),
\end{equation}
where $D$ is the source de Rham vector field.
The degrees of $\omega_I$ and $S_I$ are $-1$ and $0$.
The boundary data are $\mathcal P$ at each endpoint, with outward
orientation signs:
\[
 \alpha_{\partial I}=\lambda_1-\lambda_0,\qquad
 \omega_{\partial I}=\delta\alpha_{\partial I},\qquad
 S_{\partial I}=\Theta_1-\Theta_0 .
\]
With the corresponding canonical sign conventions the bulk and
boundary structures satisfy the BV--BFV compatibility identity.
\end{proposition}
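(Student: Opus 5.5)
The proof splits into two independent parts: the classical cotangent statement about $\Theta$, and the one-dimensional AKSZ package built on it.

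\emph{The cotangent statement.} Write $Q_{\mathcal R}=\sum_x Q^x(\mathsf x)\,\partial/\partial x$, so that $\Theta=\sum_x\langle p_x,Q^x\rangle$ is fibrewise linear in the momenta. The standard cotangent-lift computation in graded symplectic geometry then gives two facts. First, $\{\Theta,f\}=\pm Q_{\mathcal R}f$ for every function $f$ on the base. Second, on momenta, $\{\Theta,p_x\}=\pm\sum_y\langle p_y,\partial Q^y/\partial x\rangle$.

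Hence $X_\Theta$ is the cotangent lift $\widehat{Q}_{\mathcal R}$. The map $V\mapsto\langle p,V\rangle$ from vector fields on $\mathcal R$ to fibrewise-linear functions is a Lie algebra homomorphism up to the fixed Koszul sign, and so
\[
\{\Theta,\Theta\}=\pm\langle p,[Q_{\mathcal R},Q_{\mathcal R}]\rangle=\pm2\langle p,Q_{\mathcal R}^2\rangle .
\]
So everything reduces to $Q_{\mathcal R}^2=0$. As in Theorem~\ref{thm:small-resolution}, the only generator to check is $w$:
\[
Q_{\mathcal R}^2w=c\,(ba)-(cb)\,a=0
\]
by matrix associativity, with the correct odd sign coming from the $Q$-action on the odd entries of $u,v$. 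Note that $\{\Theta,\Theta\}=0$ is not automatic, because $\Theta$ has odd degree one; this is exactly why the computation is needed. In characteristic zero, $\{\Theta,\Theta\}=0$ then gives $X_\Theta^2=\tfrac12X_{\{\Theta,\Theta\}}=0$, so $X_\Theta$ is cohomological.

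Degree bookkeeping also happens here. Since $|p_x|=-|x|$ and $Q^x$ has degree $|x|+1$, each summand of $\Theta$ has degree $1$.

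\emph{The AKSZ package.} Here I would invoke the general AKSZ construction with boundary of Cattaneo--Mnev--Reshetikhin \cite{CMRClassical}, applied with source $T[1]I$ (dimension one) and target $\mathcal P$. The target is a degree-zero symplectic manifold with a degree-one Hamiltonian $\Theta$ satisfying $\{\Theta,\Theta\}=0$, which is the required hypothesis for source dimension one.

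The degree counts are direct. Integration over $T[1]I$ lowers degree by one, so $\omega_I$ has degree $0-1=-1$. The integrand $\langle\mathsf p,D\mathsf x\rangle+\Theta$ has degree $1$, so $S_I$ has degree $0$.

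To derive the boundary data, I would compute $\delta S_I$ and integrate by parts on the term $\langle\mathsf p,D\delta\mathsf x\rangle$. The resulting boundary term is $\pi^*\alpha_{\partial I}$, with $\alpha_{\partial I}=\lambda_1-\lambda_0$ evaluated at the endpoints, the signs coming from outward orientation. This is the one-dimensional analogue of \eqref{eq:alphaBFV}.

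The compatibility identity $\iota_{Q_I}\omega_I=\delta S_I+\pi^*\alpha_{\partial I}$ follows by defining $Q_I$ as $D$ plus the lift of $X_\Theta$ acting on superfields. The remaining relation $Q_IS_I=\pi^*(2S_{\partial I})$ reduces to two ingredients. The first is $\{\Theta,\Theta\}=0$, which kills the bulk term. The second is the fundamental theorem of calculus on the exact term $D(\Theta)$, which produces $\Theta_1-\Theta_0$.

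\emph{Main obstacle.} I expect the main difficulty to be consistent sign bookkeeping. The odd matrix entries, the trace pairings between rectangular matrix spaces, and the Koszul signs in the cotangent primitive must all be fixed once, and the same choices must give both the orientation signs in $\alpha_{\partial I}$ and the $d^2w$ cancellation. I would fix one convention for the canonical graded evaluation and verify the $w$-term sign explicitly. Everything else is then the standard AKSZ argument.
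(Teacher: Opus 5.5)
Your proposal follows the paper's proof essentially step for step. The fibre-linear Hamiltonian generates the cotangent lift, and the lift respects graded brackets, so $\{\Theta,\Theta\}$ is the fibre-linear Hamiltonian of $[Q_{\mathcal R},Q_{\mathcal R}]=2Q_{\mathcal R}^2=0$. The interval part is the standard AKSZ-with-boundary construction of \cite{CMRClassical}, with the same transgression degree count and the same integration by parts producing $\lambda_1-\lambda_0$.

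There are two small corrections.

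First, you define $Q_I=D+\widehat{X}_\Theta$ but never check that it squares to zero. The paper does state this: it follows from $D^2=0$, $X_\Theta^2=0$, and the graded commutation of the two lifted vector fields. You should include it.

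Second, your relation $Q_IS_I=\pi^*(2S_{\partial I})$ conflates two different identities. The factor of two belongs to $\iota_{Q_I}\iota_{Q_I}\omega_I=2\pi^*S_{\partial I}$. For the action itself one has $Q_IS_I=\pi^*\bigl(2S_{\partial I}-\iota_{Q_\partial}\alpha_{\partial I}\bigr)$. In this model the cotangent lift preserves the Liouville form, so the kinetic term is $Q_I$-invariant and $\iota_{X_\Theta}\lambda=\Theta$. This gives $Q_IS_I=\pi^*S_{\partial I}$, up to the global sign convention. That is exactly what your own fundamental-theorem-of-calculus computation on $D\Theta$ produces.

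Neither slip affects the compatibility identity $\iota_{Q_I}\omega_I=\delta S_I+\pi^*\alpha_{\partial I}$ that the proposition actually asserts.
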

\begin{proof}
For any vector field $Q$ on a graded manifold, the fiber-linear
Hamiltonian $\langle p,Qx\rangle$ generates its cotangent lift.
The cotangent lift respects graded Lie brackets; hence its square
vanishes because $Q_{\mathcal R}^2=0$.
Equivalently, the Poisson self-bracket of this Hamiltonian is the
fiber-linear Hamiltonian for $[Q_{\mathcal R},Q_{\mathcal R}]$, which
is zero.  This proves the target master equation.

Transgression over $T[1]I$ lowers degree by one.
Both the kinetic term and $\Theta$ have target total degree one,
giving the displayed bulk degrees.  Varying the kinetic term and
integrating the source differential by parts yields the endpoint
Liouville forms with opposite orientation signs.  The bulk
cohomological vector field is the sum of $D$ and the lifted target
field; its square is zero by the source differential, the target
master equation and their graded commutation.  The integration-by-parts
identity is precisely the classical BV--BFV bulk--boundary
compatibility relation.  This is the standard AKSZ-with-boundary
construction applied to the explicitly specified target
\cite{CMRClassical}.
\end{proof}

\begin{remark}[Coefficient and dimension restrictions]
The algebraic resolution and module equivalence hold over $\F_2$
and over $\Q$, separately.  The classical graded symplectic construction
above is made over characteristic zero.  It is not a proof of a
change-of-coefficients theorem for bordered Floer homology.
Moreover, the bulk source here is an interval and the Hamiltonian is
defined from the torus relations.  It is a
\emph{one-dimensional classical BV--BFV cotangent model}; it is not
the original three-dimensional Abelian BF action.
Its quantum BFV state complex, if constructed, would also not be
identified with $\operatorname{Perf}(B_k)$ merely by the classical
Hamiltonian formula.
\end{remark}

\subsection{Geometric interpretation}
\label{subsec:geometric-gap}
The construction now supplies a concrete dg algebra and a classical
BV--BFV realization, as well as an unconditional algebraic equivalence
\eqref{eq:resolved-Morita}.  A full BV--BFV/bordered-Floer equivalence
still requires the following independent geometric work:
\begin{enumerate}[label=\textup{(\roman*)},leftmargin=2em]
 \item Construct a quantum boundary category from the chosen
 three-dimensional field theory and compute a generator's derived
 endomorphism algebra.  Prove that it is quasi-isomorphic to $B_k$
 (or derived Morita equivalent to it), using that field theory's
 observables and gauge fixing rather than inserting $A_k$ as input.
 \item For each parametrized knot exterior or cobordism, construct its
 boundary module from the same theory and identify it with the
 appropriate bordered Floer invariant.  The abstract equivalence of
 module categories does not identify which module a manifold supplies.
 \item Identify the field-theoretic sewing map with derived tensor
 composition, preserving peripheral parametrizations, grading data,
 coefficients, and the determinant-line normalization of the meridional
 relative model.
\end{enumerate}
If these identifications are established, the equivalence and its
algebraic pairing compatibility follow from
Theorem~\ref{thm:resolved-equivalence}.  They are not consequences of
that theorem alone.

There are relevant existing geometric constructions, which also clarify
the limits of a shortcut.  Auroux identifies the bordered algebra with
endomorphisms of generating arc-product Lagrangians in a partially
wrapped Fukaya category \cite[Theorems~1.2--1.3]{AurouxFukaya}.
The same paper explicitly qualifies the scope of the full geometric
module and gluing arguments after its Theorems~1.4--1.5; we do not
invoke them as an unqualified BV--BFV equivalence.
Chekeres--Losev--Mnev--Youmans obtain enhanced Fukaya--Morse
compositions from a gauge-fixed BF effective action
\cite{CLMY}.  Their construction involves gradient-tree compositions
for generating functions on a closed manifold, rather than an automatic
identification of the knot-exterior BFV category with the stopped
Heegaard-Floer boundary category.  Matching these geometric settings,
including stops and boundary conditions, is an additional research
problem.

Dg and Koszul techniques also occur in bordered Floer theory
\cite{LOTMorphisms}.  The resolution and cotangent model above address
the algebraic part of the comparison; the geometric identifications
remain to be established.
For the particular polynomial quantum sector of nonresonant
three-dimensional Abelian BF theory computed in the next section,
the proposed dg Morita identification is obstructed.
Thus an abstract choice of $B_k$ cannot be regarded as a computation
of that field theory's boundary observable algebra.

\section{Quantum BFV reduction on the torus}
\label{sec:quantum-boundary}
The preceding algebraic constructions leave open whether the original
three-dimensional linear BF theory supplies the torus algebra.
In a precisely specified quantum sector, this question can be answered
directly.  We compute the polarized BFV differential, its polynomial
state cohomology, and the polynomial Weyl observable algebra.
The calculation uses the boundary differential of the
\emph{three-dimensional} Abelian theory, rather than an algebra imported
from Floer theory.

\subsection{The polarized state complex}
\label{subsec:actual-charge}
Fix a characteristic-zero field $k$, a holonomy
$z\in k^\times\setminus\{1\}$ and $u=z-1$.
Set $K=k((\hbar))$, so $\hbar$ is invertible.
Use the twisted cellular cochain complex of the closed boundary torus
with meridional holonomy $z$ and longitudinal holonomy $1$:
\[
 D^0=k\xrightarrow{d^0}D^1=k^2\xrightarrow{d^1}D^2=k,\qquad
 d^0(t)=(ut,0),\quad d^1(r,s)=us .
\]
In the $A$-polarization of the three-dimensional Abelian BF theory,
take polynomial functions on $D[1]$ as the finite cellular state
model, after a fixed half-density trivialization.
This is the polynomial version of the standard cellular
$A$-polarized state construction and quantum BFV operator
\cite[Section~7, especially equations (71)--(74)]{CMR}.
The coordinates and ghost degrees are
\[
 \xi\ (1),\qquad x,y\ (0),\qquad \eta\ (-1).
\]
The odd coordinates $\xi,\eta$ are exterior generators; the even
coordinates $x,y$ are polynomial generators.  The normalized
degree-one BFV differential on
\begin{equation}\label{eq:actual-Fock}
 \mathsf S=
 K[x,y]\otimes_K\Lambda_K(\xi,\eta)
\end{equation}
is
\begin{equation}\label{eq:actual-delta}
 \delta\xi=\delta y=0,\qquad
 \delta x=u\xi,\qquad \delta\eta=uy,\qquad
 \delta=u\xi\partial_x+uy\partial_\eta .
\end{equation}
These formulas are the dual coordinate differential of the displayed
cellular complex, with ghost degree $1-r$ for an $r$-cell.

Let $p_f$ be the canonical momentum conjugate to a coordinate $f$, with
$|p_f|=-|f|$.  Choose the algebraic quantization convention
$p_f=\hbar\partial_f$ and graded commutators
$[p_f,f]_{\mathrm{gr}}=\hbar$.
The normal-ordered quantum charge is
\begin{equation}\label{eq:actual-Theta}
 \widehat\Theta=u(\xi p_x+y p_\eta)=\hbar\delta
 \quad\text{on }\mathsf S .
\end{equation}
Conventional factors of $i$ multiply this charge by a nonzero scalar
and do not change the cohomology over $K$.
The operators in its two summands anticommute and square to zero,
so $\widehat\Theta^2=0$; no higher quantum correction is needed for
this linear boundary charge.
The coordinate vector field $\delta$ has zero divergence, so the
constant half-density trivialization introduces no additional divergence
term in this operator.

\subsection{Vacuum cohomology}
\begin{theorem}[Nonresonant polarized vacuum cohomology]
\label{thm:quantum-vacuum}
The polynomial state complex $(\mathsf S,\delta)$ has a strong deformation
retract onto the vacuum line $K\cdot1$ in ghost degree zero.
Writing $\varepsilon$ for evaluation at zero and
$\iota:K\to\mathsf S$ for the inclusion of constants,
there is a homotopy $H$ with
\[
 \delta H+H\delta=1-\iota\varepsilon,\qquad
 H^2=H\iota=\varepsilon H=0.
\]
In particular $H^\bullet(\mathsf S,\delta)=K$.
The same formulas extend continuously to completion in total polynomial
degree.  They also contract the complex with differential
$\widehat\Theta$ over $K$, after replacing $H$ by $\hbar^{-1}H$.
\end{theorem}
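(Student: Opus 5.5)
The plan is to split the polynomial state complex as a tensor product of two one-variable complexes, contract each one explicitly, and then assemble the contractions with the standard tensor trick for strong deformation retracts. By \eqref{eq:actual-delta} the differential $\delta=u\xi\partial_x+uy\partial_\eta$ is a sum of two commuting odd derivations, each acting on a different pair of variables. This gives an isomorphism of complexes
\[
 (\mathsf S,\delta)\cong (K[x]\otimes\Lambda(\xi),\,\delta_1)\otimes_K(K[y]\otimes\Lambda(\eta),\,\delta_2),
 \qquad \delta_1=u\xi\partial_x,\quad \delta_2=uy\partial_\eta,
\]
with the Koszul sign convention on the tensor product. The first factor is the polynomial de Rham complex of the affine line, with $\xi$ playing the role of $\dd x$. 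The second factor is the Koszul complex of the regular element $uy$ in $K[y]$.

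\textbf{Contracting each factor.} I would write down explicit homotopies for the two factors.
\begin{itemize}
\item For the first factor, $\delta_1 x^n=nu\,x^{n-1}\xi$. I set $H_1(x^{m}\xi)=x^{m+1}/\bigl((m+1)u\bigr)$ and $H_1(x^m)=0$. This step uses characteristic zero, so that $m+1$ is invertible, and it uses $u\neq0$.
\item For the second factor, $\delta_2(\eta y^n)=u\,y^{n+1}$. I set $H_2(y^{n+1})=\eta y^n/u$, $H_2(1)=0$ and $H_2(\eta y^n)=0$.
\end{itemize}
A direct check on monomials then gives
\[
 \delta_iH_i+H_i\delta_i=1-\iota_i\varepsilon_i,
\]
where $\iota_i$ is the inclusion of constants and $\varepsilon_i$ is evaluation at zero. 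The side conditions $H_i^2=H_i\iota_i=\varepsilon_iH_i=0$ are immediate from the formulas. Each factor therefore retracts strongly onto $K\cdot1$ in ghost degree zero.

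\textbf{Assembling the contraction.} I take
\[
 H=H_1\otimes1+\iota_1\varepsilon_1\otimes H_2,
\]
with the Koszul sign applied when the odd map $H_2$ passes the first factor. Since $\iota_1\varepsilon_1$ lands in the constants, where that sign is trivial, this sign should cause no trouble.

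I expect the main obstacle to be the sign bookkeeping in this step. One has to verify four identities:
\begin{itemize}
\item $\delta H+H\delta=1-\iota\varepsilon$, with $\iota=\iota_1\otimes\iota_2$ and $\varepsilon=\varepsilon_1\otimes\varepsilon_2$;
\item $H^2=0$;
\item $H\iota=0$;
\item $\varepsilon H=0$.
\end{itemize}
Each of these uses the factorwise identities together with $\varepsilon_1H_1=0$ and $H_1\iota_1=0$; the vanishing of $H^2$ in particular needs the cross terms to cancel. If the signs become awkward, I would instead check these identities directly on the monomial basis $x^my^n\xi^{\epsilon}\eta^{\epsilon'}$, which is a finite case analysis. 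The conclusion $H^\bullet(\mathsf S,\delta)=K$ then follows immediately.

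\textbf{Completion and the quantum charge.} Write $\operatorname{pdeg}$ for the polynomial degree, counting $x$ and $y$ with weight one. Then $\delta$ and $H$ change $\operatorname{pdeg}$ only by a bounded amount, and each homogeneous piece is finite-dimensional. So $H$ is continuous for the total-degree topology and extends termwise to the completion, and the identities continue to hold there. Finally, $\widehat\Theta=\hbar\delta$ by \eqref{eq:actual-Theta}, and $\hbar$ is invertible in $K$. Hence $\hbar^{-1}H$ satisfies
\[
 \widehat\Theta(\hbar^{-1}H)+(\hbar^{-1}H)\widehat\Theta=1-\iota\varepsilon,
\]
together with the same side conditions, which completes the plan.
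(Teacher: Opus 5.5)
Your proof is correct, but it takes a different route from the paper's.

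\textbf{The paper's route.} The paper never factors $\mathsf S$. It writes down one odd derivation $\kappa=u^{-1}(x\partial_\xi+\eta\partial_y)$. The commutator $\delta\kappa+\kappa\delta$ is then an even derivation that is the identity on all four generators, so it equals the total-degree operator $N$. The paper puts $H=\kappa/n$ on the component of total degree $n>0$. Then $H^2=0$ follows from $\kappa^2=0$, and the side conditions follow from degree preservation.

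\textbf{Your route.} You split $(\mathsf S,\delta)$ into the de Rham complex of the line and the Koszul complex of $uy$, contract each factor by hand, and assemble with $H=H_1\otimes1+\iota_1\varepsilon_1\otimes H_2$. The sign check you flag does go through:
\begin{itemize}
\item In $\delta H+H\delta$, the two cross terms are $-H_1\otimes\delta_2$ and $+H_1\otimes\delta_2$. They cancel because $H_1$ and $\delta_2$ are both odd.
\item The terms $\delta_1\iota_1\varepsilon_1$ and $\iota_1\varepsilon_1\delta_1$ vanish. What remains is $(1-p_1)\otimes1+p_1\otimes(1-p_2)=1-p_1\otimes p_2$.
\item In $H^2$ the mixed terms do not cancel against each other; each vanishes separately, by $H_1\iota_1=0$ and $\varepsilon_1H_1=0$.
\end{itemize}

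\textbf{What each buys.} Your factorization locates exactly where characteristic zero enters. Only the $(x,\xi)$ de Rham factor divides by integers, and there it is genuinely needed, since $x^p$ survives in characteristic $p$. The $(y,\eta)$ Koszul factor contracts whenever $u$ is a unit. The paper divides by total degree everywhere. In exchange, the paper's argument is shorter and its homotopy treats the two sectors symmetrically, whereas yours depends on the ordering of the factors. The paper's argument is also the template reused verbatim for the Weyl observable algebra: $\kappa$ is extended to the momenta and one divides by PBW degree.

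\textbf{A simplification for the completion.} Your $H_1$, $H_2$ and $\iota_1\varepsilon_1$ all preserve total degree once $\xi$ and $\eta$ are counted too, just as the paper's $H$ does. So the extension to the completion is componentwise, and the bounded-shift argument is unnecessary.
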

\begin{proof}
Define the degree-$-1$ graded derivation
\[
 \kappa=u^{-1}(x\partial_\xi+\eta\partial_y).
\]
On each of the four generators, direct substitution gives
$(\delta\kappa+\kappa\delta)f=f$.
The graded commutator of the two odd derivations is a degree-zero
derivation, so on all polynomials
\[
 \delta\kappa+\kappa\delta=N,
\]
where $N$ counts total polynomial degree, including the exterior
generators.  Both derivations preserve this degree.
On the degree-$n$ homogeneous component define
$H=\kappa/n$ for $n>0$, and $H=0$ on constants.
All denominators are invertible in characteristic zero.
This gives the retract identity.
The identity $\kappa^2=0$ follows on the generators and hence on the
graded polynomial algebra, proving the side conditions.
For the completed algebra the formula is applied componentwise.
Finally, $\widehat\Theta=\hbar\delta$ and $\hbar$ is invertible in $K$.
\end{proof}

\begin{remark}[The formal-parameter restriction]
Over $k[[\hbar]]$, the unnormalized operator $\hbar\delta$ can leave
$\hbar$-torsion in cohomology; the theorem does not discard that torsion
by an unstated division.  It concerns either the normalized BRST
differential $\delta$, or the quantum charge after passage to
$k((\hbar))$.  It also concerns polynomial states and their stated
completion, not arbitrary distributional sectors or added defects.
The disappearance of residual boundary fields is the familiar
cohomological reduction of cellular BF theory
\cite[Section~7.4]{CMR}; the displayed homotopy supplies it explicitly
for this meridional torus.
\end{remark}

\subsection{The quantum observable dg algebra}
Let $\mathcal W$ be the graded polynomial Weyl algebra generated by
the four coordinates and their four momenta, over $K$, with the canonical
graded commutation relations above and all other pairs graded commuting.
The algebra has the degree-one differential
\[
 d_{\mathcal W}=\hbar^{-1}
       [\widehat\Theta,-]_{\mathrm{gr}}.
\]
It restricts to $\delta$ on the coordinates.  On momenta it is
\begin{equation}\label{eq:actual-momentum-differential}
 d_{\mathcal W}p_\xi=up_x,\qquad
 d_{\mathcal W}p_y=-up_\eta,\qquad
 d_{\mathcal W}p_x=d_{\mathcal W}p_\eta=0.
\end{equation}
The signs are fixed by the graded canonical relations; for instance the
anticommutator of $p_\xi$ with $\xi$ is $\hbar$, while
$[p_y,y]=\hbar$.

\begin{theorem}[Vacuum reduction of the polynomial BFV observables]
\label{thm:Weyl-reduction}
The unit inclusion
\[
 K\longrightarrow(\mathcal W,d_{\mathcal W})
\]
is a quasi-isomorphism of dg algebras.
Consequently
$\operatorname{Perf}(\mathcal W)\simeq\operatorname{Perf}(K)$.
\end{theorem}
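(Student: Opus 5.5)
The plan is to build an explicit contracting homotopy on the whole Weyl algebra, as in Theorem~\ref{thm:quantum-vacuum}, but now with eight generators instead of four. Because $\hbar$ is invertible in $K$, I will first rescale the momenta, $P_f=\hbar^{-1}p_f$. With this choice $[P_f,f]_{\mathrm{gr}}=1$, and all $\hbar$-dependence disappears.

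The next step is to fix a vector-space basis. Weyl-ordered or normal-ordered monomials in the coordinates $x,y,\xi,\eta$ and the momenta $P_x,P_y,P_\xi,P_\eta$ form a PBW basis. I will use symmetric (Weyl) ordering, so that the symmetrization map
\[
\sigma:\operatorname{Sym}_{\mathrm{gr}}(V)\to\mathcal W
\]
is a linear isomorphism. Here $V$ is the eight-dimensional graded span of the generators.

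The key observation is that $d_{\mathcal W}=\hbar^{-1}[\widehat\Theta,-]$ is an inner derivation by an element that is \emph{quadratic} in the generators. Consequently it preserves the linear span $V$, acting there by the linear map recorded in \eqref{eq:actual-delta} and \eqref{eq:actual-momentum-differential}. Commutators with quadratic elements act on symmetrized monomials exactly as the induced derivation of the symmetric algebra, so $\sigma$ intertwines $d_{\mathcal W}$ with the derivation extension of $d_V$. This is the standard fact that $\mathfrak{sp}$-type quadratic elements act on Weyl-symmetrized polynomials by the symmetric-power representation. I expect it to be the main obstacle, because graded Koszul signs with the odd pairs $(\xi,P_\xi)$ and $(\eta,P_\eta)$ must be tracked; I would verify it on degree-two symmetrized monomials and then extend by the derivation property.

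Once this holds, $(\mathcal W,d_{\mathcal W})\cong(\operatorname{Sym}_{\mathrm{gr}}V,d_V)$ as complexes, with the unit going to the constants. The complex $(V,d_V)$ is acyclic: it splits into the four two-term pieces
\[
\xi\leftarrow x,\qquad y\to\eta,\qquad P_\xi\to P_x,\qquad P_y\to P_\eta,
\]
each with invertible coefficient $\pm u$. It therefore has a contraction $\kappa_V$ with $\kappa_V^2=0$. Extend $\kappa_V$ to a derivation $\kappa$, exactly as in the proof of Theorem~\ref{thm:quantum-vacuum}. Then $d\kappa+\kappa d=N$, where $N$ is the total polynomial degree. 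In characteristic zero the map $H=\kappa/n$ on degree $n>0$ contracts everything except the constants.

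This gives $H^\bullet(\mathcal W,d_{\mathcal W})=K$, concentrated in degree zero and spanned by $1$. So the unit inclusion, which is a dg algebra map because $d_{\mathcal W}1=0$, is a quasi-isomorphism. The statement $\operatorname{Perf}(\mathcal W)\simeq\operatorname{Perf}(K)$ then follows from invariance of perfect derived categories under dg algebra quasi-isomorphisms, using the extension and restriction of scalars argument already given in Theorem~\ref{thm:resolved-equivalence} \cite{KellerDG}.

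If the symmetrization intertwining proves messy, there is a fallback. Filter $\mathcal W$ by PBW degree. The associated graded is $\operatorname{Sym}_{\mathrm{gr}}V$ with the same linear differential, because commutator corrections lower filtration degree by two. Acyclicity in positive degrees then follows from a convergent spectral sequence, since each filtration level is finite-dimensional in every weight and the filtration is exhaustive and bounded below.
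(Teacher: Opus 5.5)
Your proof is correct and is essentially the paper's: a PBW isomorphism carrying $d_{\mathcal W}$ to the linear derivation of the graded symmetric algebra, the $\kappa/N$ contraction transported back, and scalar extension for $\operatorname{Perf}$. The only real difference is the ordering. The paper uses normal ordering (coordinates left, momenta right), which $d_{\mathcal W}$ preserves trivially because it sends coordinates to coordinates and momenta to momenta, so your \emph{main obstacle} never arises there. Your Weyl symmetrization also works, and needs no degree-two check: the intertwining holds in every degree because $\sigma=m\circ e$, with $e$ the Koszul-signed symmetrization into $T(V)$ and $m$ the multiplication, and any derivation satisfying $d_{\mathcal W}(V)\subseteq V$ is intertwined by both maps. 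One small slip: since $\delta\eta=uy$, that two-term block is $\eta\to y$, not $y\to\eta$.
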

\begin{proof}
Normal ordering, with coordinates to the left and momenta to the right,
gives the graded PBW vector-space isomorphism
\[
 \sigma:\operatorname{Sym}_K(V\oplus V^\vee)
                \longrightarrow\mathcal W,
\]
where $V$ is the four-dimensional coordinate complex and $V^\vee$
denotes the conjugate momentum complex with the differential
\eqref{eq:actual-momentum-differential}.
This is a vector-space isomorphism, not an algebra isomorphism.
Since $d_{\mathcal W}$ sends coordinates to coordinates and momenta
to momenta, it preserves normal ordering and corresponds under $\sigma$
to the linear derivation on this graded symmetric algebra.
There are no lower-order reordering terms in this differential.

Both summands of $V\oplus V^\vee$ are contractible.
Extend the coordinate contraction from the previous theorem by
\[
 \kappa(p_x)=u^{-1}p_\xi,\qquad
 \kappa(p_\eta)=-u^{-1}p_y,\qquad
 \kappa(p_\xi)=\kappa(p_y)=0.
\]
On symbols, $d\kappa+\kappa d=N$ again counts total PBW degree.
Dividing by this degree on the positive-degree components and
transporting by $\sigma$ contracts the observable complex onto its
constants.  This contraction need not be an algebra derivation.
It proves that $H^\bullet(\mathcal W)=K\cdot[1]$.
The unit inclusion is itself a unital dg-algebra map and induces that
cohomology isomorphism, proving the theorem.
The perfect-module equivalence follows from dg scalar extension as in
Theorem~\ref{thm:resolved-equivalence}.
\end{proof}

\begin{corollary}[Regular formal interactions do not add vacuum classes]
\label{cor:formal-vacuum}
Let $\lambda$ be a formal parameter of degree zero and let
$d_\lambda=d_{\mathcal W}+\Delta$ be a unital
$K[[\lambda]]$-linear dg-algebra differential on
$\mathcal W[[\lambda]]$, where $\Delta$ raises $\lambda$-adic order
and $d_\lambda^2=0$.  Then the unit inclusion
\[
 K[[\lambda]]\longrightarrow
       (\mathcal W[[\lambda]],d_\lambda)
\]
is a quasi-isomorphism.  Thus a regular unital formal perturbation with
this same leading boundary complex still has only vacuum cohomology.
\end{corollary}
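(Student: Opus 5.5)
The plan is a filtration and spectral-sequence comparison with respect to the $\lambda$-adic filtration, starting from Theorem~\ref{thm:Weyl-reduction} as the associated-graded input. Write $F^n=\lambda^n\mathcal W[[\lambda]]$. This filtration is complete and exhaustive, and it is preserved by $d_\lambda$ because $\Delta$ raises $\lambda$-adic order. The associated graded complex is $\operatorname{gr}^n=\lambda^n\mathcal W$ with differential $d_{\mathcal W}$. Unitality of $d_\lambda$ gives $d_\lambda(1)=0$, so the unit inclusion $K[[\lambda]]\to\mathcal W[[\lambda]]$, with the zero differential on the source, is a filtered chain map. On associated graded pieces it is exactly the unit map $K\to(\mathcal W,d_{\mathcal W})$ in each $\lambda$-degree, and that map is a quasi-isomorphism by Theorem~\ref{thm:Weyl-reduction}.

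To conclude, I would avoid abstract convergence statements and lift the explicit contraction instead. Theorem~\ref{thm:Weyl-reduction} (via the PBW symbol map $\sigma$ and the degree-normalized $\kappa$) supplies a strong deformation retract $(\iota,\varepsilon,H)$ of $(\mathcal W,d_{\mathcal W})$ onto $K$. The steps are as follows.

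\begin{enumerate}[label=\textup{(\arabic*)},leftmargin=2em]
\item Extend $\iota,\varepsilon,H$ $\lambda$-linearly to the completed algebra $\mathcal W[[\lambda]]$.
\item Apply the homological perturbation lemma to the perturbation $\Delta$. Since $H\Delta$ raises $\lambda$-order, the series $\sum_m(-H\Delta)^m$ converges $\lambda$-adically.
\item This yields a perturbed deformation retract of $(\mathcal W[[\lambda]],d_\lambda)$ onto $K[[\lambda]]$ with a perturbed differential $\varepsilon\sum_m\Delta(-H\Delta)^m\iota$ on the small side.
\item Show that this perturbed differential vanishes. Either it has degree one on a complex concentrated in degree zero, or directly, $\Delta\iota(1)=\Delta(1)=0$ because $d_\lambda(1)=0=d_{\mathcal W}(1)$.
\item Identify the perturbed inclusion with the unit. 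It equals $\iota-H\Delta\iota+\cdots$, which is the unit inclusion since $\Delta\iota=0$.
\end{enumerate}

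Together these show that the unit map is a homotopy equivalence, hence a quasi-isomorphism over $K[[\lambda]]$.

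The main obstacle is the analytic bookkeeping. The contraction $H$ divides by PBW degree and is defined on polynomial elements, while $\Delta$ may raise PBW degree. So I must check that $H$ and $\Delta$ make sense on the $\lambda$-adic completion, with coefficients of each $\lambda^n$ remaining in $\mathcal W$ or in the completion from Theorem~\ref{thm:quantum-vacuum}. Since each $\lambda^n$-coefficient involves only finitely many perturbation terms, this holds coefficientwise, and $K[[\lambda]]$-linearity gives continuity. If needed, the fallback is the Mittag-Leffler argument: inductively correct a cocycle modulo $F^{n+1}$ using acyclicity of $\operatorname{gr}^n$ in positive PBW degree, then pass to the limit, which is valid because the filtration is complete and the inverse systems have surjective transition maps.
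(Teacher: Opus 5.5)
Your proposal is correct and follows essentially the same route as the paper. You extend the Weyl-algebra contraction $\lambda$-adically and apply the homological perturbation lemma, which converges because $H\Delta$ raises $\lambda$-order; then $\Delta(1)=0$ forces the transferred differential on $K[[\lambda]]$ to vanish and leaves the perturbed inclusion equal to the unit. Your filtration framing and Mittag-Leffler fallback are not needed. Your observation that $K[[\lambda]]$-linearity makes $\Delta$ automatically $\lambda$-adically continuous usefully fills in a detail the paper leaves implicit.
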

\begin{proof}
Extend the contraction of Theorem~\ref{thm:Weyl-reduction}
$\lambda$-adically.  The series in the homological perturbation lemma
converge in this topology, since $\Delta$ raises $\lambda$-adic order.
Unitality gives $\Delta(1)=0$.  Hence the corrected inclusion remains
the inclusion of constants and the transferred differential on
$K[[\lambda]]$ is zero.  The lemma supplies a deformation retract
onto this complex \cite[Section~4]{CMR}.  The inclusion is a unital
dg-algebra map, so the cohomology isomorphism proves the assertion.
\end{proof}

\subsection{A Hochschild obstruction}
Let $A_K$ be the characteristic-zero degree-zero algebraic lift of the
torus table, as defined in Section~\ref{subsec:six-generator}.
It is placed in ghost degree zero with zero differential.
\begin{theorem}[No Morita identification in the computed quantum sector]
\label{thm:quantum-Morita-obstruction}
There is no $K$-linear dg derived Morita equivalence between
$(\mathcal W,d_{\mathcal W})$ and $A_K$, or between $\mathcal W$
and its six-generator resolution $B_K$.
More precisely, the Morita invariants are
\[
 HH_0(\mathcal W)\cong K,\qquad
 HH_0(A_K)\cong K^2.
\]
Here the left side is the degree-zero homology of the
\emph{derived dg Hochschild complex}; it is not the trace quotient
of the underlying Weyl algebra with its differential forgotten.
\end{theorem}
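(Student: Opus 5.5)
The plan is to compute both Hochschild invariants and then use that $HH_0$ of the derived dg Hochschild complex is invariant under $K$-linear dg derived Morita equivalence (Keller's theorem \cite{KellerDG}). Since $K\not\cong K^2$ as $K$-vector spaces, no equivalence can exist. The statement about $B_K$ follows from Theorem~\ref{thm:resolved-equivalence}, whose proof is characteristic-free: $q:B_K\to A_K$ is a quasi-isomorphism by Theorem~\ref{thm:small-resolution}, so $\operatorname{Perf}(B_K)\simeq\operatorname{Perf}(A_K)$. A Morita equivalence $\mathcal W\sim B_K$ would therefore compose to one with $A_K$.

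\emph{Weyl side.} By Theorem~\ref{thm:Weyl-reduction}, the unit $K\to(\mathcal W,d_{\mathcal W})$ is a quasi-isomorphism of dg algebras. Hochschild homology of dg algebras (computed by the Hochschild complex with the sum-totalization, or equivalently via $\mathcal W\otimes^{\mathbf L}_{\mathcal W^e}\mathcal W$) is invariant under quasi-isomorphisms of dg algebras over a field. Hence $HH_\bullet(\mathcal W)\cong HH_\bullet(K)=K$ in degree zero. This is where the remark about not using the trace quotient of the underlying Weyl algebra matters. Forgetting $d_{\mathcal W}$, the polynomial Weyl algebra in characteristic zero has $HH_0=0$, since every element is a commutator. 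The differential is essential, and we only ever use the derived invariant.

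\emph{Torus side.} $A_K$ is concentrated in degree zero with zero differential, so $HH_0(A_K)=A_K/[A_K,A_K]$. We compute this from the basis \eqref{eq:torus-basis}. Every chord $\rho$ is off-diagonal or lies in $J$ with distinct endpoints. For an off-diagonal element $x=\iota_ix\iota_j$ with $i\ne j$, we have $x=\iota_ix-x\iota_i$, a commutator. For the diagonal chords, $\rho_{12}=\rho_1\rho_2=\rho_1\rho_2-\rho_2\rho_1$ since $\rho_2\rho_1=0$. Similarly $\rho_{23}=\rho_2\rho_3-\rho_3\rho_2$. Thus all of $J$ lies in $[A_K,A_K]$.

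Conversely, the quotient $A_K\to A_K/J\cong K\times K$ is commutative, so $[A_K,A_K]\subseteq J$. Hence $HH_0(A_K)\cong K\iota_0\oplus K\iota_1\cong K^2$. Alternatively, one can note that $A_K$ is basic with two simple modules, so $HH_0$ of a finite-dimensional algebra with nilpotent radical modulo commutators has dimension equal to the number of idempotents, provided the diagonal radical consists of commutators, which the explicit identities above verify.

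The only delicate step is the invariance input: we must check that Keller's derived Morita invariance applies to the non-finite-dimensional, non-connective dg algebra $\mathcal W$ over $K=k((\hbar))$. We also need the quasi-isomorphism invariance of $HH$ to use the correct (direct-sum) totalization, so that the unbounded-degree Weyl generators cause no convergence issue. I would settle this by realizing $HH$ as $\mathcal W\otimes^{\mathbf L}_{\mathcal W\otimes\mathcal W^{\mathrm{op}}}\mathcal W$, which is manifestly invariant under quasi-isomorphism, and then invoking Keller's theorem that this is an invariant of $\operatorname{Perf}$.
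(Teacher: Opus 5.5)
Your proposal is correct and follows the paper's proof essentially step for step. Theorem~\ref{thm:Weyl-reduction} gives $\mathcal W\simeq K$ and hence $HH_0(\mathcal W)\cong K$. The identities $[\iota_i,x]=x$ for off-diagonal chords, $\rho_{12}=[\rho_1,\rho_2]$, $\rho_{23}=[\rho_2,\rho_3]$, and commutativity of $A_K/J$ give $[A_K,A_K]=J$ and $HH_0(A_K)\cong K^2$. Morita invariance of Hochschild homology together with the quasi-isomorphism $B_K\simeq A_K$ then finishes the argument.

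Your care about direct-sum totalization and your remark that the naive Weyl trace quotient vanishes go beyond the paper, which simply invokes Keller's invariance theorem. They do not change the argument.
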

\begin{proof}
By Theorem~\ref{thm:Weyl-reduction},
the dg algebra $\mathcal W$ is quasi-isomorphic to $K$, so its
derived Hochschild homology is that of $K$.
For the degree-zero algebra $A_K$,
$HH_0(A_K)=A_K/[A_K,A_K]$.
Every off-diagonal chord is a commutator with an idempotent.
The remaining chord elements are
\[
 ab=[a,b],\qquad bc=[b,c],
\]
because $ba=cb=0$.  Thus the whole chord ideal $J$ lies in the
commutator subspace.  Conversely, the quotient $A_K/J=K\times K$
is commutative, so every commutator lies in $J$.
It follows that $[A_K,A_K]=J$ and
$HH_0(A_K)=K\iota_0\oplus K\iota_1$.

Hochschild homology is invariant under $K$-linear dg derived Morita
equivalence \cite{KellerHH}.  Its displayed dimensions rule out the
claimed equivalence.  Since $B_K$ is quasi-isomorphic to $A_K$, the
same obstruction applies to $B_K$.
\end{proof}

\begin{remark}[Exact scope of the conclusion]
This result computes a boundary observable algebra of the
\emph{original linear BF model in its fixed nonresonant sector}
and obstructs its proposed identification with the specified torus
module algebra.  Adding only contractible BRST pairs, or replacing this
observable algebra by a quasi-isomorphic resolution, cannot remove the
obstruction.  Corollary~\ref{cor:formal-vacuum} also treats regular
unital formal interactions with the same leading complex.
The theorem does not exclude other interacting theories,
resonant holonomy, an all-character family, new boundary defects,
stopped or wrapped brane categories, or a different quantization sector.
It does not assert a theorem about changing Floer coefficients:
the torus algebra in this comparison is the explicitly stated
characteristic-zero algebraic lift.
\end{remark}

\subsection{Reduced sewing}
The reduced polarized state space in this sector is a line.
Any BFV-compatible sewing pairing that descends to these reduced spaces
is therefore a bilinear map $K\otimes_K K\to K$, determined by a scalar.
With compatible vacuum normalizations its action is ordinary scalar
multiplication; determinants, torsion factors and density conventions
may change that scalar but cannot create additional reduced boundary
states.  The standard Gaussian cellular sewing construction and
cohomological reduction are described in
\cite[Sections~7.3--7.4]{CMR}.

When the bulk residual fields are also absent, the modified quantum
master equation makes a bulk boundary state a BFV cocycle.
Theorem~\ref{thm:quantum-vacuum} then reduces that state to a scalar
multiple of the vacuum.  Its scalar determinant-line information is
the appropriate location for the Alexander torsion comparison already
proved in this manuscript.  The quantum calculation does not promote
that scalar to a bordered Floer module with chord operations.
Consequently a geometric equivalence cannot be obtained
through the computed polynomial Weyl sector by gauge fixing or dg
resolution alone.  A proposed enriched theory must change the actual
boundary category and compute its new algebra and manifold modules
independently.

\section{Discussion}
The relative meridional torsion identifies the Alexander polynomial
within a based cellular BF model.  Exact prime-cyclotomic evaluation,
combined with a genus bound, determines its coefficients through field
traces; the threshold is sharp.  Field norms recover branched-cover
homology orders, and annular gluing recovers connected-sum
multiplicativity.  For positive L-space knots, the established staircase
classification turns this arithmetic reconstruction into a filtered
Floer model.

Scalar torsion does not determine knot Floer homology for general
knots.  At the boundary-algebra level, the degree-zero torus model also
obstructs single-field tensor factorizations.  Retaining both idempotents
makes reduction conservative on perfect modules, but does not preserve
pairings: the chord ideal gives the precise derived defect and its
three-layer filtration.

The dg resolution in Section~\ref{sec:constructive} retains the chord
operations and their relation and overlap homotopies.  It preserves
perfect-module categories and derived pairings, and its representation
differential has a classical one-dimensional BV--BFV realization.
This construction starts from the torus algebra.  It does not derive
that algebra, or the modules associated to manifolds, from a
three-dimensional field theory.

In the fixed nonresonant polynomial sector of
Section~\ref{sec:quantum-boundary}, over characteristic zero with
$\hbar$ inverted, the quantum BFV states and Weyl observables contract
to the coefficient field.  Their Hochschild invariant obstructs a dg
Morita equivalence with the degree-zero torus algebra.  This conclusion
is specific to the stated sector and coefficient conventions.  A full
geometric BV--BFV/Floer comparison would require an independently
constructed boundary category, compatible manifold modules, and sewing
maps respecting gradings and peripheral parametrizations.

\appendix
\section{Explicit torsion matrices for the trefoil}
\label{app:trefoil}
We verify the absolute/relative normalization by a direct Fox
calculation.  For the right-handed trefoil take the presentation
$\pi_1(X)=\langle x,y\mid xyx=yxy\rangle$ and relator
$r=xyxy^{-1}x^{-1}y^{-1}$.  Both generators map to $t$ under
abelianization.  Fox's rule
$\partial(uv)/\partial x=
\partial u/\partial x+u\,\partial v/\partial x$
and $\partial x^{-1}/\partial x=-x^{-1}$ gives
\begin{equation}\label{eq:trefoilfox}
 \phi\Bigl(\frac{\partial r}{\partial x}\Bigr)
 =1+t^2-t=t^2-t+1,\qquad
 \phi\Bigl(\frac{\partial r}{\partial y}\Bigr)
 =-(t^2-t+1).
\end{equation}
The cellular chain complex (with the preferred basis of generators)
is
\[
 0\longrightarrow F\xrightarrow{\binom{q}{-q}}F^2
 \xrightarrow{(t-1\quad t-1)}F\longrightarrow0,
 \qquad q=t^2-t+1.
\]
Its torsion is $q/(t-1)$, up to a conventional sign and monomial.
The meridional annulus contributes $1/(t-1)$, leaving the relative
torsion class $q$.  Multiplying by $t^{-1}$ yields the unique
symmetric normalized representative
$\mathscr B_{3_1}(t)=t^{-1}-1+t$.
There are no singular prime-order specializations, since
$q(\zeta_p)\ne0$ for every prime $p$.

\section{Acyclic Gaussian determinants and the scope of BV pushforward}
\label{app:gaussian}
For completeness, let $(C^\bullet,d)$ be a finite-dimensional acyclic
complex with fixed bases and a decomposition
\[
 C^k=B^k\oplus L^k,\qquad B^k=d(C^{k-1}),\quad
 d:L^k\overset{\sim}{\longrightarrow}B^{k+1}.
\]
In a basis adapted to the direct sums, the torsion is an alternating
product of determinants of the isomorphisms
$d|_{L^k}:L^k\to B^{k+1}$, together with the determinant of the
basis changes.  In the Milnor convention used in this paper,
this product is fixed by the requirement that the two-term
circle complex \eqref{eq:circle-complex} has torsion $(z-1)^{-1}$.
The mixed relative cochain complex carries an analogous splitting,
and its determinant-line element is independent of the splitting
as an algebraic torsion, though it still depends on cellular lifts.

Finite-dimensional Abelian BV pushforward is a Gaussian/Berezin
integration after choosing a Lagrangian gauge-fixing subspace.
Its determinant factors are precisely such alternating products;
an actual state also involves half-densities and possibly
universal phases.  For a non-unimodular $L_z$, no invariant complex
number is assigned without controlling Euler structures or bases.
Our \emph{BF character} is therefore explicitly the normalized
Milnor-torsion branch of the finite-dimensional BF determinant-line
output.  None of the present proofs requires selecting a functional
integral measure or claiming an equality of unnormalized path
integrals.

\section*{Use of generative AI}
The manuscript was prepared in \LaTeX{} and compiled with pdf-\LaTeX{}.
Generative AI tools assisted with the formulation and editing of the prose,
LaTeX typesetting, exploratory calculations, the development of mathematical
arguments, and the identification of relevant references. The author takes
responsibility for the content of the manuscript.

\end{document}